\def\arXiv#1{\href{http://arxiv.org/abs/#1}{arXiv:#1}}
\def\?[#1]{\textbf{[#1]}\marginpar{\Large{\textbf{??}}}}
\def\smallsection#1{\smallskip\noindent\textbf{#1}.}
\let\epsilon=\varepsilon % sorry Knuth
\newcommand{\RR}{{\mathbb R}}
\newcommand{\NN}{{\mathbb N}}
\newcommand{\CC}{{\mathbb C}}
\newcommand{\TT}{{\mathbb T}}
\newcommand{\ZZ}{{\mathbb Z}}
\newcommand{\ML}{\mathcal{L}}
\newcommand{\ach}{{\operatorname{ac}}}
\newcommand{\ch}{{\operatorname{c}}}
\newcommand{\msH}{{\mathscr H}}
\newcommand{\msV}{{\mathscr V}}
\newcommand{\msU}{{\mathscr U}}
\newcommand{\msG}{{\mathscr G}}
\newcommand{\thot}{{\frac{\theta}{2}}}
\newcommand{\on}{{0,n}}
\newcommand{\unt}{{u_n^\theta}}
\newcommand{\unnt}{{u_n^{-\theta}}}
\newcommand{\unnnt}{{u_{-n}^{-\theta}}}
\newcommand{\Ent}{{E_\on^\theta}}
\newcommand{\ent}{{e_\on^\theta}}
\newcommand{\ennt}{{e_\on^{-\theta}}}
\newcommand{\w}{{x_2+h^{\frac{1}{2}}x_1,\xi_2 - h^{\frac{1}{2}}D_{x_1}}}
\newcommand{\upzero}{{(0)}}
\newcommand{\upone}{{(1)}}
\newcommand{\uptwo}{{(2)}}
\newcommand{\Ep}{E_{n,+}}
\newcommand{\Em}{E_{n,-}}
\newcommand{\Epm}{E_{n,\pm}}
\newcommand{\bcR}{\bar{\indic}_R}
\newcommand{\bcRW}{\bar{\indic}_R^W}
\newcommand{\cRW}{\indic_R^W}
\newcommand{\tcRW}{\tilde{\indic}_R^W} 
\newcommand{\tcRw}{\tilde{\indic}_R^w}
\newcommand{\LH}{\mathcal{L}(\msH_1;\msH_2)} % Operator from H_1 to H_2
\newcommand{\Ltxo}{L^2(\RR_{x_1};\CC^4)} %L2(x1)
\newcommand{\Ltx}{L^2(\RR_{x_2};\Ltxo)} %L2(x_2;L2(x_1))
\newcommand{\Ltxt}{L^2(\RR_{x_2};\CC^2)} % L2(x_2;C^4)
\newcommand{\Bkp}{B^{-k'}_{x_1}}
\newcommand{\Lt}{L^2_{x_1}}
\newcommand{\bfk}{{\mathbf k}}
\newcommand{\msW}{\mathscr{W}}
\newtheorem{theo}{Theorem}
\newtheorem{prop}{Proposition}[section]
\newtheorem{lemm}[prop]{Lemma}
\newtheorem{rem}{Remark}
\newtheorem{claim}{Claim}
\numberwithin{equation}{section}
\DeclareMathOperator{\Ave}{Ave}
\DeclareMathOperator{\diag}{diag}
\DeclareMathOperator{\Spec}{Spec}
\DeclareMathOperator{\HS}{HS}
\DeclareMathOperator{\Weyl}{W}
\let\Im=\Imag
\let\Re=\Real
\DeclareMathOperator{\sgn}{sgn}
\DeclareMathOperator{\supp}{supp}
\DeclareMathOperator{\Tr}{Tr}
\def\indic{\operatorname{1\hskip-2.75pt\relax l}}
\newcommand\reallywidehat[1]{\arraycolsep=0pt\relax%
\begin{array}{c}
\stretchto{
  \scaleto{
    \scalerel*[\widthof{\ensuremath{#1}}]{\kern-.5pt\bigwedge\kern-.5pt}
    {\rule[-\textheight/2]{1ex}{\textheight}} %WIDTH-LIMITED BIG WEDGE
  }{\textheight} % 
}{0.5ex}\\           % THIS SQUEEZES THE WEDGE TO 0.5ex HEIGHT
#1\\                 % THIS STACKS THE WEDGE ATOP THE ARGUMENT
\rule{-1ex}{0ex}
\end{array}
}
\title[TBG in magnetic fields]{Magnetic response properties of twisted bilayer graphene}
\author{Simon Becker}
\address[Simon Becker]{ETH Zurich, Zurich, CH.}
\email{sion.becker@math.ethz.ch}
\author{Jihoi Kim}
\address[Jihoi Kim]{University of Cambridge, United Kingdom.}
\email{rk614@cam.ac.uk}
\author{Xiaowen Zhu}
\address[Xiaowen Zhu]{University of Washington, Seattle, USA.} 
\email{xiaowenz@uw.edu}
\begin{document}

\begin{abstract}
In this article, we analyse the Bistritzer--MacDonald (BM) model (also known as the continuum model) of twisted bilayer graphene (TBG) with an additional external magnetic field. We provide an explicit semiclassical asymptotic expansion of the density of states (DOS) in the limit of strong magnetic fields. The explicit expansion of the DOS enables us to study magnetic response properties such as magnetic oscillations which includes Shubnikov-de Haas and de Haas-van Alphen oscillations as well as the integer quantum Hall effect. In particular, we elucidate the role played by different types of interlayer tunnelings ($AA^{\prime}$/$BB^{\prime}$ vs. $AB^{\prime}$/$BA^{\prime}$) in the study of the DOS, and magnetic properties.
\end{abstract}

\maketitle

%%%%%%%%%%%%%%%%%%%%%%%%%%%%%%%%%%%%%%%%%%%%%%%%%%%%%%%%%%%%%%%%%%%%%%%%%%%%%%%%
%                                 INTRODUCTION                                 %
%%%%%%%%%%%%%%%%%%%%%%%%%%%%%%%%%%%%%%%%%%%%%%%%%%%%%%%%%%%%%%%%%%%%%%%%%%%%%%%%
%\addtocounter{section}{1}
\section{Introduction}
\label{s:intr}
It is arguably one of the most exciting recent discoveries in condensed matter physics that by twisting two sheets of graphene at certain \emph{magic angles}, the material exhibits a superconducting phase \cite{C18}. The experimental discoveries were motivated by earlier theoretical work \cite{LPN07,BM11} which introduced the continuum model of twisted bilayer graphene (TBG). From this model they predicted the first magic angle by observing the appearance of a relatively flat band of the Hamiltonian at a small twisting angle. To discuss our study of TBG in magnetic fields, we first briefly introduce the BM model (see \S \ref{subsec: BM_model}, \cite{BM11}):
% Honeycomb lattices contain two vertices in their fundamental domain.

The BM model is an effective 4$\times$4 matrix-valued Hamiltonian $ \begin{pmatrix} {H}^{\theta}_{D} &   {T}^{\theta}(x) \\  ({T}^{\theta}(x))^* &  H^{-\theta}_{D} \end{pmatrix}$, $x \in \RR^2$, composed of two twisted-Dirac-operators ${H}_D^\theta, {H}_D^{-\theta}$ representing two isolated graphene sheets, according to the \emph{Wallace model} \cite{W47} respectively, and a tunneling potential term $T^\theta( x) =\begin{pmatrix}
  \alpha_0 V(\tfrac{x}{\lambda_\theta}) & \alpha_1 \overline{U}(- \tfrac{x}{\lambda_\theta}) \\ \alpha_1 U( \tfrac{x}{\lambda_\theta}) & \alpha_0 V(\tfrac{x}{\lambda_\theta}) 
\end{pmatrix}$ where the diagonal potentials and off-diagonal potentials represent two different types of interlayer tunneling potentials. In fact, when two layers of graphene are twisted at an angle $\theta$, a macroscopic honeycomb structure of scale $\lambda_\theta$, called the moir\'e pattern, is formed (by a purely geometrical superposition of two sheets of graphene; see Fig.\ref{fig: Moire}). Then the two different types of interlayer tunnelings (see Fig.\ref{fig: Moire}) are respectively:
\begin{enumerate}
  \item the chiral tunnelings $U( x/\lambda_\theta)$ and $\overline{U}(- x/\lambda_\theta)$ localized near the vertices of each moir\'e hexagon, with tunneling strength $ \alpha_1$ and a stacking similar to $AB^{\prime}$ and $BA^{\prime}$-stacking;
  \item the anti-chiral tunneling $V( x/\lambda_\theta)$, localized near the centers of each moir\'e hexagon, with a tunneling strength $\alpha_0$ and a stacking similar to $AA^{\prime}$/$BB^{\prime}$-stacking.
\end{enumerate}  
Here $A$ and $B$ label the equivalence classes of vertices on the honeycomb lattice and atoms on the lower lattice are indicated by a prime, cf. Figure \ref{fig: Moire}. 
We refer to the BM model as the \emph{chiral} or \emph{anti-chiral} model in the limit of purely chiral (${\alpha}_0 = 0$) or anti-chiral (${\alpha}_1 = 0$) tunneling interaction, respectively. 
\begin{figure}[t!]
  \includegraphics[height=6cm]{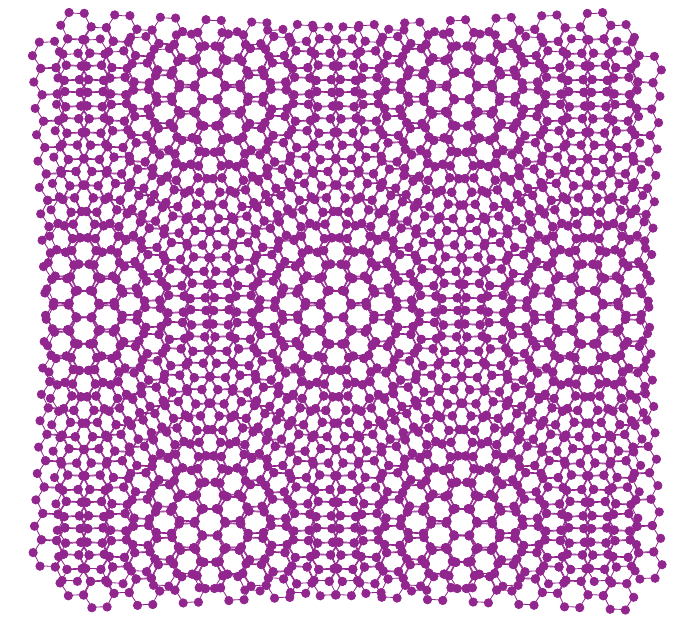}
  \includegraphics[height=6cm]{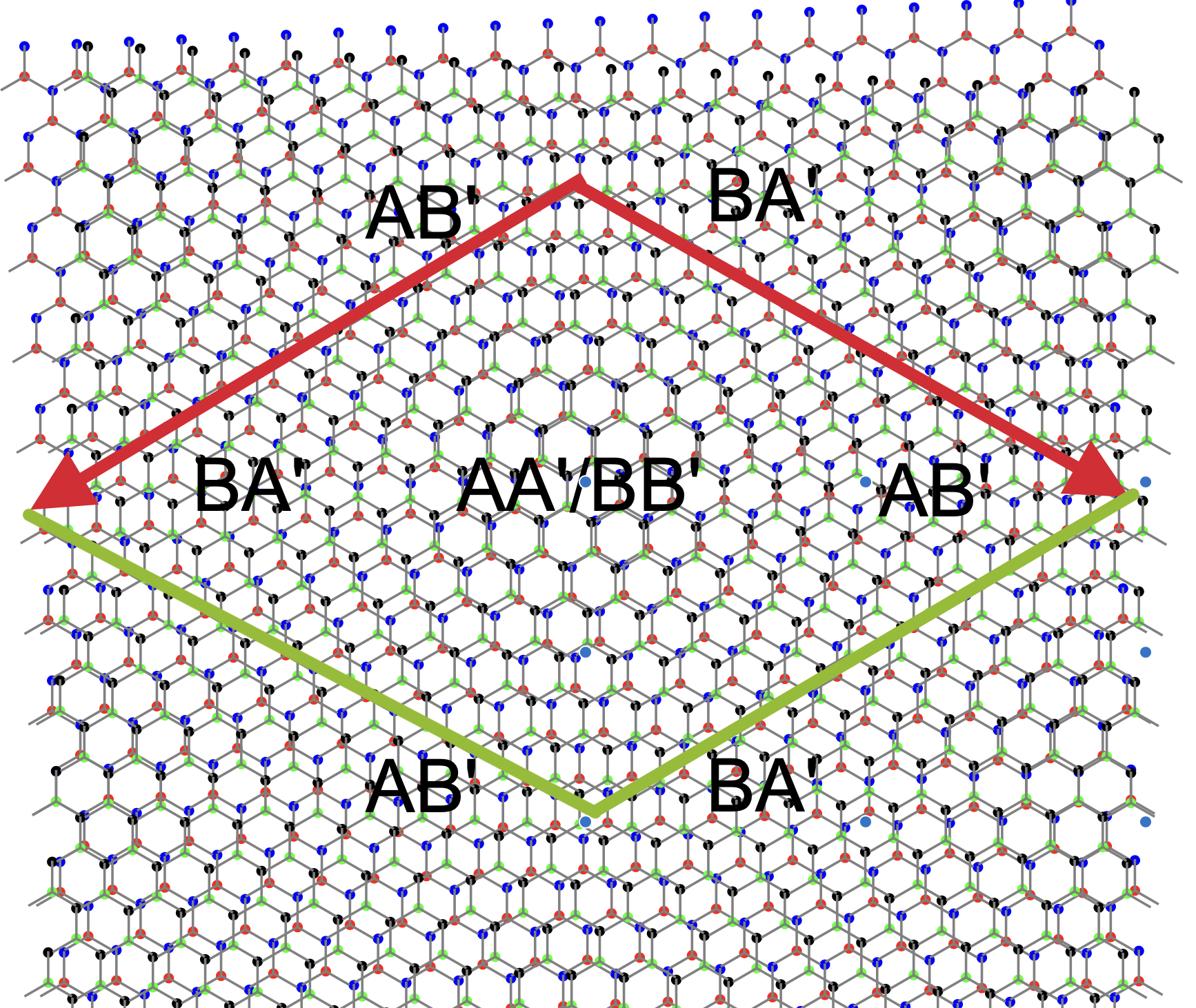}
\caption{On the left: Moir\'e pattern at twisting angle $\theta = 5^{\circ}$ with single moir\'e hexagon on the right, with (A={\color{red}red}, B={\color{blue}blue}) and (A'={\color{green}green}, B'={\color{black}black}) denote vertices of two sheets of graphene respectively.}
        \label{fig: Moire}
\end{figure}

% {\xz combine to next paragraph: It was both experimentally (cite) observed that due to strong electronic repulsion, the anti-chiral tunneling is suppressed at small twisting angles while the chiral tunneling is enhanced, and numerically {\xz theoretically?? or what?} argued \cite{magic,BEWZ20a,BEWZ20b} that chiral model exhibits perfectly flat bands at the magic angles comparing to the relatively flat bands for BM model.}

While in the full BM model, the bands close to zero appear only approximately flat, it has been shown in \cite{TKV19,BEWZ20a,BEWZ20b,N21,NL22} that the chiral model exhibits a perfectly flat band at the magic angle \cite{TKV19,BEWZ20a} while the anti-chiral model does not \cite{BEWZ20b,LW21}. In our study of the magnetic response, we find that chiral and anti-chiral tunnelings exhibit intrinsically different features for the asymptotic expansion of the DOS in strong magnetic fields (see \S\ref{sec: DOS_Semiclassical}) which leads to different physical phenomena (see \S\ref{sec:applications}). 

In \S\ref{sec: DOS_Semiclassical}, we derive the explicit asymptotic expansion of the DOS in strong magnetic fields for both models. We find that the magnetic anti-chiral model has a similar behavior as a magnetic Schr\"odinger operator, where Landau levels in general split under perturbations of electric potential, while the magnetic chiral model has stable Landau levels especially at energy zero. Thus, chiral tunneling enhances the peaks of the DOS at Landau levels which leads to an enhancement of physical phenomena including magnetic oscillations and the quantum hall effect, which we discuss in \S\ref{sec:applications}, while anti-chiral tunneling weakens them. 

Our study of strong magnetic fields originates naturally from the interest in analyzing small twisting angles. In fact, as the twisting angle $\theta$ decreases to zero, the scale of the moir\'e hexagon $\lambda_\theta \sim (\sin \theta)^{-1}$ increases significantly. Thus, by rescaling coordinates the study of a fixed magnetic field at small twisting angle can be related to a fixed twisting angle in a strong magnetic field, see also \cite{D21,HA21} for further physical motivation. We denote the two scaling in the following as adiabatic (see \S\ref{Subsec: adiabatic_scaling}) and semiclassical (see \S\ref{Subsec: semiclassical_scaling}) scalings, respectively.

% Furthermore, the study of chiral model is also natural since anti-chiral potential is experimentally observed to be suppressed. {\xz citation}

% {\xz The study of pure chiral model and the asymptotics in strong magnetic field are also practical: On the one hand, it was experimentally ({\xz cite}) observed that due to strong electronic repulsion, the anti-chiral tunneling is suppressed at small twisting angles while the chiral tunneling is enhanced. } On the other hand, {\xz Furthermore, } our choice of analyzing strong magnetic fields is very natural from the physics perspective, as the magnetic flux enclosed by one fundamental cell of the twisted lattice becomes extremely large as the twisting angle goes to zero. After a simple rescaling of parameters, this corresponds to the limit of strong magnetic fields for a honeycomb lattice of fixed size which is the semiclassical setting studied in this article. 

In particular, this means we provide the theoretical background for the study of the dependence of Landau levels on small twisting angle that have been studied for a simplified model in \cite{CHK} and numerically in \cite{MGJ20} for a tight-binding model. Furthermore, combining with the study of chiral and anti-chiral tunnelings, we put the substantially pronounced peaks of the DOS for small twisting angles at the Landau levels in \cite[Fig.\@ $2$,$3$]{MGJ20} on a rigorous footing. Furthermore, our results can also be used to understand the impact of strong pseudo-magnetic fields generated by physical strain. 

% \smallsection{Motiv 3} Our choice of semiclassical limit analyzing strong magnetic fields is very natural from the physics perspective, as the magnetic flux enclosed by one fundamental cell of the twisted lattice becomes extremely large as the twisting angle goes to zero. After a simple rescaling of parameters, this corresponds to the limit of strong magnetic fields for a honeycomb lattice of fixed size which is the semiclassical setting studied in this article. (concrete ) In particular, this means we provide the theoretical background of the study of dependence of Landau levels on small twisting angle that have been studied for a simplified model in \cite{CHK} and numerically in \cite{MGJ20} for a tight-binding model. 

% {\xz  ?}

% We show that many of the effects identified in \cite{MGJ20,CHK} can be derived from an asymptotic expansion of the DOS in the BM model with strong magnetic field (Theorem 3,4).

Finally, we summarize all our main results in an outline of the paper below:
\begin{itemize}
\item In Section \ref{sec:introduce_the_model}, we introduce the BM model with external magnetic field for TBG.
\item In Section \ref{sec: DOS}, we discuss general properties of the DOS.
\item In Section \ref{sec: DOS_Semiclassical}, we derive asymptotic formulae for the DOS:
\begin{itemize}
    \item of the chiral model: Theorem \ref{thm: chiral trace};
    \item of the anti-chiral model: Theorem \ref{thm: anti-chiral trace};
    \item is termwise-differentiable with respect to $B$: Prop \ref{prop: differentiability}.
\end{itemize}
\item In Section \ref{sec:applications}, we discuss physical applications of our semiclassical formulae.
\item The article also contains two technical appendices to which some of the computations and auxiliary results for the derivation of the DOS are outsourced.
\end{itemize}

 Our approach to analyze physical response properties rests on a thorough asymptotic analysis of the DOS. 
  Here, our approach is inspired by ideas of Helffer and Sj\"ostrand \cite{HS89} who studied the perturbation theory of periodic Schr\"odinger operators in strong magnetic fields and Wang \cite{W95}, who studied fine spectral asymptotics for random Schr\"odinger operators in strong magnetic fields. While Helffer and Sj\"ostrand  stopped at studying the spectral perturbation for strong magnetic fields, the so-called Grushin problem, we obtain a full asymptotic expansion of the DOS. This has also been obtained by Helffer and Sj\"ostrand for weak magnetic fields \cite{HS90} where the analysis relied on the semiclassical eigenvalue distribution close to a potential well. In our case, there is no natural well-structure and the asymptotic expansion relies on an asymptotic expansion of the parametrix with a splitting argument to overcome non-elliptic regions close to the real axis. Unlike in previous works by Helffer and Sj\"ostrand \cite{HS90} and an article on single-layer graphene by the first author and Zworski \cite{BZ19}, we resolve the issue of differentiability of the asymptotic expansion with respect to the semiclassical parameter by relating the asymptotic expansion with the one of the differentiated symbol, here. This expansion is needed for the rigorous analysis of the DOS when differentiated with respect to the magnetic field which is relevant for both the de-Haas van Alphen as well as the quantum Hall effect.

\smallsection{Acknowledgements}
We are very grateful Svetlana Jitomirskaya for initiating this collaboration and to Katya Krupchyk for valuable comments and references on the semiclassical expansion studied in this manuscript. This research was partially supported by Simons 681675, NSF DMS-2052899 and DMS-2155211.

\section{Introduction of magnetic BM model} 
\label{sec:introduce_the_model}
We start by introducing relevant notation.

 \smallsection{Notation}
%\medskip\noindent\textbf{Notation:}
\emph{Throughout this article we identify $\RR^2\simeq \CC$ by $x = (x_1,x_2) \simeq z=x_1+ix_2.$ We denote by $L$ the Lebesgue measure on $\RR^2\sim \CC$. For functions of complex variables $f(z,\bar z)$ we often just write $f(z).$ If there exists a constant $C_{\alpha}$ such that $ \| f \|_H \leq C_\alpha g $, we write $ f = \mathcal O_\alpha ( g )_H $. In particular,
$ f = \mathcal O ( h^\infty )_H  $ means that for any $ N $ there exists
$ C_N $ such that $ \|f \|_H \leq C_N h^N $. 
We also use the short notations $\langle x\rangle:=\sqrt{1+|x|^2}$, $B_r(x) = \{y: |y - x| \leq r\}$. }

\emph{We introduce the symbol class 
$ S(\mathbb R^{2n};\mathscr H):=\Big\{ p \in C^{\infty}(\mathbb R^{2n} \times \RR_{>0}; \mathscr H): \exists h_0, \text{ for all } \gamma \in \NN^2, \exists c_{\gamma}>0 \quad \text{s.t. for all } (x,\xi) \in \mathbb R^{2n} \text{ for all } h \in (0,h_0): \vert D_{(x,\xi)}^{\gamma} p(x,\xi,h) \vert \le c_{\gamma} \Big\}.$ In addition, let $S_\delta^k(\RR_{x,\xi}^2)$ denote the class of symbols $a \in C^{\infty}(\mathbb R^{2n} \times \RR_{>0})$ such that} 
\[
    |\partial_{x}^\alpha\partial_{\xi}^\beta a(x,\xi;h)| \leq C_{\alpha,\beta} h^{-k-\delta(\alpha + \beta )}, \quad \text{ for all } \alpha, \beta>0.
\]
 \emph{We denote standard partial derivatives in direction $x_i$ by $\partial_{x_i}$ and accordingly $D_{x_i}:=-i \partial_{x_i}.$ The principal symbol of a semiclassical operator $a(x,hD_{x})$ is denoted by $\sigma_0(a(x,hD_x)).$ We say a symbol $a$ has an asymptotic expansion in $ S_\delta^k$, $a\sim \sum_{j=0}^\infty a_j$, if $a\in S_\delta^k$ and there is a sequence of $a_j\in S_\delta^{k_j}$ s.t. $k_j\to -\infty$ as $j\to \infty$ and $a - \sum_{j= 0}^N a_j \in S_{\delta}^{k_{N+1}}$. When $k$ or $\delta = 0$, we omit the respective sub and superscript.  The spectrum of a linear operator $T$ is denoted by $\Spec(T).$ We also introduce rotated Pauli matrices $\sigma_k^\theta = e^{-i\frac{\theta}{4}\sigma_3} \sigma_k e^{i\frac{\theta}{4}\sigma_3},$ for $k = 1,2.$}

\subsection{Moir\'e lattices and TBG}
\label{subsec: BM_model}
We recall from the introduction that by twisting two honeycomb lattices with respect to each other, the emerging moir\'e honeycomb pattern exhibits different scales $\lambda_\theta $ at different twisting angles $\theta$. Thus it is easier to characterize such macroscopic honeycomb structures using a ``unit-size honeycomb lattice'' of side length $\frac{4\pi}{\sqrt{3}}$:

Let $\omega = \exp(\frac{2\pi i}{3})$, $\zeta_1 = 4\pi i \omega$, $\zeta_2 = 4\pi i \omega^2$. The ``unit-size honeycomb lattice'' is invariant under translations along a triangular lattice $\Gamma = \zeta_1\ZZ \oplus \zeta_2\ZZ$. We denote its unit cell, dual lattice, and the Brillouin zone of the dual lattice by $E = \CC/\Gamma$, $\Gamma^* = \eta_1\ZZ \oplus \eta_2\ZZ$, and $E^* = \CC/\Gamma^*$, where $\eta_1 = \frac{\omega^2}{\sqrt{3}}$ and $\eta_2 = -\frac{\omega}{\sqrt{3}}$. 

\subsection{Chiral and anti-chiral tunnelings} 
The chiral and anti-chiral tunneling potentials, $V$ and $U$, are smooth ``unit-size'' periodic functions (cf. \cite{BM11}) satisfying for $\mathbf a_j = \frac{4}{3} \pi i \omega^{j}$ with $j = 0,1,2$ the following symmetries
\begin{equation}
  \label{eq:symmU}
    \begin{split}
      &V ( z + \mathbf a_j ) = \bar \omega V ( z ) ,\ \ V ( \omega z ) = V ( z ) , \ \  \overline{ V ( z ) } = V ( - z ) , \ \ { V ( \bar z ) } = V (- z ) , \\
      &U ( z + \mathbf a_j ) = \bar \omega U ( z ) ,  \ \ U ( \omega z ) = \omega U ( z ) , \ \  \overline{ U ( \bar z ) } = U ( z ).
    \end{split}
\end{equation}
In particular, since $\zeta_1 = 3\mathbf a_1$, $\zeta_2 = 3\mathbf a_2$, we have $V(z+\zeta_j) = V(z)$ and $U(z + \zeta_j) = U(z)$ for $j = 1,2$. Thus $V(z)$, $U(z)$, $U_-(z):=U(-z)$ are periodic with respect to $\Gamma$. The tunneling potentials on the physical moir\'e scale are then $V({z}/\lambda_\theta), U({z}/\lambda_\theta), \overline{U_-}({z}/\lambda_\theta)$.

% \subsection{CM of TBG}
% Dirac cone + tunneling potential.
\begin{figure}[t!]
    \begin{subfigure}[b]{0.32\textwidth}
        \centering
        \includegraphics[width=7cm, height=1.8in]{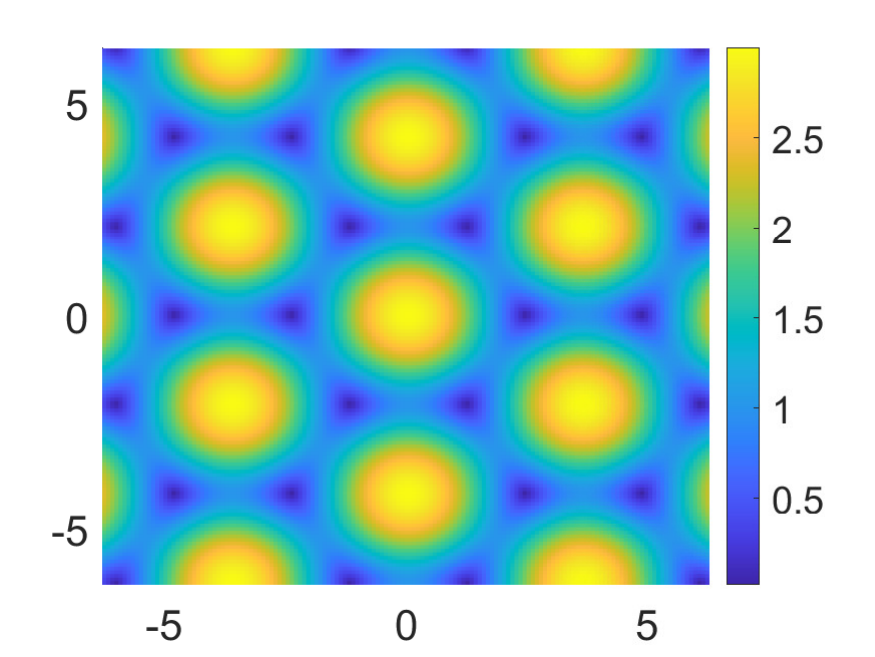}
        \caption{$\vert V \vert^2$ for $\mathrm{AA^{\prime}}/\mathrm{BB^{\prime}}$-coupling. }
       \label{fig:hcombAA}
    \end{subfigure}
        \begin{subfigure}[b]{0.32\textwidth}
        \centering      \includegraphics[width=6.5cm,height=1.8in]{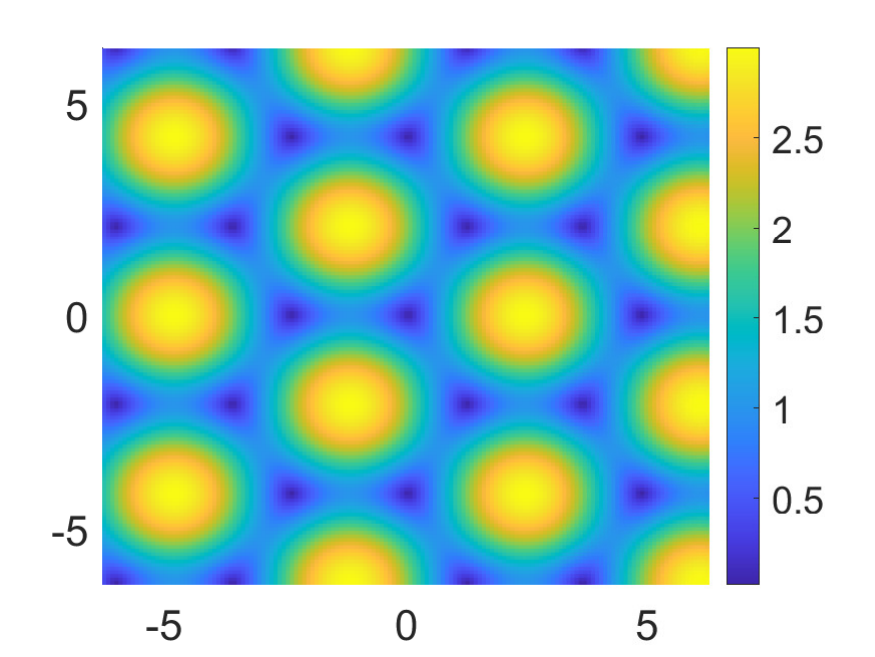}
        \caption{$\vert U \vert^2$ for $\mathrm{AB'}$-coupling.}
        \label{fig:fcellBA}
    \end{subfigure}
     \begin{subfigure}[b]{0.32\textwidth}
        \centering
        \includegraphics[width=6 cm,height=1.8in]{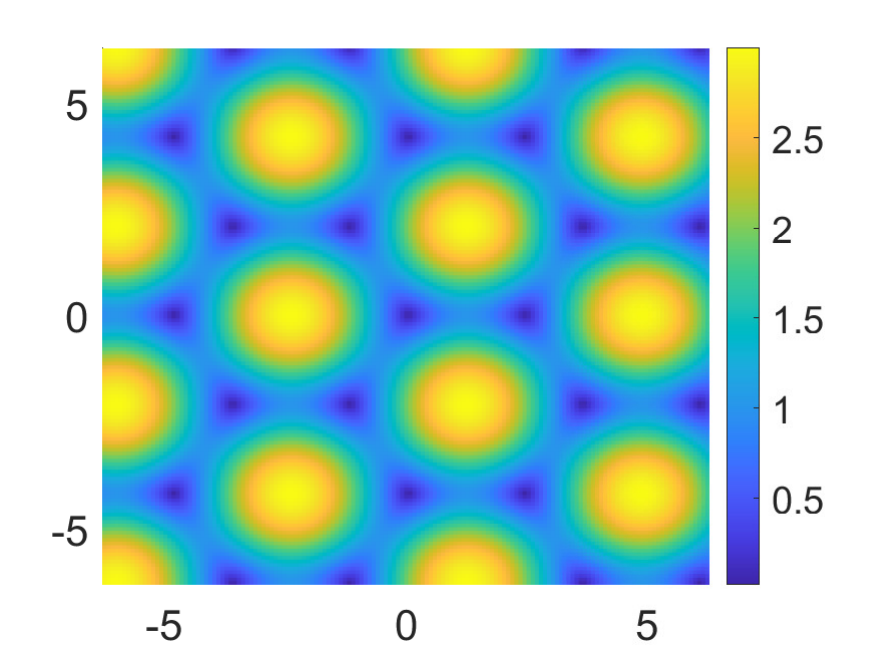}
        \caption{$\vert U_- \vert^2$ for $\mathrm{BA'}$-coupling.}
        \label{fig:fcellAB}
    \end{subfigure}%
        \caption{Modulus of tunneling potentials for various coupling types.}\label{fig:coupling}
\end{figure}

\subsection{Magnetic BM model with Adiabatic scaling} 
\label{Subsec: adiabatic_scaling}
To introduce the BM model with magnetic field we start with the physical or adiabatic scaling. Since we will immediately change to a semiclassical scaling, we denote all objects with a "$\sim$" in this paragraph to distinguish the two notations. 
Let $\tilde{A}(\tilde z) = (\tilde A_1(\tilde z), \tilde A_2(\tilde z), 0) \in C^{\infty}(\CC;\RR^3)$ be the magnetic vector potential of a magnetic field perpendicular to the TBG. The tunneling potentials, $U$ and $V$, defined on the ``unit-size honeycomb lattice'' are then rescaled to the physical moir\'e-size by rescaling coordinates by $\lambda_\theta$. Thus the magnetic BM model is $\tilde{\mathscr H}^{\theta}: D(\tilde{\msH}^\theta) \subset L^2(\CC;\CC^4) \rightarrow L^2(\CC;\CC^4)$ 
\begin{equation}
\begin{split}
\label{eq:adiabatic}
\tilde{\mathscr H}^{\theta}&:= \tilde{\msH}_{0}^\theta + \tilde{\msV} := \begin{pmatrix} \tilde{H}^{\theta}_{D}&  0 \\  0 &  \tilde{H}^{-\theta}_{D} \end{pmatrix}  + \begin{pmatrix}
  0 & \tilde{T}^{\theta}\\ (\tilde{T}^{\theta})^* & 0
\end{pmatrix}
\end{split}
\end{equation}
with $\tilde{H}^{\theta}_{D}=\sum\limits_{i=1}^2 \sigma_i^{\theta}(D_{\tilde x_i}-\tilde{A}_{i}(\tilde z))$ and 
$\tilde{T}^{\theta}(\tilde{z}) = \begin{pmatrix} \tilde \alpha_{0} V(\tilde z/{\lambda_\theta}) &  \tilde \alpha_{1} \overline{U_-}(\tilde z/\lambda_\theta) \\   \tilde \alpha_{1} U(\tilde z/\lambda_\theta) &  \tilde \alpha_{0}  V(\tilde z/\lambda_\theta) \end{pmatrix}$, where $\lambda_\theta$, $U$ and $V$ are given above and $\tilde{\alpha}_i$ represent the tunneling strength, $i = 1,2$.

\subsection{Magnetic BM model with Semiclassical Scaling}
\label{Subsec: semiclassical_scaling}
We shall now rescale the Hamiltonian in the previous paragraph to ``unit-size'' and multiply the Hamiltonian by $\lambda_\theta$ to work in another more convenient scaling called the \emph{semiclassical scaling}: Let $z = \tilde z/\lambda_\theta$, $\alpha_i = \lambda_\theta\tilde\alpha_i$, $A_i(z) = \lambda_\theta \tilde A_i(\lambda_\theta z)$ (overall represented by a unitary operator $U$), we consider 
\begin{equation}
  \label{eq:contmodel}
  \msH^\theta(z) := \lambda_\theta (U\tilde\msH^{\theta} U^{-1})(z) = \begin{pmatrix}
    H_D^\theta & 0 \\ 0 & H_D^{-\theta}
  \end{pmatrix} + \begin{pmatrix}
    0 & T(z) \\ T(z)^* & 0
  \end{pmatrix} =: \msH^\theta_0 + \msV(z),
\end{equation}
where $H_D^\theta = \sum_{i = 1}^2 \sigma_i^\theta (D_{x_i} - A_i(z))$, or equivalently, $H_D^\theta = e^{-i\frac{\theta}{4}\sigma_3} H_D e^{i\frac{\theta}{4}\sigma_3}$ where
\begin{equation}
  \label{eq: def_of_ab}
  H_D= \begin{pmatrix}
    0 & a \\ a^* & 0
  \end{pmatrix}\ \text{with~}\begin{cases}
    a = 2D_z - \overline{A(z)} \\
    a^* = 2D_{\overline{z}} - A(z)
  \end{cases}, \  T(x) = \begin{pmatrix} \alpha_0 V(z) & \alpha_1 \overline{U_-}(z) \\  \alpha_1 U(z)& \alpha_0  V(z) \end{pmatrix}.
\end{equation}
We denote the \emph{chiral model} by $\msH_{\ch}^\theta = \msH^\theta|_{\alpha_0 = 0}$ and the \emph{anti-chiral model} by $\msH_{\ach}^\theta = \msH^\theta|_{\alpha_1 = 0}$.

\begin{rem}[Why strong magnetic fields?]
  \label{rmk: strong_magnetic_field}
  Our study of strong magnetic fields in rescaled coordinates is motivated by the observation that small twisting angles naturally correspond, for constant magnetic fields, to the limiting regimes $\alpha\gtrsim 1 $ and $B \gg 1$. This provides the basis of our study of large magnetic fields which we coin the \emph{semiclassical scaling}.  
\end{rem}

\begin{figure}
  \includegraphics[height=4cm,width=5cm]{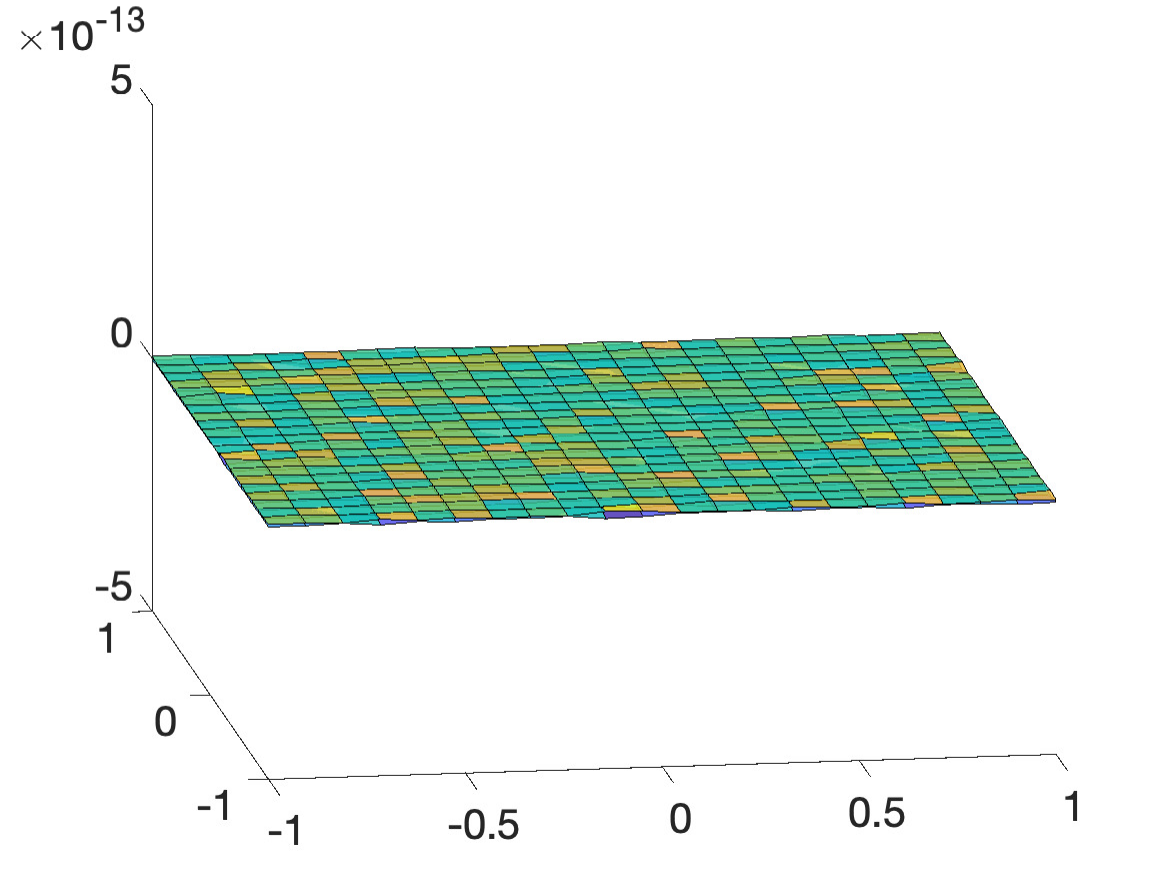}
  \includegraphics[height=4cm,width=5cm]{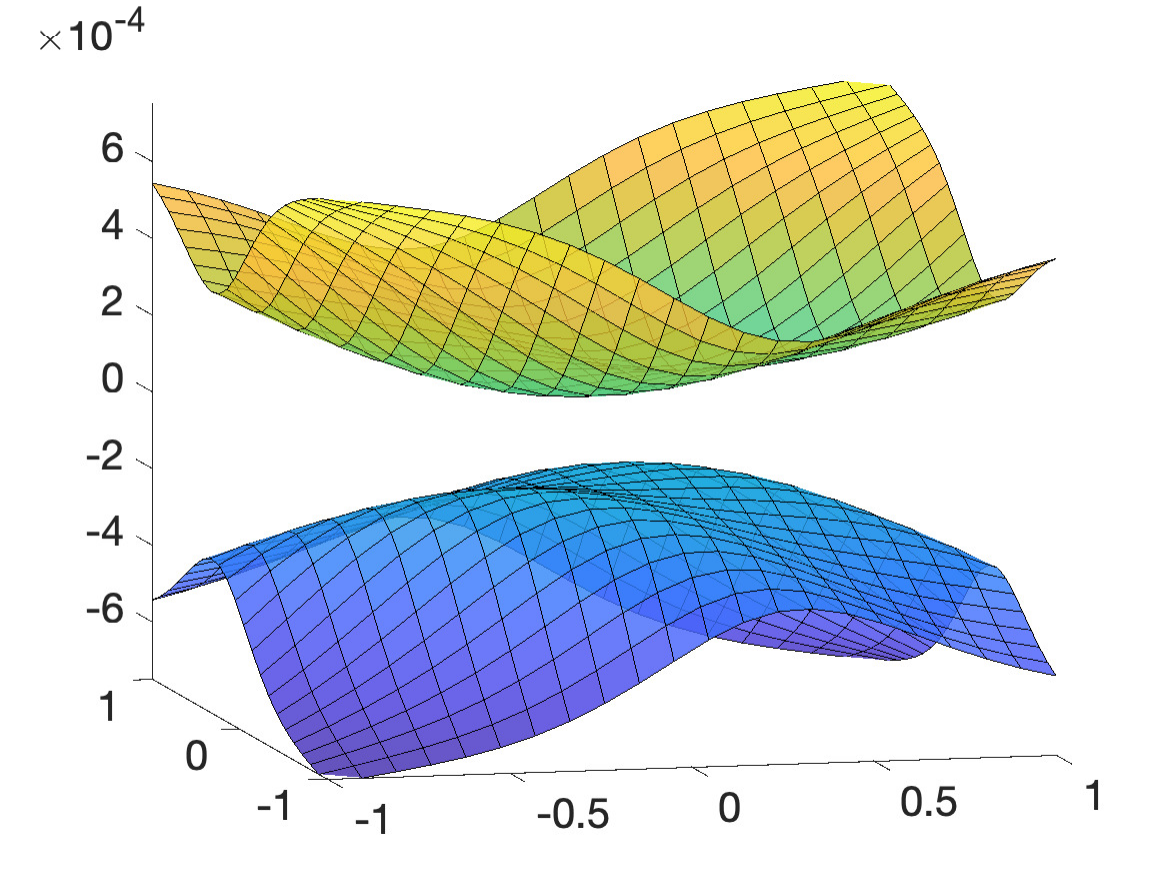}
  \includegraphics[height=4cm,width=5cm]{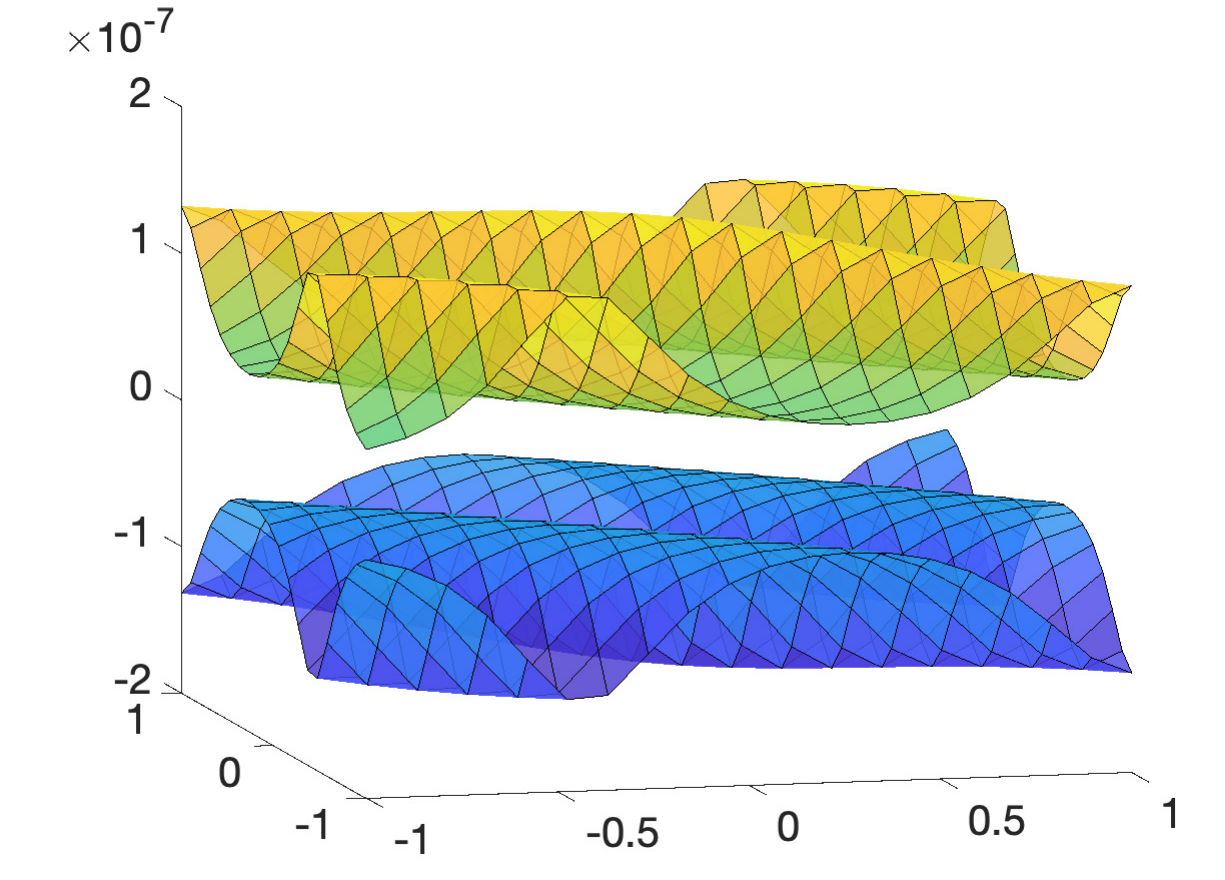}
  \caption{Constant magnetic field: On the left, flat bands for chiral model ($\alpha_1=1$); in the middle $(\theta=0)$ and on the right $(\theta=\pi)$ non-flat bands for anti-chiral model, ($\alpha_0=1$).}
  \end{figure}

\subsection{The chiral and anti-chiral model}
The chiral model is described by the Hamiltonian \eqref{eq:contmodel} for $\alpha_0=0$, such that upon conjugation by $\mathscr U=\operatorname{diag}(e^{i\theta/4},e^{-i\theta\sigma_3/4},e^{i\theta/4})$, $\mathscr H_{\operatorname{c}}=\mathscr U  \mathscr H^{\theta}\mathscr U$ it takes the form
\begin{equation}
\begin{split}
\label{eq:DcB}
\mathscr H_{\operatorname{c}}&=\begin{pmatrix}0 & (\mathcal D_c)^* \\ \mathcal D_c & 0  \end{pmatrix} \text{ with } 
\mathcal D_c= \begin{pmatrix}  2 D_{\bar z}-A_1(z)-iA_2(z)  & \alpha_1 U(z) \\ \alpha_1 U_-(z) &  2 D_{\bar z}-A_1(z)-iA_2(z)  \end{pmatrix}. 
\end{split}
\end{equation}

The anti-chiral model, with $\alpha_1=0$, can be conjugated by a unitary $\mathcal V$, with $\lambda=e^{i \frac{\pi}{4}}$,
to a Hamiltonian
\begin{equation}
\begin{split}
\label{eq:achiral}
\mathscr H^{\theta}_{\ach}&:=\mathcal V \mathscr H^{\theta}\mathcal V=\begin{pmatrix} 0& (\mathcal D^{\theta}_{\ach})^* \\ \mathcal D^{\theta}_{\ach}&0 \end{pmatrix} \text{ with } \mathcal V = \begin{pmatrix} \mathcal V_1 & \mathcal V_2 \\\mathcal V_2 & \mathcal V_1\end{pmatrix} \text{ for }\mathcal V_1 = \begin{pmatrix} i\lambda & 0 \\0 & 0 \end{pmatrix}, \mathcal V_2 = \begin{pmatrix}
  0 & 0 \\ 0 & -\bar{\lambda}
\end{pmatrix}, \\
\mathcal D_{\ach}^{\theta} &= \begin{pmatrix} \alpha_0 V(z) & e^{i \theta/2}(2 D_{\bar z}-(A_1(z)+iA_2(z)))    \\  e^{i \theta/2}(2 D_{z}-(A_1(z)-iA_2(z))) & \alpha_0 \overline{V(z)} \end{pmatrix}. 
\end{split}
\end{equation}
The off-diagonal structure implies that for both the chiral and anti-chiral model with magnetic field, the spectrum is symmetric with respect to zero. In particular, let $U:=(\sigma_3 \otimes \operatorname{id}_{\CC^2}) $ then it follows that $U\mathscr H_cU = -\mathscr H_c $ and $U \mathscr H_{\ach}^{\theta}U =-\mathscr H^{\theta}_{\ach}.$

\section{Density of states}
\label{sec: DOS}
In this section we study general properties of the density of states and study the possible values the density of states takes for the Hamiltonian of TBG.
\subsection{General properties}
In this subsection, we assume that the magnetic potential of the Hamiltonian is of the form $A =A_{\operatorname{per}} + A_{\operatorname{con}}$ where $A_{\operatorname{per}} \in C^{\infty}(E)$ and $A_{\operatorname{con}}$ is the vector potential of a constant magnetic field of strength $B.$
Let $f \in C_c(\RR)$ then we define the \emph{regularized trace}
\begin{equation}
    \label{eq: DOS_definition}
    \tilde\Tr(f(\msH^\theta)) = \lim_{r \rightarrow \infty} \frac{\Tr(\indic_{B_R} f(\msH^\theta) \indic_{B_R})}{\vert B_R \vert}
\end{equation}
where $\indic_{B_R}$ is the indicator function of the square centered at $0$ of side length $2R$. By Riesz's theorem, there exists the so-called \emph{density of states (DOS)} measure $\rho$ satisfying
\begin{equation}
    \label{eq: def_of_DOS}
    \tilde{\Tr}(f(\msH^\theta)) = \int_{\RR} f(t) \ d\rho(t).
\end{equation}
We start by showing the existence and smoothness of the DOS.
\begin{lemm}
\label{lemma: trace on E}
For $f \in C_c^{\infty}(\RR)$ the regularized trace of $f(\msH^\theta)$ exists, satisfies
\[ \tilde{\Tr}(f(\msH^\theta)) = \frac{1}{\vert E \vert} \Tr_{L^2(E)}(f(\msH^\theta))=\frac{1}{\vert E \vert} \int_{E} f(\msH^\theta)(x,x) \ dx,\]
and depends smoothly on $B \in \RR $ and $\theta \in \RR \setminus \{0\},$ with Schwartz kernel $f(\msH^\theta)(x,y)$ of $f(\msH^\theta).$
\end{lemm}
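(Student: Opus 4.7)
The plan is to build the smooth Schwartz kernel of $f(\msH^\theta)$, show its matrix diagonal is $\Gamma$-periodic by magnetic translation invariance, compute the regularized trace as the spatial average of this diagonal, and finally differentiate in the parameters through the Helffer--Sj\"ostrand representation.

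First I would construct the kernel. Since $\msH^\theta$ is a uniformly elliptic self-adjoint system on $L^2(\CC;\CC^4)$, the Helffer--Sj\"ostrand formula
\[
f(\msH^\theta) \;=\; -\frac{1}{\pi}\int_{\CC}\bar\partial \tilde f(z)\,(z-\msH^\theta)^{-1}\,dL(z),
\]
with an almost analytic extension $\tilde f\in C_c^\infty(\CC)$ of $f\in C_c^\infty(\RR)$, combined with standard Combes--Thomas off-diagonal exponential decay estimates for the resolvent kernel of a magnetic Dirac system, produces a jointly smooth matrix-valued Schwartz kernel $K_f(x,y)$ that decays faster than any polynomial away from the diagonal.

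Second, I would exploit magnetic translation invariance. The unitaries $T_\gamma u(z) = \exp(\tfrac{iB}{2}\Im(\bar\gamma z))u(z+\gamma)$ commute individually with $a$, $a^*$ and with $\msV$ for every $\gamma\in\Gamma$, since $\msV$ and $A_{\operatorname{per}}$ are $\Gamma$-periodic (only the pointwise commutators in \eqref{eq: Tcommutes} are used here, not the group relation $\TT_\alpha\TT_\beta=\TT_{\alpha+\beta}$, so the rationality of the flux is irrelevant at this step). Hence $T_\gamma f(\msH^\theta)=f(\msH^\theta)T_\gamma$, and writing $(T_\gamma u)(x)=c_\gamma(x)u(x+\gamma)$ with $|c_\gamma(x)|=1$ gives the kernel identity
\[
K_f(x+\gamma,y+\gamma)\;=\;c_\gamma(x)\,K_f(x,y)\,\overline{c_\gamma(y)}.
\]
On the diagonal $y=x$ the scalar phases cancel inside the $\CC^4$-trace, so $x\mapsto \operatorname{tr}_{\CC^4}K_f(x,x)$ is smooth and $\Gamma$-periodic. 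Consequently
\[
\frac{\Tr(\indic_{B_R}f(\msH^\theta)\indic_{B_R})}{|B_R|}\;=\;\frac{1}{|B_R|}\int_{B_R}\operatorname{tr}_{\CC^4}K_f(x,x)\,dx
\]
converges as $R\to\infty$, by a standard periodic averaging argument, to $\tfrac{1}{|E|}\int_E \operatorname{tr}_{\CC^4}K_f(x,x)\,dx$, which is precisely $\tfrac{1}{|E|}\Tr_{L^2(E)}(f(\msH^\theta))$ under the interpretation of the latter as the integral of the Schwartz kernel diagonal over the fundamental cell.

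Finally, for smoothness in the parameters I would differentiate under the Helffer--Sj\"ostrand integral. For $\theta\neq 0$ the dependence enters only through the unitary conjugations $e^{\pm i\theta\sigma_3/4}$ in $H_D^{\pm\theta}$, so $(z-\msH^\theta)^{-1}$ is jointly $C^\infty$ in $(\theta,z)$ off the spectrum and differentiation to any order is immediate. The hard part is smoothness in $B$: the symmetric-gauge potential $A_{\operatorname{con}}(z)=-\tfrac{B}{2}iz$ is unbounded on $L^2(\CC)$, so $\partial_B\msH^\theta$ cannot be inserted naively between resolvents at the global level. I would resolve this by working pointwise: the Mehler-type integral kernel of $(z-\msH^\theta)^{-1}$ depends smoothly on $B$ \emph{pointwise} in $(x,y)$ on any compact set, even though its operator norm is not jointly smooth; only this local smoothness is needed after the reduction $\tilde\Tr(f(\msH^\theta))=\tfrac{1}{|E|}\int_E\operatorname{tr}_{\CC^4}K_f(x,x)\,dx$ obtained above. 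Dominated convergence inside both the $z$- and $x$-integrals, using the rapid Combes--Thomas decay of $\partial_B^j K_f$ uniformly in a neighbourhood of the parameters, then yields smoothness in $B$ of the regularized trace and concludes the proof.
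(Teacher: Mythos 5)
Your construction of the kernel via Helffer--Sj\"ostrand plus Combes--Thomas decay, the quasi-periodicity of $K_f$ under the magnetic translations, and the periodic averaging of the diagonal are all correct and amount to the paper's own argument recast in kernel language (the paper works instead with $\Tr(\indic_{E+\zeta}f(\msH^\theta))=\Tr(\indic_E f(\msH^\theta))$ and a counting argument over translated cells; the two are equivalent). Your remark that only the individual commutators $[T_\gamma,\msH^\theta]=0$ are needed, and not the flux-rationality behind the group law for $\TT_\gamma$, is also correct.

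The gap is in the smoothness in $B$. You correctly diagnose the obstruction --- $\partial_B\msH^\theta$ contains the linearly growing term coming from $A_{\operatorname{con}}(z)=-\tfrac{B}{2}iz$, so the resolvent identity cannot be applied in operator norm --- but your proposed resolution is not a proof. There is no ``Mehler-type'' kernel for $(z-\msH^\theta)^{-1}$ once the periodic potentials $A_{\operatorname{per}}$ and $\msV$ are present, so the claim that this kernel ``depends smoothly on $B$ pointwise on compact sets'' is exactly the statement to be established, and the subsequent appeal to dominated convergence ``using the rapid Combes--Thomas decay of $\partial_B^j K_f$'' presupposes that $\partial_B^j K_f$ exists with such decay. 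What is needed, and what the paper supplies, is a \emph{weighted} resolvent estimate obtained by conjugating $(\msH^\theta-z)$ with logarithmic weights $f(x)=\pm n\varepsilon\log\langle\varepsilon x\rangle$ (the same Combes--Thomas mechanism you already invoke for off-diagonal decay), giving
\[
\langle x\rangle^{-n}(\msH^\theta-z)^{-1}\langle x\rangle^{n}=\mathcal O\bigl(d(z,\Spec(\msH^\theta))^{-1}\bigr)
\quad\text{in }\mathcal L\bigl(H^m_B,H^{m+1}_B\bigr).
\]
With this in hand, $\partial_B(z-\msH^\theta)^{-1}=(z-\msH^\theta)^{-1}(\partial_B\msH^\theta)(z-\msH^\theta)^{-1}$ becomes a bounded operator after multiplication by $\langle x\rangle^{-1}$, and iteration gives $\langle x\rangle^{-(\gamma_1+\gamma_2)}\partial_\theta^{\gamma_1}\partial_B^{\gamma_2}(z-\msH^\theta)^{-1}\in\mathcal L(L^2,H^{\gamma_1+\gamma_2}_B)$; Helffer--Sj\"ostrand and ellipticity then transfer this to the Schwartz kernel of $f(\msH^\theta)$ on the compact cell $E$. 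Without some version of this weighted estimate your argument for $B$-smoothness is circular. (Your treatment of $\theta$-smoothness is also slightly glib --- $\partial_\theta\msH^\theta$ is a first-order operator, not a bounded one, since the conjugation by $e^{\pm i\theta\sigma_3/4}$ does not remove $\theta$ from the full Hamiltonian once $\msV$ is present --- but relative boundedness makes that step routine.)
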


\begin{proof}
  Let $N_r, N_R \subset \Gamma$ be $ N_r := \{\zeta\in\Gamma: \zeta+E \subset  B_R\}$ and $ N_R := \{\zeta\in\Gamma: \zeta+E \subset B_R \neq \emptyset \}.$
  Then 
    \[
      S_r:= \bigcup\limits_{\zeta\in N_r} E + \zeta \subset B_R\subset \bigcup\limits_{\zeta\in N_R} E + \zeta=: S_R.    \]
  Thus for nonnegative $f$, 
  \begin{equation}
    \label{eq: squeezing_ready}
    \frac{1}{|S_R|}\Tr(\indic_{S_r}f(\msH^\theta))\leq \frac{1}{|B_R|}\Tr(\indic_{B_R}f(\msH^\theta)) \leq \frac{1}{|S_r|}\Tr(\indic_{S_R}f(\msH^\theta)).
  \end{equation}
  Furthermore, by definition, we see that for some $C,C'>0$, $\text{ for all } R$, 
  \begin{equation}
    \label{eq: difference_of_bands_near_BR}
      \#(N_R\setminus N_r) \leq CR,\quad \text{~and~}\quad |S_R\setminus S_r| \leq C'R.
  \end{equation}
By standard magnetic translation $\TT_\zeta$, which are defined e.g. in \cite[Lemma 2.1]{BKZ22} of our companion paper, satisfy $[\TT_\zeta, \msH^\theta] = 0$, therefore also $[\TT_\zeta, f(\msH^\theta)] = 0$. Furthermore, since $\TT_\zeta \indic_{E+\zeta} \TT_{-\zeta} = \indic_{E},$ thus $\Tr(\indic_{E+\zeta}f(\msH^\theta)) = \Tr(\indic_{E}f(\msH^\theta)).$
Hence,
  \[
    \Tr(\indic_{S_r}f(\msH^\theta)) = \sum\limits_{\zeta\in N_r} \Tr(\indic_{E+\zeta}f(\msH^\theta)) = (\#N_r)\Tr(\indic_E f(\msH^\theta))
  \]
and similarly $\Tr(\indic_{S_R}f(\msH^\theta)) = (\#N_R) \Tr(\indic_E f(\msH^\theta))$. Inserting this into \eqref{eq: squeezing_ready}, taking $R\to \infty$ we get by using \eqref{eq: difference_of_bands_near_BR} that
  \[
    \tilde\Tr(f(\msH^\theta)) = \frac{1}{|E|}\Tr_{L^2(E)}(f(\msH^\theta)).
  \]

To conclude the smooth dependence on $\theta$ and $B$, it suffices to adapt the arguments starting at \cite[p.251]{Sj89}. %

\end{proof}

In the next Proposition, we show that the integrated density of states of the twisted bilayer graphene Hamiltonian is stable under small perturbations of the magnetic field that do not close any spectral gaps. 

\begin{prop}
\label{prop:Sjostrand}
Let the magnetic vector potential $A = A_{\operatorname{con}}+A_{\operatorname{per}}$ be the sum of a linear potential associated with a constant field $B_0$ and $A_{\operatorname{per}} \in C^{\infty}(E)$. Assuming $t_0,t_1\notin \Spec(\mathscr H^{\theta})$, there exists a neighbourhood $\mathcal B\subset \RR$, open, connected, with $B_0 \in \mathcal B$ as well as $m = (m_1,m_2)\in \ZZ^2$ such that for any perturbation of the constant magnetic field $B\in \mathcal B$, $t_0,t_1 \notin \Spec(\mathscr H^{\theta})$ the DOS satisfies 
\[\rho((t_0,t_1))=\frac{1}{|E|}\bigg(m_1 \frac{B \vert E \vert}{2\pi}+m_2\bigg).\] 
\end{prop}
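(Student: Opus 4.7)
The plan is to adapt the classical Streda quantization of the integrated density of states in a gap: exhibit $\rho((t_0,t_1))$ as an affine function of $B$ on a small connected neighborhood $\mathcal B$ of $B_0$, with slope $m_1/(2\pi)$ for some integer Chern number $m_1$ and intercept $m_2/|E|$ for some integer $m_2$.

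First, I would establish smoothness of $B\mapsto \rho((t_0,t_1))$ on some neighborhood $\mathcal B\ni B_0$. Pick $\chi\in C^\infty_c(\RR)$ with $\chi\equiv 1$ in a neighborhood of $\Spec(\mathscr H^\theta)\cap[t_0,t_1]$ and $\supp\chi\subset(t_0,t_1)$; by upper semicontinuity of the spectrum in $B$ (a consequence of the resolvent estimates in Lemma~\ref{lemma: trace on E}), the equality $\chi(\mathscr H^\theta)=\indic_{(t_0,t_1)}(\mathscr H^\theta)$ persists on an open connected neighborhood $\mathcal B$ of $B_0$, and Lemma~\ref{lemma: trace on E} then yields $C^\infty$ dependence of $\rho((t_0,t_1))=\tilde\Tr(\chi(\mathscr H^\theta))$ on $B\in\mathcal B$.

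Next, I would compute $\partial_B \rho((t_0,t_1))$ via the Helffer--Sj\"ostrand representation
\[
\chi(\mathscr H^\theta)=-\tfrac{1}{\pi}\int \bar\partial\tilde\chi(z)(z-\mathscr H^\theta)^{-1}L(dz),
\]
differentiating in $B$ under the regularized trace (legitimate by the polynomial resolvent bounds in $\langle x\rangle$ from the proof of Lemma~\ref{lemma: trace on E}) and recasting the resulting double resolvent as a commutator with the position operators, using that $\partial_B \mathscr H^\theta$ is multiplication by a linear function of $x$ in the symmetric gauge of $A_{\operatorname{con}}$. This produces the Kubo--Streda identity
\[
\partial_B\rho((t_0,t_1))=\frac{1}{2\pi i}\,\tilde\Tr\!\bigl(P[[X_1,P],[X_2,P]]\bigr),\qquad P:=\indic_{(t_0,t_1)}(\mathscr H^\theta).
\]
At any $B\in\mathcal B$ satisfying the rationality condition of Lemma~\ref{lemma: Floquet_Theory}, the Bloch decomposition converts the right-hand side to the integral of the Berry curvature of the Fermi projection over the magnetic Brillouin zone, equal to $2\pi$ times an integer first Chern number. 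Hence $\partial_B\rho((t_0,t_1))=m_1/(2\pi)$ with $m_1\in\ZZ$, constant on $\mathcal B$ by continuity.

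Finally, I would pin down the intercept by evaluating at one rational flux $B_1\in\mathcal B$ of the form $B_1=k_2/(8\pi\lambda^2\Im\omega)$; such fluxes are dense in $\RR$, so $\mathcal B$ can be shrunk to contain one (and, by further shrinking, one with $\lambda=1$). The Bloch decomposition of Lemma~\ref{lemma: Floquet_Theory} makes each fiber $\mathscr H^\theta_{\mathbf k}$ compact-resolvent, and since $t_0,t_1\notin\Spec(\mathscr H^\theta)$ the eigenvalue count of $\mathscr H^\theta_{\mathbf k}$ in $(t_0,t_1)$ is a $\mathbf k$-independent integer $N$; thus $\rho((t_0,t_1))=N/|E_\lambda|=N/(\lambda^2|E|)$. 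Using $B_1|E_\lambda|/(2\pi)=k_2\in\ZZ$ and matching the affine form from Step~2 forces the intercept to equal $(N-m_1k_2)/(\lambda^2|E|)$; taking $\lambda=1$ makes this manifestly of the form $m_2/|E|$ with $m_2\in\ZZ$.

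The main obstacle is Step~2: the rigorous derivation of the Kubo--Streda identity and the identification of its right-hand side as an integer. The algebraic manipulations are standard, but justifying the trace-class operations and the exchange of $\partial_B$ with the Helffer--Sj\"ostrand contour integral requires the $\langle x\rangle$-weighted resolvent bounds built into the proof of Lemma~\ref{lemma: trace on E}; and identifying the Kubo trace with a Chern number relies on the Bloch decomposition of Lemma~\ref{lemma: Floquet_Theory} together with a vector bundle construction over the magnetic Brillouin zone.
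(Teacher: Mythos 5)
Your strategy is genuinely different from the paper's. The paper never writes down a Kubo--Streda formula or a Chern number: it (i) reduces by density to rational flux $B_0\vert E\vert = 2\pi p/q$, (ii) counts the $N$ Floquet bands lying inside $(t_0,t_1)$ on an enlarged cell of index $q$ to get $\rho((t_0,t_1))=N/(q\vert E\vert)$, (iii) uses Bezout, $cp+dq=1$, to rewrite $\tfrac1q = c\tfrac pq + d$, so that $\vert E\vert\,\rho((t_0,t_1))\in \tfrac{B\vert E\vert}{2\pi}\ZZ+\ZZ$ at every rational flux, and (iv) invokes the smoothness of $B\mapsto\rho((t_0,t_1))$ from Lemma \ref{lemma: trace on E} to force the two integer coefficients to be locally constant. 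Your route, if completed, buys an identification of $m_1$ as a first Chern number, which the paper's argument does not provide; the cost is that your Step~2 --- the rigorous Kubo--Streda identity for this matrix-valued magnetic Dirac operator, the trace-class justifications, and the integrality of the resulting trace --- is exactly where all the work lies, and you leave it as an acknowledged obstacle rather than proving it. As written the proposal therefore does not actually establish that the slope is quantized, which is the heart of the statement; the paper's elementary band-counting-plus-Bezout argument sidesteps this machinery entirely.

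There is also a concrete error in your final step. The fluxes admitting $\lambda=1$ in Lemma \ref{lemma: Floquet_Theory} are $B=k_2/(8\pi\Im\omega)$ with $k_2\in\ZZ$, a \emph{discrete} subset of $\RR$; you cannot shrink an arbitrary neighbourhood $\mathcal B$ of $B_0$ until it contains one, so the parenthetical ``by further shrinking, one with $\lambda=1$'' fails. From a single evaluation at a general rational flux you only obtain $m_2=(\tilde N-m_1p)/q$ with no reason for $q$ to divide $\tilde N-m_1p$. Two standard repairs: evaluate at two rational fluxes in $\mathcal B$ whose denominators are coprime and compare the two expressions for the intercept, or adopt the paper's Bezout step, which shows directly that the value lies in $\tfrac{B\vert E\vert}{2\pi}\ZZ+\ZZ$ at every rational flux and lets the smoothness of $\phi(B)$ pin down $m_1,m_2$ simultaneously.
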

\begin{proof}
By density, we may assume that $B_0 \vert E \vert =2\pi \frac{p}{q}\in 2\pi \mathbb Q.$ This implies by choosing $\lambda=q$ that $B_0\vert E_{\lambda}\vert \in 2 \pi \ZZ$.
Let $\lambda_{n,\mathbf k}$ be the $n$-th Bloch band of $\mathscr H^{\theta}_{\mathbf k}$ for $n \in \ZZ$ on $\bfk \in E_{\lambda}^*$.  The spectrum of $\mathscr H^{\theta}$ has band structure and is given by $\Spec (\mathscr H^{\theta})=\cup_n J_n$ where $J_n=\bigcup\limits_{\bfk\in E_\lambda^*}\lambda_{n,\bfk}$. Let $t_0$, $t_1\not\in \Spec(\msH^\theta)$. We call $\mathcal I$ the set of bands fully contained in $(t_0,t_1).$ In terms of $\mathbf k\mapsto u_{n, \mathbf k}$ given by the eigenvectors associated with $\lambda_{n,\bfk}$ spectral projection of $\mathscr H_{\mathbf k}^{\theta}$ is given by
\[\indic_{(t_0,t_1)}(\msH^\theta_\bfk) v_\bfk(x)=\int_{E_\lambda} \indic_{(t_0,t_1)}(\msH^\theta_\bfk)(x,y)v_\bfk(y)dy\text{~with~} \indic_{(t_0,t_1)}(\msH^\theta_\bfk)(x,y):=\sum_{j \in \mathcal I}u_{j,\mathbf k}(x)\overline{u_{j,\mathbf k}(y)}.\]
So the spectral projection $\indic_{(t_0,t_1)}(\msH^\theta) =\mathcal U_{B_0}^{-1}\int^{\oplus}_{E_\lambda^*}\indic_{(t_0,t_1)}(\msH^\theta_{\mathbf k}) \frac{d\mathbf k}{\vert E_{\lambda}^*\vert}\mathcal U_{B_0}$ of $\mathscr H^{\theta}$ is 
\[\indic_{(t_0,t_1)}(\msH^\theta) u(x)=\int_{\RR} \indic_{(t_0,t_1)}(\msH^\theta)(x,y)u(y)dy\text{~with~}  \indic_{(t_0,t_1)}(\msH^\theta)(x,y)=\int_{E_\lambda^*}\mathcal \indic_{(t_0,t_1)}(\msH^\theta_{\mathbf k})(x,y)\frac{d\mathbf k}{|E_\lambda^*|}.\]
Since $t_0,t_1\notin \Spec(\msH^\theta)$ and let $N:=\vert \mathcal I\vert$, then by Lemma \ref{lemma: trace on E} 
\[\rho ((t_0,t_1)):=\int_{E_\lambda} \indic_{(t_0,t_1)}(\msH^\theta)(x,x) \ \frac{dx}{|E_{\lambda}|} =\int_{E_{\lambda}^*}\sum_{j \in \mathcal I} 1\,\frac{d\mathbf k}{4\pi^2}=\frac{N}{|E_{\lambda}|} \text{ s.t. } \tilde \Tr( \indic_{(t_0,t_1)}(\msH^\theta))=\frac{N}{|E_{\lambda}|}.\]
 If $f\in C_c^\infty (\RR)$, such that $f(x)=1$ for $x \in \operatorname{conv}\bigcup_n J_n$\footnote{conv is the convex hull} and $f(x)=0$ for $x \in \operatorname{Spec}(\mathscr H^{\theta}) \setminus \operatorname{conv}\bigcup_n J_n$, then
\[\rho((t_0,t_1))=\int_\RR f(t)\rho(dt)=\frac{N}{|E_{\lambda}|}.\]

Recall that $B_0\vert E\vert=\frac{B_0\vert E_\lambda\vert}{q} = 2\pi\frac{p}{q}\in 2\pi \mathbb Q$. We then introduce a new lattice $ \tilde \Gamma \subset \Gamma$ generated by $\widetilde{\zeta}_1=\zeta_1$ and $\widetilde{\zeta}_2=q\zeta_2$. Then $B_0 \vert \CC/\tilde \Gamma \vert \in 2\pi \ZZ$ and $|\Gamma/\tilde \Gamma|=q$. As before, if $t_0,t_1\notin \Spec (\mathscr H^{\theta})$, then 
\[\phi(B_0):=|E|\tilde{\Tr}(\indic_{(t_0,t_1)}(\msH^\theta)) = |E|\rho((t_0,t_1))=|E|\int_\RR f(t)d\rho(t)\in\frac{1}{q}\ZZ\subset \frac{B_0 \vert E\vert}{2\pi} \ZZ+\ZZ\]
where the last inclusion follows since $p$, $q$ are coprime i.e. there exist $ c,d\in\ZZ$ such that $cp+dq=1$. Note that if $z_0\in\RR\setminus\Spec(\mathscr H^{\theta})$, then there exists $\epsilon>0$ such that $z\notin\Spec (\mathscr H^{\theta})$ for all $|z-z_0|$ and small perturbations of the constant field $|B-B_0|<\epsilon$ and $\phi(B)$ is locally a smooth function of the constant field $B$ by Lemma \ref{lemma: trace on E}, so there exists $B_0\in\mathcal B\subset \RR$ open, connected and  $m \in \ZZ^2$ such that for $B\in \mathcal B$,
\[\rho((t_0,t_1))=\int_\RR f(t)d\rho(t)=\frac{1}{|E|}\bigg(m_1 \frac{B \vert E\vert}{2\pi}+m_2\bigg).\]
\end{proof}

\section{Semiclassical expansion of Density of states}
\label{sec: DOS_Semiclassical}

In this section, we provide explicit asymptotic expansions of the regularized trace in the semiclassical limit $B \gg 1$ for constant magnetic fields in the spirit of Remark \ref{rmk: strong_magnetic_field} for the chiral and anti-chiral model respectively. We also comment on the differentiability of the DOS at the end of this section in preparation for applications in the next section. 

% \smallsection{Main results}
We consider \eqref{eq:contmodel} with fixed $\theta$ and constant magnetic field $B$: 
\begin{equation}
  \label{eq: sec_6_Hamiltonian}
  \msH^\theta = \msH_0^\theta + \msV(x) = \begin{pmatrix}
    H_D^\theta & 0 \\ 0 & H_D^{-\theta}
  \end{pmatrix}+ \begin{pmatrix}
    0 & T(x) \\ T^*(x) & 0 
  \end{pmatrix}.
\end{equation}
Notice that the spectrum of $\msH_0^\theta$ is composed of \emph{Landau levels} $\lambda_{n,B} := \sgn(n)\sqrt{2|n|B}$ (see Lemma \ref{lemma: eigenspace}) which we will perturb by the tunnelling potential $\msV$ (see Remark \ref{rmk: perturbation}). To simplify the notation, we therefore introduce \emph{Landau bands} $\Lambda_{n,B,\msV} := (\lambda_{n-1,B} + \Vert \msV\Vert_\infty, \lambda_{n+1,B} - \Vert \msV\Vert_\infty)$ for $n\in \ZZ$, in which the spectrum of $\msH^\theta$ is contained around the $n$-th Landau level $\lambda_{n,B}$, cf. Remark \ref{rmk: gap}. 

We start by stating the main result of this section which is the asymptotic expansion of the DOS for the chiral model.
\begin{theo}[Chiral model]
  \label{thm: chiral trace}
  Let $\lambda_{n,B} = \sgn(n)\sqrt{2|n|B}$. %Fix $0< \delta<1/2$\footnote{The remainder is of lower order for $\delta\in (1/(K-1),1/8)$ which is non-empty for $K>9$.}. 
  For a fixed $n\in \ZZ$, for $\epsilon>0$ small enough, for all $f\in C_c^{K}(\Lambda_{n,B,\msV}),$ with $K\geq \frac{6}{\epsilon} - 2$, we have
  \begin{equation}
    \label{eq: Asymp_chiral}
   \tilde \Tr(f(\mathscr H_{c})) = \left[\frac{B}{\pi} f(\lambda_{n,B}) + \frac{|n|}{2\pi}\Ave(\mathfrak U)f''(\lambda_{n,B})\right] + \mathcal O_{n,K,f,\msV} (B^{-\frac{1}{2}+\epsilon})
  \end{equation}
with $\mathfrak U(\eta) = \frac{\alpha_1^2}{8}\left[\alpha_1^2(|U_-(\eta)|^2 - |U(\eta)|^2)^2 + 4|\partial_{\bar{\eta}}\overline{U_-(\eta)}-\partial_\eta U(\eta)|^2\right]$, $\Ave(g) = \frac{1}{|E|}\int_{E} g(\eta) L(d\eta)$, $\eta = x_2 +i\xi_2$, and $\mathcal O_{n,K,f,\msV} =\mathcal O_n (\Vert \mathscr V \Vert_{\infty} \Vert f \Vert_{C^{K}})$.

 Furthermore, fix $N\in \NN^+$ and consider $2N+1$ Landau bands with $n \in \{-N,..,N\}$, then for all $\epsilon>0$ small enough, for any $f\in C_c^{K} ([\lambda_{-(N+1), B} + \Vert \msV\Vert_\infty , \lambda_{N+1,B} - \Vert \msV \Vert_\infty])$, with $K \geq \frac{6}{\epsilon} - 2$, we have 
\[
\tilde\Tr(f(\msH_\ch)) =  \sum\limits_{n = -N}^N \left[\frac{B}{\pi}f(\lambda_{n,B}) + \frac{|n|}{2\pi} \Ave(\mathfrak U)f''(\lambda_{n,B})  \right] +   \mathcal O_{(N),K,f,\msV} (B^{-\frac{1}{2}+\varepsilon}) 
\]
where $\mathcal O_{(N),K,f,\msV} := \sum\limits_{n=-N}^N \mathcal O_{n,K,f,\msV}$.
\end{theo}

Our proof also shows that all higher order terms, which in general have complicated expressions, in the expansion of $\tilde\Tr(f(\msH_\ch))$ are of the form $f^{(k)}(\lambda_{n,B})$ (see \eqref{eq:chiralmore}), which is different from the anti-chiral that we consider next.  

For the anti-chiral Hamiltonian the sub-leading correction in the regularized trace is already of order $\sqrt{B}.$ Since the dominant sub-leading correction in the anti-chiral case is one order higher than in the chiral case, we only state the correction up to order $\sqrt{B}.$ 

\begin{theo}[Anti-chiral model]
  \label{thm: anti-chiral trace}
Under the same assumption as in Theorem \ref{thm: chiral trace}, we have for all $\varepsilon>0$ small enough, $f\in C^K_c (\Lambda_{n,B,\msV})$ with $K \geq \frac{3}{\epsilon} - 1$
    \begin{equation}
    \label{eq: Anti-chiral_d.o.s.}
      \tilde \Tr(f(\mathscr H_{\text{ac}}^\theta)) =  \frac{B}{2\pi} t_{n,0}(f) - \frac{\sqrt{B}}{2\pi} t_{n,1}(f) +  \mathcal O_{n,K,f,\msV} (B^{\epsilon}),
    \end{equation}
where $\mathcal O_{n,K,f,\msV} =\mathcal O_n( \Vert \mathscr V \Vert_{\infty} \Vert f \Vert_{C^{K}})$ and
\[\begin{split}
     &t_{n,0}(f)=\Ave\left(f(\lambda_{n,B}+c_n)+ f(\lambda_{n,B}- c_n)\right), \  t_{n,1}(f)=\Ave\left( s_n^2 f'(\lambda_{n,B}+c_n)+ s_n^2 f'(\lambda_{n,B} - c_n)\right),\\
    &s_n(\eta;\theta) = \begin{cases}
      \alpha_0  \sin(\thot) |V(\eta)|,\\
      \alpha_0 |V(\eta)|,
    \end{cases} c_n(\eta;\theta) = \begin{cases}
      \alpha_0  \cos(\thot) |V(\eta)|, & n\neq 0, \\
      \alpha_0 |V(\eta)|, & n = 0.
    \end{cases}, \Ave(g) = \frac{1}{|E|} \int_E g(\eta)dL(\eta).
\end{split}\]

 Furthermore, fix $N\in \NN^+$ and consider $2N+1$ Landau bands with $n \in \{-N,..,N\}$. For any $\epsilon>0$, $f\in C_c^{K} ([\lambda_{-N-1, B} + \Vert \msV\Vert_\infty , \lambda_{N+1} - \Vert \msV \Vert_\infty])$ with $K\geq \frac{3}{\epsilon} - 1$, we have 
\[
\tilde\Tr(f(\msH_\ach^\theta))  =  \sum\limits_{n = -N}^N \left[\frac{B}{2\pi} t_{n,0}(f) + \frac{\sqrt{B}}{2\pi} t_{n,1}(f)\right] + \mathcal O_{(N),f,K,\msV} (B^{\varepsilon}) 
\]
where $\mathcal O_{(N),K,f,\msV} := \sum\limits_{n=-N}^N \mathcal O_{n,K,f,\msV}$.
\end{theo}

For the rest of this section, we shall temporarily stop using the identification $x = (x_1,x_2) \simeq z = x_1 + ix_2$. We will use the Landau gauge for the constant magnetic field, i.e. $A(z) = -iBx_1$ in \ref{eq: sec_6_Hamiltonian}. In this setup, Let $\Sigma_i^\theta = \diag(\sigma_i^\theta, \sigma_i^{-\theta})$. We can rewrite \eqref{eq: sec_6_Hamiltonian} as $\msH_0^\theta = \Sigma_1^\theta D_{x_1} + \Sigma_2^\theta (D_{x_2} + Bx_1)$. We will only use $x= (x_1,x_2)$ to denote the position, while $z$ is used in the resolvent $(\msH^\theta - z)^{-1}$. 

\smallsection{Quantizations}
\label{sec:quantization}
Let $x = (x_1,x_2)$, $\xi = (\xi_1,\xi_2)\in \RR^2$. For a symbol $a(x,\xi)\in S(\RR^4_{x,\xi})$, we define the $(h_1,h_2)$-Weyl quantization  $a^W(x,h_1D_{x_1}, h_2D_{x_2}): L^2(\RR_x^2) \to L^2(\RR_x^2)$ as
\begin{equation}
    \label{eq: quantization}
    (a^W(x,h_1D_{x_1},h_2D_{x_2})u)(x) = \frac{1}{2\pi } \int e^{\frac{i}{h_1}(x_1 - y_1)\xi_1 +\frac{i}{h_2}(x_2-y_2)\xi_2} a\left(\frac{x+y}{2},\xi\right)u(y) \ dy \ d\xi.
\end{equation}
In this section, we shall employ two different quantizations:  in Subsections \ref{ss: Symplectic reduction} 
and \ref{subsec: Grushin problem}, we use the $(h_1,h_2)=(1,1)$-Weyl quantization. Starting from Subsection \ref{ss: properties_of_effective_Hamiltonian}, we use the $(x_2,hD_{x_2})$-Weyl quantization of the operator-valued symbol which is related to the $(h_1,h_2)=(1,h)$-Weyl quantization (see Subsection \ref{ss: properties_of_effective_Hamiltonian} for more details). Occasionally, we denote $a^W(x,h_1D_{x_1},h_2D_{x_2})$ by $a^W$ for convenience.

\subsection{First Reduction: Symplectic reduction}
\label{ss: Symplectic reduction}
In this subsection, we first apply a symplectic reduction to $\msH^\theta$, then provide a spectral description of $\msH^\theta_0$ and $\msH^\theta$. In the end, we introduce the Helffer-Sj\"ostrand formula for our study of the regularized trace $\tilde\Tr (f(\msH^\theta))$.

\smallsection{Symplectic Reduction} Let $(h_1,h_2) =(1,1)$ for this subsection. Then the  operator $\msH_0^\theta$ and $\msV$, when viewed as a $(1,1)$-Weyl quantization, have symbols $\msH_0^\theta(x,\xi) = \Sigma_1^\theta \xi_1 + \Sigma_2^\theta (\xi_2+Bx_1)$ and $\msV(x)$ respectively. The following lemma provide the symplectic reduction of $\msH^\theta$:
\begin{lemm}
  \label{lemma: Unitary}
Let $h = 1/B$. Then there is a unitary operator $\msU$, symbols $\msG_0^\theta(x,\xi) = \Sigma_1^\theta \xi_1 + \Sigma_2^\theta x_1$ and $\msW(x,\xi) = \msV(x_2 + h^{1/2}x_1, h\xi_2 - h^{1/2} \xi_1)$, s.t. \begin{align}
     \label{eq:Unitary}
       &  \msU \msH^\theta_0(x,D_x) \msU^{-1} = \sqrt{B}\msG^\theta_0(x,D_{x}),\\
       \label{eq: Unitary_transform_for_multiplication}
       & \msU \msV(x) \msU^{-1} =  \msW^W(x,D_x).
 \end{align}
\end{lemm}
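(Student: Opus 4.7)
Both identities are an instance of metaplectic reduction: I will find a linear symplectomorphism $\chi$ of $T^*\RR^2$ that simultaneously normalizes the Weyl symbols of $\msH_0^\theta$ and $\msV$, then take $\msU$ to be its metaplectic lift. With the convention $\msU a^W(x,D_x)\msU^{-1}=(a\circ\chi^{-1})^W(x,D_x)$ for symbols $a$, it suffices to check $\msH_0^\theta\circ\chi^{-1}=\sqrt{B}\,\msG_0^\theta$ and $\msV\circ\chi^{-1}=\msW$ at the symbol level.

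The Weyl symbol of $\msH_0^\theta$ is $\Sigma_1^\theta\xi_1+\Sigma_2^\theta(\xi_2+Bx_1)$. Matching the first two slots of $\chi^{-1}(X_1,X_2,\Xi_1,\Xi_2)$ so that $\msV\circ\chi^{-1}$ agrees with $\msW$, and the third and fourth so that $\msH_0^\theta\circ\chi^{-1}$ collapses to $\sqrt{B}(\Sigma_1^\theta\Xi_1+\Sigma_2^\theta X_1)$, forces
\[\chi^{-1}(X_1,X_2,\Xi_1,\Xi_2)=(X_2+h^{1/2}X_1,\ h\Xi_2-h^{1/2}\Xi_1,\ h^{-1/2}\Xi_1,\ -h^{-1}X_2).\]
A direct calculation shows $(\chi^{-1})^\ast(dX_1\wedge d\Xi_1+dX_2\wedge d\Xi_2)=dx_1\wedge d\xi_1+dx_2\wedge d\xi_2$, so $\chi^{-1}$ is symplectic. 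One concrete realization of its metaplectic lift is $\msU=S\,T\,\mathcal F_2^{-1}\,R$, where $Rf(x_1,x_2)=B^{-1/2}f(x_1,x_2/B)$ rescales $x_2$, $\mathcal F_2^{-1}$ is the partial inverse Fourier transform in $x_2$, $Tf(x_1,x_2)=f(x_1+x_2,x_2)$ is a position shear, and $Sf(x_1,x_2)=B^{1/4}f(\sqrt{B}x_1,x_2)$ rescales $x_1$.

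Tracking $(x_1,D_{x_1},x_2,D_{x_2})$ through the four elementary conjugations yields $\msU x_1\msU^{-1}=x_2+h^{1/2}x_1$, $\msU D_{x_1}\msU^{-1}=h^{-1/2}D_{x_1}$, $\msU x_2\msU^{-1}=hD_{x_2}-h^{1/2}D_{x_1}$, and $\msU D_{x_2}\msU^{-1}=-h^{-1}x_2$, matching the four slots of $\chi^{-1}$. Substituting into $\msH_0^\theta=\Sigma_1^\theta D_{x_1}+\Sigma_2^\theta(D_{x_2}+Bx_1)$ collapses the two $x_2$-terms and yields $\sqrt{B}\,\msG_0^\theta(x,D_x)$, while substituting into $\msV(x_1,x_2)$ yields $\msV(x_2+h^{1/2}x_1,\,hD_{x_2}-h^{1/2}D_{x_1})$. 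The two operator arguments commute (the contribution $[x_2,hD_{x_2}]=ih$ is cancelled by $[h^{1/2}x_1,-h^{1/2}D_{x_1}]=-ih$), so Weyl ordering is unambiguous and this equals $\msW^W(x,D_x)$. The main bookkeeping subtlety is that omitting the initial rescaling $R$ leaves $\msU x_2\msU^{-1}$ off from the target by a factor of $B$; $R$ corrects this by symplectically rebalancing $(x_2,D_{x_2})$ and remains compatible with the subsequent Fourier, shear, and $x_1$-rescaling steps.
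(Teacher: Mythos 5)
Your proof is correct and follows essentially the same route as the paper: the paper conjugates by the metaplectic lift of a composition of three elementary linear symplectomorphisms $\kappa_3\circ\kappa_2\circ\kappa_1$, which is exactly your single map $\chi$, and relies on the same exact Egorov identity $\msU a^W\msU^{-1}=(a\circ\chi^{-1})^W$ for linear symplectic changes of variables. Your write-up is merely more explicit — you verify symplecticity of the composite map directly, factor the unitary into concrete rescalings, a partial Fourier transform and a shear, and justify the Weyl ordering of the arguments of $\msV$ by their commutativity — and your sign $h\xi_2-h^{1/2}\xi_1$ is consistent with the statement of the lemma (the displayed formula in the paper's own proof carries a sign typo there).
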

\begin{rem}
  Notice that $\msG_0^\theta(x,\xi)$ does not depend on $(x_2,\xi_2)$, thus the $(1,1)$-Weyl-quantization is $\msG_0^\theta(x,D_x) = (\Sigma_1^\theta D_{x_1} + \Sigma_2^\theta x_1) \otimes \indic_{L^2(\RR_{x_2})}$, where $\indic_{L^2(\RR_{x_2})}$ is the identity map on $L^2(\RR_{x_2})$. 
\end{rem}
\begin{rem}
  \label{rmk: perturbation}
 It follows that $\msU \msH^\theta \msU^{-1} = \sqrt{B}(\msG_0^\theta + \sqrt{h}\msW^W)$. When $B\to \infty$, we can interpret $\msG^\theta := \mathscr G_0^\theta + \sqrt{h}\msW^W$ as a small perturbation of $\mathscr G_0^\theta$.
\end{rem}
\begin{proof}
Recall that a symplectic transformation $(y,\eta) = \kappa(x,\xi)$ applying to a symbol $a(x,\xi) = a\circ \kappa^{-1}(y,\eta)\in S(\RR^4)$, induces a unitary operator $U_\kappa:L^2(\RR^2_x)\to L^2(\RR_y^2)$ s.t.
 \begin{equation}
  \label{eq: Symplectic_transformation}
  U_\kappa a^W(x,D_x) U_\kappa^{-1} = (a\circ\kappa^{-1})^{W}(y,D_y).
 \end{equation}
 
 By applying the following three symplectic transformations to $\msH^\theta(x,\xi)$:
\[
\begin{split}
  \kappa_1(x,\xi) &=  (x_1,\xi_2,\xi_1,-x_2), \ \kappa_2(x,\xi)= \Big(x_1 + \tfrac{x_2}{B}, x_2, \xi_1, \xi_2-\tfrac{x_1}{B}\Big),\\
  \kappa_3(x,\xi)&= \left(\sqrt{B}x_1, -\tfrac{x_2}{B}, \tfrac{\xi_1}{\sqrt{B}}, -B\xi_2\right),
\end{split}
\]
we find 
\begin{equation}
  \label{eq: symplectic_transformation_applied}
  \begin{cases}
    \msH^\theta_0\circ \kappa_1^{-1} \circ \kappa_2^{-1} \circ \kappa_3^{-1}(x,\xi)= \sqrt{B}(\Sigma_1^\theta \xi_1 + \Sigma_2^\theta x_1),\\
    \msV\circ \kappa_1^{-1}\circ\kappa_2^{-1}\circ\kappa_3^{-1}(x,\xi) = \msV(x_2+ h^{\frac{1}{2}}x_1,h\xi_2+h^{\frac{1}{2}}\xi_1).
  \end{cases}
\end{equation}
By \eqref{eq: Symplectic_transformation} and \eqref{eq: symplectic_transformation_applied}, the unitary operator $U_{\kappa} := U_{\kappa_3}\circ U_{\kappa_2}\circ U_{\kappa_1}$ has then the desired properties.
 \end{proof}
 
\medskip
\smallsection{Spectral Descriptions}
As mentioned in Remark \ref{rmk: perturbation}, we study the spectral properties of $\msG^\theta$ and $\msH^\theta$ by viewing them as perturbations of $\msG_0^\theta$ and $\msH_0^\theta$. Therefore, we start with $\msG_0^\theta$ and  $\msH_{0}^\theta$:

\begin{lemm}
  \label{lemma: eigenspace}
The spectral decompositions of $\mathscr G_0^\theta$ and $\msH_{0}^\theta $ are given by
\[
  \begin{split}
    &\Spec(\mathscr G_0^\theta) = \{\lambda_n := \sgn(n) \sqrt{2|n|}: n\in\ZZ\}\text{~with~eigenspace~}N_n^\theta,\\
    &\Spec(\msH_{0}^\theta) = \{\lambda_{n,B} := \sgn(n) \sqrt{2|n|B}: n\in\ZZ\} \text{~ with~eigenspace~} \msU N_n^\theta,
  \end{split}
\]
where 
$$ N_n^\theta = \operatorname{span}\left\{\begin{pmatrix}
 x\mapsto  u_n^\theta(x_1) s_1(x_2)\\ 0 
\end{pmatrix},\begin{pmatrix}
  0 \\x\mapsto u_n^{-\theta}(x_1)s_2(x_2)
\end{pmatrix}: \text{ For all } s_1,s_2 \in L^2(\RR_{x_2})\right\}.$$ Here $u_{n}^\theta = e^{-\frac{i\theta}{4}\sigma_3}u_n e^{\frac{i\theta}{4}\sigma_3}$, $u_{n} = C_n\begin{pmatrix}
    \sgn(n) r_{|n| - 1}\\ ir_{|n|}
    \end{pmatrix}$, $C_n = \begin{cases}
      \frac{1}{\sqrt{2}}, & n \in\ZZ\setminus\{0\} \\
      1, & n = 0
    \end{cases}$, as well as $r_{-1} = 0$, $r_m = C_m'(D_{x_1} + ix_1)^me^{-\frac{x_1^2}{2}}$ where $C_m'$ is constant s.t.  $\Vert r_m\Vert_{L^2(\RR_{x_1})} = 1$ for $m\in \NN$.
\end{lemm}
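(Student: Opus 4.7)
\medskip
\noindent\textbf{Proof plan.} The plan is to reduce everything to the standard one-dimensional quantum harmonic oscillator. The operator $\mathscr G_0^\theta = \Sigma_1^\theta D_{x_1} + \Sigma_2^\theta x_1$ acts block-diagonally on $L^2(\RR^2_x;\CC^4) \simeq L^2(\RR_{x_1};\CC^2)\otimes L^2(\RR_{x_2}) \oplus L^2(\RR_{x_1};\CC^2)\otimes L^2(\RR_{x_2})$ with blocks $h_D^{\pm\theta}\otimes \indic_{L^2(\RR_{x_2})}$, where $h_D^\theta := \sigma_1^\theta D_{x_1} + \sigma_2^\theta x_1$. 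Since the second tensor factor is acted on trivially, every eigenvalue of $h_D^\theta$ produces an eigenvalue of infinite multiplicity for $\mathscr G_0^\theta$ with eigenspace obtained by tensoring with all of $L^2(\RR_{x_2})$. So the entire problem for $\mathscr G_0^\theta$ reduces to computing $\Spec(h_D^\theta)$ and its eigenfunctions on $L^2(\RR_{x_1};\CC^2)$.

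\medskip
\noindent First, I would conjugate away the angle. Since $\sigma_k^\theta = e^{-i\theta\sigma_3/4}\sigma_k e^{i\theta\sigma_3/4}$, we have $h_D^\theta = e^{-i\theta\sigma_3/4}\,h_D\,e^{i\theta\sigma_3/4}$ with
\[
  h_D = \sigma_1 D_{x_1} + \sigma_2 x_1 = \begin{pmatrix} 0 & D_{x_1}-ix_1 \\ D_{x_1}+ix_1 & 0 \end{pmatrix}.
\]
Hence $\Spec(h_D^\theta)=\Spec(h_D)$ and eigenfunctions transform by the scalar unitary $e^{-i\theta\sigma_3/4}$, i.e.\ $u_n^\theta = e^{-i\theta\sigma_3/4}u_n$.

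\medskip
\noindent Next, I would diagonalise $h_D$ via creation/annihilation. Setting $a = \tfrac{1}{\sqrt 2}(D_{x_1}-ix_1)$ and $a^* = \tfrac{1}{\sqrt 2}(D_{x_1}+ix_1)$, direct computation gives $[a,a^*]=1$ and $ar_0 = 0$ for $r_0 = \pi^{-1/4}e^{-x_1^2/2}$, and with $r_m := (m!)^{-1/2}(a^*)^m r_0 \in L^2(\RR_{x_1})$ the standard Hermite relations $a^* r_m = \sqrt{m+1}\,r_{m+1}$, $a r_m = \sqrt m\,r_{m-1}$ hold. Then
\[
  h_D^2 = \begin{pmatrix} (D_{x_1}-ix_1)(D_{x_1}+ix_1) & 0 \\ 0 & (D_{x_1}+ix_1)(D_{x_1}-ix_1)\end{pmatrix} = 2\begin{pmatrix} N+1 & 0 \\ 0 & N\end{pmatrix},
\]
where $N=a^*a$ has pure point spectrum $\NN$ with eigenfunctions $\{r_m\}_{m\ge 0}$ forming an orthonormal basis of $L^2(\RR_{x_1})$. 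Therefore $h_D^2$ has pure point spectrum $\{2|n|:n\in\ZZ\}$, and by self-adjointness $\Spec(h_D) = \{\pm\sqrt{2m}:m\in\NN\} = \{\lambda_n\}_{n\in\ZZ}$. For $n \neq 0$, solving $h_D u = \lambda_n u$ componentwise with the ansatz $u = C_n(\alpha\, r_{|n|-1},\, \beta\, r_{|n|})^T$ gives the eigenvector $u_n$ in the stated form (after the chosen normalisation $\Vert u_n\Vert_{L^2}=1$ giving $C_n = 1/\sqrt 2$); for $n=0$ the convention $r_{-1}=0$ produces $u_0 = (0,\,i r_0)^T$, which lies in $\ker(a)\oplus\{0\}$... here adapted to the off-diagonal form of $h_D$, so $h_D u_0 = 0$. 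The completeness of $\{r_m\}$ together with the block structure shows that $\{u_n\}_{n\in\ZZ}$ is an orthonormal basis of $L^2(\RR_{x_1};\CC^2)$, ruling out any additional spectrum.

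\medskip
\noindent Finally, I would tensor and transport. Because $\mathscr G_0^\theta$ acts trivially in $x_2$ and is block-diagonal in $\CC^4$, the eigenspace of $\mathscr G_0^\theta$ at eigenvalue $\lambda_n$ is exactly $N_n^\theta$ as stated, with $s_1,s_2\in L^2(\RR_{x_2})$ arbitrary. For the second assertion, Lemma \ref{lemma: Unitary} yields $\mathscr U \mathscr H_0^\theta \mathscr U^{-1} = \sqrt B\,\mathscr G_0^\theta$, and unitary equivalence transfers the spectrum to $\{\sqrt B\,\lambda_n\} = \{\lambda_{n,B}\}$ with eigenspaces $\mathscr U N_n^\theta$. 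The only real content beyond routine harmonic-oscillator bookkeeping is verifying spectral completeness (i.e.\ that no continuous spectrum sneaks in after tensoring with $L^2(\RR_{x_2})$), which follows immediately from the fact that each $\lambda_n$ is an eigenvalue of $h_D^\theta$ of finite multiplicity and the $x_2$-factor contributes only infinite multiplicity, not continuous spectrum.
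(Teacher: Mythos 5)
Your proof is correct and follows essentially the same route as the paper's: identify the off-diagonal entries of $h_D=\sigma_1D_{x_1}+\sigma_2x_1$ with harmonic-oscillator ladder operators (the paper keeps $a=D_{x_1}-ix_1$ unnormalised, with $[a,a^*]=2$, whereas you rescale to $[a,a^*]=1$ — immaterial), conjugate away $\theta$ by the scalar unitary $e^{-i\theta\sigma_3/4}$, observe the trivial action in $x_2$, and transfer to $\msH_0^\theta$ via the unitary of Lemma \ref{lemma: Unitary}. Your only substantive addition is the explicit completeness argument ruling out further spectrum, which the paper compresses into ``one can check by computation.''
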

\begin{proof}
    The main observation here is for $G_D := \sigma_1 D_{x_1} +\sigma_2 x_1 = \begin{pmatrix}
      0 & a \\ a^* & 0
    \end{pmatrix}$ where $a = D_{x_1} - i x_1$, we have $[a,a^*] = 2$. Thus $a$ and $a^*$ form a pair of annihilator and creator. By the standard  argument for the ladder operators, there is a sequence of normalized $r_m(x_1) = C_m'\cdot (a^*)^m e^{-\frac{x_1^2}{2}} = C_m'(D_{x_1} + ix_1)^me^{-\frac{x_1^2}{2}}$, for $m \geq 0$ s.t. $ar_m = \sqrt{2m}r_{m-1}$ and $a^*r_m = \sqrt{2(m+1)}r_{m+1}$. Then one can check by computation and \eqref{eq:Unitary} that $u_n^\theta$, $N_n^\theta$ and $\msU N_n^\theta$ defined above are eigenvectors and eigenspaces of $\msG_0^\theta$, $\msG_0^\theta$ and $\msH_0^\theta$ with respect to eigenvalue $\lambda_n$, $\lambda_n$ and $\lambda_{n,B}$, for all $n\in\ZZ$.
\end{proof} 
\begin{rem}
\label{rmk: gap}
    Since $\msH^\theta = \msH_0^\theta + \msV$, thus $\operatorname{Spec}(\msH^\theta) \subset \overline{B_{\Vert \msV\Vert_\infty}\left(\Spec(\msH_0^\theta)\right)} = \bigcup\limits_n \overline{B_{\Vert \msV\Vert_\infty}(\lambda_{n,B})}.$ Fix $n$, since $\msV$ is bounded, when $B$ is large enough, $\left\{\overline{B_{\Vert \msV\Vert_\infty}(\lambda_{j,B})}\right\}_{|j - n|\leq 1}$ are disjoint. Since the DOS measure $\rho$ is supported on the spectrum, by \eqref{eq: def_of_DOS}, the regularized trace $\tilde{\Tr}(f(\msH^\theta))$ is not affected by modifying $f$ within the spectral gap $(\lambda_{k-1,B} + \Vert \msV\Vert_\infty, \lambda_{k,B} - \Vert \msV\Vert_\infty)$, i.e. $\tilde{\Tr}((\chi_{\Lambda_{k,B,\msV}}f)(\msH^\theta)) = \tilde{\Tr}((\chi_{B_{\Vert \msV\Vert_\infty}(\lambda_{k,B})}f)(\msH^\theta))$, for any $k\in\ZZ$. Thus we will start with $f$ supported on a fixed $\Lambda_{n,B,\msV}$ to avoid the influence of bands nearby and then consider the general case of $f$ supported on a fixed number of bands (see Theorem \ref{thm: chiral trace}, \ref{thm: anti-chiral trace} and their proofs in Subsection \ref{ss: proof_of_dos_expansion}).
\end{rem}
\begin{rem}
\label{rmk: shift_to_n}
 Both $\lambda_{n,B}$ and $\lambda_n$ are called Landau levels of $\msH_0^\theta$ and $\msG_0^\theta$ respectively. To study the corresponding operators near the Landau levels, we denote $\msH_n^\theta := \msH^\theta - \lambda_{n,B}$, $\msH_\on^\theta := \msH_0^\theta - \lambda_{n,B}$, $\msG_n^\theta := \msG^\theta - \lambda_n$ and  $\msG_\on^\theta := \msG_0^\theta - \lambda_n$. 
\end{rem}

\smallsection{Helffer-Sj\"ostrand formula and regularized traces} We proceed by recalling the Helffer-Sj\"ostrand formula. Let $K\in \NN$. Given $f\in C_c^{K+1}(\RR)$, we can always find  $\tilde f$, an order-$K$ quasi-analytic extension of $f$, by which we mean a function $\tilde f\in C^{K+1}_c(\CC)$, such that
\begin{equation}
  \label{eq: quasi-analytic_extension}
  \tilde f\vert_{\RR}=f, \ \text{and} \ \vert \partial_{\bar z}f\vert \leq C\Vert f \Vert_{C^{K+1}}\vert \Im z\vert^{K}, \ \text{for~some} \ C>0. 
\end{equation}
The concrete construction can be found in \cite[Sec. 4.1]{AJ06} or \cite[Theorem 8.1]{DS99}, where we can also choose $\tilde{f}$ s.t. $\supp(\tilde{f})\supset \supp(f)$ is arbitrarily close to $\supp(f)$. We omit the proof which can be found in the quoted references.
\begin{lemm}[Helffer-Sj\"ostrand formula]
\label{lemma: HS_formula}
  Let $H$ be a self-adjoint operator on a Hilbert space. Let $f\in C^{K+1}_c(\RR)$ and $\tilde f$ be its order-$K$ quasi-analytic extension, then 
  \begin{equation}
    \label{eq: HS_formula}
    f(H) = \frac{1}{2\pi i}\int_{\CC} \partial_{\bar{z}}\tilde{f}(z) (z - H)^{-1} dz \wedge d\bar{z}. 
  \end{equation}
\end{lemm}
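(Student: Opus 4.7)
The plan is to prove the Helffer-Sjöstrand formula by reducing it to a scalar identity via the spectral theorem, and then establishing the scalar identity by a Cauchy-Pompeiu / Stokes argument around the real axis. I would break this into three steps.

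First, I would verify that the integrand in \eqref{eq: HS_formula} is Bochner integrable in operator norm. Since $H$ is self-adjoint, one has the standard resolvent bound $\|(z-H)^{-1}\|\leq |\Im z|^{-1}$ for $z\notin\RR$. By the defining property of the quasi-analytic extension, $|\partial_{\bar z}\tilde f(z)|\leq C\|f\|_{C^{K+1}}|\Im z|^K$, and $\tilde f$ has compact support. Therefore $\|\partial_{\bar z}\tilde f(z)\,(z-H)^{-1}\|\lesssim |\Im z|^{K-1}\chi_{\supp\tilde f}(z)$, which is locally integrable on $\CC$ for any $K\geq 1$. This legitimises the right-hand side of \eqref{eq: HS_formula} as a Bochner integral valued in $\mathcal L(\mathcal H)$.

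Second, I would establish the scalar identity
\begin{equation*}
f(\lambda)=\frac{1}{2\pi i}\int_{\CC}\partial_{\bar z}\tilde f(z)\,(z-\lambda)^{-1}\,dz\wedge d\bar z,\qquad \lambda\in\RR.
\end{equation*}
For fixed $\lambda\in\RR$, pick $\Omega\Subset\CC$ an open, bounded set containing $\supp\tilde f$ and $\lambda$, and for small $\epsilon>0$ set $\Omega_\epsilon=\Omega\setminus\overline{D_\epsilon(\lambda)}$. The function $u(z)=\tilde f(z)/(z-\lambda)$ is smooth on $\Omega_\epsilon$ with $\partial_{\bar z}u=\partial_{\bar z}\tilde f(z)/(z-\lambda)$ there, and vanishes near $\partial\Omega$. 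The complex Stokes' formula $d(u\,dz)=-\partial_{\bar z}u\,dz\wedge d\bar z$ then yields
\begin{equation*}
\int_{|z-\lambda|=\epsilon}\frac{\tilde f(z)}{z-\lambda}\,dz=\int_{\Omega_\epsilon}\partial_{\bar z}\tilde f(z)\,(z-\lambda)^{-1}\,dz\wedge d\bar z,
\end{equation*}
with the circle oriented counter-clockwise. The left-hand side converges to $2\pi i\,\tilde f(\lambda)=2\pi i\,f(\lambda)$ as $\epsilon\downarrow 0$ by continuity of $\tilde f$, while the right-hand side converges to the full integral over $\CC$ by dominated convergence using the integrable majorant from step one.

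Third, I would lift the scalar identity to the operator level. By the spectral theorem, write $H=\int\lambda\,dE_\lambda$ and $(z-H)^{-1}=\int(z-\lambda)^{-1}\,dE_\lambda$. For any $\phi,\psi$ in the Hilbert space, the finite complex measure $\mu_{\phi,\psi}(d\lambda):=\langle\phi,dE_\lambda\psi\rangle$ satisfies $|\mu_{\phi,\psi}|\leq \|\phi\|\|\psi\|$. Using the integrability established in step one and Fubini's theorem,
\begin{equation*}
\left\langle\phi,\tfrac{1}{2\pi i}\int_\CC\partial_{\bar z}\tilde f(z)(z-H)^{-1}dz\wedge d\bar z\,\psi\right\rangle=\int_\RR\left(\tfrac{1}{2\pi i}\int_\CC\tfrac{\partial_{\bar z}\tilde f(z)}{z-\lambda}dz\wedge d\bar z\right)d\mu_{\phi,\psi}(\lambda),
\end{equation*}
and the inner integral equals $f(\lambda)$ by step two, so the whole expression equals $\langle\phi,f(H)\psi\rangle$. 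As $\phi,\psi$ are arbitrary, \eqref{eq: HS_formula} follows.

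The main technical obstacle is the delicate handling of the singularity of $(z-\lambda)^{-1}$ at $\lambda\in\RR$: one must simultaneously justify Stokes on a punctured domain, control the boundary contribution as $\epsilon\downarrow 0$, and apply dominated convergence using the $|\Im z|^K$ vanishing of $\partial_{\bar z}\tilde f$. Everything else reduces to bookkeeping with the spectral theorem and Fubini, where the $|\Im z|^{-1}$ resolvent bound for self-adjoint $H$ is what makes $K\geq 1$ sufficient.
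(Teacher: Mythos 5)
Your argument is correct and is precisely the standard proof of the Helffer--Sj\"ostrand formula (integrability from the self-adjoint resolvent bound combined with $|\partial_{\bar z}\tilde f|\lesssim|\Im z|^K$, the scalar Cauchy--Pompeiu identity via Stokes on a punctured domain, and the lift to operators through the spectral measures $\mu_{\phi,\psi}$ and Fubini). The paper itself omits the proof entirely and defers to its references \cite[Theorem 8.1]{D-S} and \cite[Sec.\ 4.1]{AJ06}, which contain essentially the argument you wrote; your signs and normalization (recall $dz\wedge d\bar z=-2i\,dx\,dy$) are consistent with \eqref{eq: HS_formula}.
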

In particular, for $f\in C_c^{K+1}(\Lambda_{n,B,\msV})$, define $f_0(x) = f(x +\lambda_{n,B})$ a function localized around zero. By Remark \ref{rmk: shift_to_n}, \eqref{eq:Unitary} and \eqref{eq: HS_formula}, we have
\begin{equation}
  \label{eq: HSresolvent}
  \begin{split}
    \msU f(\msH^{\theta}){\msU}^{-1} = \msU f_0(\msH_n^\theta) \msU &= -\frac{i}{2\pi}\int_{\mathbb C}\partial_{\bar z}\tilde f_0(z)(z-\msU\msH^{\theta}_{n}\msU)^{-1} ~dz\wedge d\bar{z}\\
    &= \frac{i\sqrt{h}}{2\pi}\int_{\mathbb C}\partial_{\bar z}\tilde f_0(z)(\msG_{n}^\theta - \sqrt{h}z)^{-1} ~dz\wedge d\bar{z}.
  \end{split}
\end{equation}
Thus to study $f(\msH^\theta)$, it is enough to study the resolvent $(\msG_{n}^\theta - \sqrt{h}z)^{-1}$.

\subsection{Second reduction: Grushin problem}
\label{subsec: Grushin problem} 
In this subsection, we apply the Schur complement formula twice for operators $\msG_\on^\theta$ and $\msG_n^\theta$ to characterize $(\msG_{n}^\theta-\sqrt{h}z)^{-1}$ using the effective Hamiltonian. In our context, the Schur complement formula is also called a Grushin problem and we shall use that terminology in the sequel. See \cite{SZ07} for more information on Grushin problem.

\smallsection{Unperturbed Grushin problem} To set up our Grushin problem, we introduce the space $B_{x_1}^k:=B^k(\RR_{x_1};\CC^4) := (1+D_{x_1}^2+x_1^2)^{-k/2}L^2(\RR_{x_1};\CC^4)$. Then \[\msG_{0,n}^\theta, \msG_n^\theta: \ B_{x_1}^{k+1}\otimes L^2(\RR_{x_2};\CC) \to B^k_{x_1} \otimes L^2(\RR_{x_2};\CC)\subset L^2(\RR_x^2;\CC^4)\] are bounded. Define
$R_n^+ = R_n^+(\theta):B^k_{x_1}\otimes  L^2(\RR_{x_2};\CC) \rightarrow L^2(\RR_{x_2};\CC^2)$ and $R_n^- = R_n^-(\theta):L^2(\RR_{x_2};\CC^2)\rightarrow B^k_{x_1}\otimes  L^2(\RR_{x_2};\CC)$ by
\begin{equation}
\label{eq:R_pm}
(R_n^+ t)(x_2)  = \int_{\RR} K_n^\theta(x_1)^* t(x_1,x_2) \ dx_1 \text{ and }R_{n}^- (s)(x) =K_n^\theta(x_1) s(x_2)
\end{equation}
with  
\begin{equation}
  \label{eq: Knt}
  K_n^\theta (x_1) = \begin{pmatrix} u_n^\theta(x_1) & 0 \\ 0 & u_n^{-\theta}(x_1) \end{pmatrix}_{4\times 2}.
\end{equation}
 Then $(R_n^+)^*=R_n^-$. 
 
%  as a symbol, $R_n^+ \in S(\RR^2_{x_2,\xi_2};\mathcal{L}(B^k;\CC^2))$ and $R_n^-\in S(\RR^2_{x_2,\xi_2};\mathcal{L}(\CC^2;B^k))$
% respectively, 
% Furthermore, $\msG^\theta = \msG_{0}^\theta+\sqrt{h}\tilde{\msV}^w \in S(\RR^2_{x_2,\xi_2};\mathcal{L}(B^{k+1};B^k))$, $\text{ for all } k\in\ZZ$. 

First, we consider the Grushin problem for the unperturbed operator $\mathscr G_\on^\theta-\sqrt h z$:
\begin{lemm}[Unperturbed Grushin]
  \label{lemma:unperturbed Grushin}
  Fix $n\in \ZZ$. Let $R_n^{+}$ and $R_n^{-}$ be defined as \eqref{eq:R_pm}. Let
   \begin{equation}
   \mathcal P_\on = \mathcal P_\on(z;h,\theta) :=  \begin{pmatrix} \msG_\on^\theta-\sqrt{h}z &R_n^- \\R_n^+ & 0
   \end{pmatrix}.
  \end{equation}
Then $\mathcal P_\on$ is invertible iff $\sqrt{h}z\notin \{\lambda_m - \lambda_n : m\neq n\}$, and the inverse is
   \begin{equation}
   \label{eq: ME_on}
  \begin{split}
  \mathcal{E}_\on :=  (\mathcal P_\on)^{-1} =: \begin{pmatrix} E_{0,n} &E_{0,n,+} \\E_{0,n,-} & E_{0,n,{\pm}}
   \end{pmatrix}
  \end{split}
  \end{equation}
  where $E_{0,n,+} =R_n^-$, $E_{0,n,-}=R_n^{+}$, $E_{0,n,\pm}(z;h)= \sqrt{h}z \indic_{\CC^{2 \times 2}}$ and
  \begin{equation}
    \label{eq: ent}
    E_\on^\theta(z;h) = \sum_{m\neq n} \frac{K_m^\theta (K_m^\theta)^*}{\lambda_m - \lambda_n -\sqrt{h}z} = \sum_{m\neq n} \frac{\begin{pmatrix}
      u_m^\theta(u_m^\theta)^* & 0 \\ 0 & u_m^{-\theta}(u_m^{-\theta})^*
    \end{pmatrix}}{\lambda_m - \lambda_n -\sqrt{h}z}  =: \begin{pmatrix}
      \ent & 0 \\ 0 & \ennt
    \end{pmatrix}
  \end{equation}
  with $\lambda_n = \sgn(n)\sqrt{2|n|}, n\in\ZZ$. Furthermore, we have
  \[
  \begin{split}
      &E_{\on,-} (\msG_{0,n}^\theta - \sqrt{h}z) E_{\on, +} = -E_{\on,\pm} \quad \text{and}\\
      & (\msG_\on^\theta-\sqrt{h}z)^{-1} = E_\on - E_{\on,+} (E_{\on,\pm})^{-1} E_{\on,-}.
  \end{split}
     \]
\end{lemm}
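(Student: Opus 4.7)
The plan is to exploit the spectral decomposition of $\msG_0^\theta$ from Lemma~\ref{lemma: eigenspace} to reduce the Grushin system to a family of scalar equations indexed by Landau levels. Since $\{u_m^{\pm\theta}\}_{m\in\ZZ}$ form orthonormal bases of $L^2(\RR_{x_1};\CC^2)$, every $u\in L^2(\RR_x^2;\CC^4)$ admits the fibrewise expansion
\[
u(x_1,x_2) = \sum_{m\in\ZZ} K_m^\theta(x_1)\,s_m(x_2), \qquad s_m = R_m^+ u \in L^2(\RR_{x_2};\CC^2).
\]
Together with $\msG_0^\theta K_m^\theta = \lambda_m K_m^\theta$ and the orthogonality relation $R_m^+ R_n^- = \delta_{mn}\,\indic_{\CC^{2\times 2}}$, this is the only structural input I would need.

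I would then project the Grushin system $\mathcal{P}_{0,n} \begin{pmatrix} u \\ u_+ \end{pmatrix} = \begin{pmatrix} v \\ v_+ \end{pmatrix}$ onto each Landau sector by applying $R_m^+$ to the first equation. Writing $v = \sum_m K_m^\theta t_m$, the projection in sector $m\neq n$ gives the uncoupled equation $(\lambda_m - \lambda_n - \sqrt{h}z)\,s_m = t_m$, uniquely solvable iff $\sqrt{h}z \neq \lambda_m - \lambda_n$; this is precisely the claimed invertibility condition. In sector $m=n$ the same projection yields $-\sqrt{h}z\,s_n + u_+ = t_n$, while the second Grushin equation $R_n^+ u = v_+$ forces $s_n = v_+$, whence $u_+ = t_n + \sqrt{h}z\,v_+$. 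Reassembling $u = \sum_m K_m^\theta s_m$ and reading off the block structure reproduces exactly the stated $\mathcal{E}_{0,n}$: the sum over $m\neq n$ is the operator $E_{0,n}^\theta$, the contribution $K_n^\theta v_+$ identifies $E_{0,n,+} = R_n^-$, and $u_+ = R_n^+ v + \sqrt{h}z\,v_+$ identifies $E_{0,n,-} = R_n^+$ and $E_{0,n,\pm} = \sqrt{h}z\,\indic_{\CC^{2\times 2}}$.

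The two final identities are standard Schur complement consequences of $\mathcal{E}_{0,n}\mathcal{P}_{0,n} = \indic$. Reading off the bottom-right block gives $E_{0,n,-}(\msG_{0,n}^\theta - \sqrt{h}z)E_{0,n,+} = -E_{0,n,\pm}$ (equivalently, the direct computation $R_n^+(\msG_{0,n}^\theta - \sqrt{h}z)R_n^- = -\sqrt{h}z\,\indic$), and the resolvent formula $(\msG_{0,n}^\theta - \sqrt{h}z)^{-1} = E_{0,n} - E_{0,n,+}(E_{0,n,\pm})^{-1}E_{0,n,-}$ is the usual block-inversion identity, valid whenever $E_{0,n,\pm} = \sqrt{h}z\,\indic$ is invertible --- equivalently $z\neq 0$, which together with the Grushin invertibility criterion is precisely $\sqrt{h}z\notin\Spec(\msG_{0,n}^\theta) = \{\lambda_m - \lambda_n : m\in\ZZ\}$.

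There is essentially no analytic obstacle here: the Landau decomposition decouples the problem into equations that are scalar in each sector (and $2\times 2$-diagonal in the internal $\pm\theta$ index of $K_m^\theta$), and the inversion in each sector is explicit. The only bookkeeping concern is to keep track of the $\CC^2$-valued fibre structure over $L^2(\RR_{x_2})$, but this is inert in the computation since the coefficients $(\lambda_m - \lambda_n - \sqrt{h}z)^{-1}$ act as scalars in each sector.
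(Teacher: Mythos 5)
Your proof is correct, and it is exactly the verification the paper leaves implicit: the lemma is stated without proof there, and the remark immediately following it (that $E_{0,n}^\theta$ annihilates $N_n^\theta$ and acts by $(\lambda_m-\lambda_n-\sqrt{h}z)^{-1}$ on $N_m^\theta$) points at precisely your Landau-sector decomposition. The only step you wave at — that the inverse respects the $B_{x_1}^{k}\to B_{x_1}^{k+1}$ mapping, which holds because the gain $\|u_m\|_{B^{k+1}}/\|u_m\|_{B^{k}}\sim|\lambda_m|$ is absorbed by the denominators $\lambda_m-\lambda_n-\sqrt{h}z$ — is routine and is not addressed in the paper either.
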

\begin{rem}
One can verify that $E_{0,n}^\theta$ maps $N_n^\theta$ to $0$ and $N_m^\theta$ to $\frac{N_m^\theta}{\lambda_m - \lambda_n - \sqrt{h}z}$ if $m\neq n$.
\end{rem}
\smallsection{Perturbed Grushin problem}
Next, we consider the perturbed Grushin problem for $\mathscr G_{n}^\theta - \sqrt{h}z$.
\begin{lemm}[Perturbed Grushin]
  \label{lemm:perturbed Grushin}
 Let $R_n^{\pm}$, $\msW^W$ be defined as \eqref{eq:R_pm}, \eqref{eq: Unitary_transform_for_multiplication}. Let
  \begin{equation}
    \mathcal P_{n} = \mathcal P_{n}(z;h,\theta) :=  \begin{pmatrix}\msG_n^\theta -\sqrt{h}z &R_n^- \\R_n^+ & 0
  \end{pmatrix} = \mathcal{P}_\on + \sqrt{h} \mathcal{W}
  \end{equation}
  where $ \mathcal W:=  \operatorname{diag}(\msW^W_{4 \times 4}, 0_{2\times 2}).$ Fix $n\in \ZZ$, there exist $h_0 = \min\left\{ \frac{1}{2\Vert \msW\Vert_\infty}, \frac{\lambda_{|n|+1} - \lambda_{|n|}}{4\Vert \msW\Vert_\infty}\right\}$, s.t. for all $ h\in [0,h_0)$, $\mathcal{P}_{n}$ is invertible with inverse
   \begin{equation}
  \begin{split}
  \label{eq:Grushin}
   \mathcal E_{n}:= (\mathcal{P}_{n})^{-1} =: \begin{pmatrix} E_{n} & \Ep \\ \Em &  \Epm
   \end{pmatrix}
  \end{split}
  \end{equation}
which is analytic in $|z|\leq 2\Vert \msW\Vert_\infty$.  $\Epm(z):L^2(\RR_{x_2};\CC^2)\to L^2(\RR_{x_2};\CC^2)$ is called the effective Hamiltonian and satisfy
\begin{equation}
    \label{eq: epm}
        \Epm(z) = \sqrt{h} \left(z - R_n^+ \msW^W (\indic + \sqrt{h}E_{0,n}\msW^W)^{-1} R_n^-\right)=: \sqrt{h}(z - Z^W).
\end{equation}
In addition, we have 
  \begin{align}
    \label{eq: linear_in_z}
      &\Em (\mathscr G_{n}^\theta - \sqrt{h}z)\Ep  = -\Epm \Rightarrow \sqrt{h}\Em  \Ep  = \partial_z \Epm,\\
      \label{eq:Reso Id}
        &(\mathscr G_{n}^\theta - \sqrt{h}z)^{-1} = E_{n}- \Ep  E_{n,{\pm}}^{-1} \Em.
  \end{align}
  \end{lemm}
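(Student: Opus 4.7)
The plan is to view $\mathcal P_{n}$ as the perturbation $\mathcal P_{0,n}+\sqrt h\,\mathcal W$ of the unperturbed Grushin problem from Lemma~\ref{lemma:unperturbed Grushin} and invert it by a Neumann series. Concretely, I will write
\[
  \mathcal P_{n} \;=\; \mathcal P_{0,n}\bigl(I + \sqrt h\, \mathcal E_{0,n}\mathcal W\bigr),
\]
so that $\mathcal E_{n} = (I+\sqrt h\,\mathcal E_{0,n}\mathcal W)^{-1}\mathcal E_{0,n}$ once the inverse in parentheses is shown to exist on the disk $|z|\le 2\Vert\msW\Vert_\infty$; analyticity of $\mathcal E_{n}$ in $z$ there is automatic from the Neumann series.

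The key analytic input is a norm bound on $E_{0,n}(z)$. From the spectral representation \eqref{eq: ent} and orthogonality of the eigenspaces $N_m^\theta$, one gets $\Vert E_{0,n}(z)\Vert \le \sup_{m\neq n}|\lambda_m-\lambda_n-\sqrt h z|^{-1}$. For $|z|\le 2\Vert\msW\Vert_\infty$ and $h<h_0$ as stated, the condition $\sqrt h|z|\le\tfrac12(\lambda_{|n|+1}-\lambda_{|n|})$ holds, so this sup is bounded by $2/(\lambda_{|n|+1}-\lambda_{|n|})$. The maps $R_n^\pm$ are uniformly bounded because they are built from the normalized Landau eigenfunctions $u_m^\theta$, and $\Vert\msW^W\Vert$ is controlled by $\Vert\msW\Vert_\infty$ via Calder\'on--Vaillancourt. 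Together these yield $\sqrt h\,\Vert\mathcal E_{0,n}\mathcal W\Vert<1$, so the Neumann series converges.

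To extract the formula \eqref{eq: epm} for the effective Hamiltonian, I use the block structure $\mathcal W=\operatorname{diag}(\msW^W,0)$ to write
\[
  \mathcal E_{0,n}\mathcal W \;=\; \begin{pmatrix} E_{0,n}\msW^W & 0 \\ R_n^+\msW^W & 0\end{pmatrix},
\]
so that $I+\sqrt h\,\mathcal E_{0,n}\mathcal W$ is block lower triangular with explicit inverse. Multiplying this inverse by $\mathcal E_{0,n}$ and reading off the $(2,2)$ block gives
\[
  E_{n,\pm} \;=\; E_{0,n,\pm} - \sqrt h\, R_n^+\msW^W(I+\sqrt h\, E_{0,n}\msW^W)^{-1}R_n^-,
\]
and since $E_{0,n,\pm}=\sqrt h\,z\,I$ by Lemma~\ref{lemma:unperturbed Grushin}, this is exactly \eqref{eq: epm}.

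Finally, \eqref{eq: linear_in_z} and \eqref{eq:Reso Id} are purely algebraic consequences of $\mathcal P_{n}\mathcal E_{n}=\mathcal E_{n}\mathcal P_{n}=I$. Reading the block identities gives $R_n^+ E_{n,+}=I$, $E_{n,-}R_n^-=I$, $R_n^+ E_n=0$, $E_n R_n^-=0$, $(\msG_n^\theta-\sqrt h z)E_{n,+}=-R_n^- E_{n,\pm}$, and $E_{n,-}(\msG_n^\theta-\sqrt h z)=-E_{n,\pm}R_n^+$. Right-multiplying the last by $E_{n,+}$ yields the first half of \eqref{eq: linear_in_z}; differentiating that identity in $z$ and using the annihilation relations $R_n^+\partial_z E_{n,+}=0$ and $(\partial_z E_{n,-})R_n^-=0$ (obtained by differentiating $R_n^+ E_{n,+}=I$ and $E_{n,-}R_n^-=I$) cancels the boundary terms and leaves $\sqrt h\, E_{n,-}E_{n,+}=\partial_z E_{n,\pm}$. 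The resolvent identity \eqref{eq:Reso Id} is a one-line check: $(\msG_n^\theta-\sqrt h z)(E_n-E_{n,+}E_{n,\pm}^{-1}E_{n,-})=I-R_n^- E_{n,-}+R_n^- E_{n,\pm}E_{n,\pm}^{-1}E_{n,-}=I$. The only genuine subtlety is securing the uniform operator bounds on $\msW^W$ and $E_{0,n}(z)$ in the stated disk; once those are in hand the remainder of the argument is elementary Schur complement algebra.
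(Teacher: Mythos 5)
Your proposal is correct and follows essentially the same route as the paper: factor $\mathcal P_n=\mathcal P_{0,n}(I+\sqrt h\,\mathcal E_{0,n}\mathcal W)$, invert by Neumann series on the disk $|z|\le 2\Vert\msW\Vert_\infty$ using the spectral-gap bound on $E_{0,n}$, read off the $(2,2)$ block to get \eqref{eq: epm}, and derive \eqref{eq: linear_in_z} and \eqref{eq:Reso Id} from the block identities of $\mathcal P_n\mathcal E_n=\mathcal E_n\mathcal P_n=I$. Your observation that $I+\sqrt h\,\mathcal E_{0,n}\mathcal W$ is block lower triangular gives the closed form for $E_{n,\pm}$ slightly more directly than the paper's term-by-term resummation, and your spelled-out Schur-complement algebra fills in details the paper leaves implicit; both are equivalent.
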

  \begin{proof}
    Let $h_0$ be defined as above. When $h\in [0,h_0)$, $|z|<2\Vert \msW\Vert_\infty$, we have 
    \[
    \begin{cases}
      \sqrt{h}z \notin \{\lambda_m - \lambda_n:m \neq n\} &\Rightarrow \mathcal P_\on \text{~ is ~ invertible ~ with~} \Vert \mathcal P_\on \Vert\geq 1.\\
      \sqrt{h}\Vert \mathcal W\Vert_\infty \leq \frac{1}{2} &\Rightarrow \mathcal P_n = \mathcal P_\on + \sqrt{h} \mathcal W \text{~ is ~ invertible~ with~inverse~} \mathcal E_n.\\
      |\sqrt{h}z|\leq \frac{\lambda_{|n|+1} - \lambda_{|n|}}{2} &\Rightarrow \text{~by~} \eqref{eq: ent} \text{~and~} \eqref{eq: ME_on}, E_\on(z) \text{~and~} \mathcal E_\on(z) \text{~are~ analytic}.
    \end{cases}
    \]
Furthermore, 
\[
   \mathcal E_n := \mathcal{P}_{n}^{-1}  = (I + \sqrt{h}\mathcal P_\on^{-1}\mathcal W)^{-1}\mathcal P_\on^{-1} =  \sum_{j=0}^{\infty} (-1)^j h^{j/2}\left(\mathcal{E}_\on \mathcal W \right)^j \mathcal{E}_\on.\]
In particular, we get from the $(2,2)$-block of $\mathcal P_n^{-1}$ that
 \begin{equation}
 \begin{split}
 \Epm(z) &= E_{0,n,\pm}(z) + \sum_{j=1}^{\infty} (-1)^j h^{j/2} E_{0,n,-} \msW^W (E_\on \msW^W)^{j-1} E_{0,n,+}\\
 &= \sqrt{h}z - \sqrt{h} R_n^+ \msW^W (\indic + \sqrt{h} E_\on\msW^W)^{-1} R_n^-.
 \end{split}
 \end{equation}
 In fact, by direct computation, one get that $E_\on$, $E_{n,+}$ and $E_{n,-}$ can all be represented by entries of $\mathcal E_\on$ which we proved are analytic, thus $\mathcal E_n(z)$ is also analytic. 
 
 In the end, \eqref{eq: linear_in_z} and \eqref{eq:Reso Id} follows from $\mathcal E_n \mathcal P_n \mathcal E_n = \mathcal P_n$ and the diagonalization on $\mathcal P_n$.
  \end{proof}
\subsection{Properties of effective Hamiltonian}
\label{ss: properties_of_effective_Hamiltonian}

In this subsection, we proceed with our study of $\Epm(z)$, $\Epm^{-1}(z)$ and $\partial_z \Epm \circ \Epm^{-1}$, with their symbols denoted by $\Epm(x_2,\xi_2;z,h)$, $\Epm^{-1}(x_2,\xi_2;z,h)$ and $r_n(x_2,\xi_2;z,h) := \partial_z \Epm \# \Epm^{-1}(x_2,\xi_2;z,h)$. Apart from analyzing boundedness and asymptotic expansions of symbols, we are especially interested in understanding the $z$-dependence and $z$ vs. $h$ competition of the symbols.

Before starting to analyze these properties, we introduce a key concept of this section: the operator-valued symbol and its quantization.

\smallsection{Operator-valued symbol}
Let $b^w(x_2,\xi_2;x_1,D_{x_1})\in S\left(\RR_{x_2,\xi_2}^2;\mathcal{L}(B_{x_1}^{k+1};B_{x_1}^k)\right)$, which we shall call an operator(-in-$(x_1,D_{x_1})$)-valued symbol (in $(x_2,\xi_2)$), then its $(x_2,hD_{x_2})$-Weyl quantization is defined as  $b^W(x_2,hD_{x_2};x_1,D_{x_1}): L^2(\RR_{x_2};B_{x_1}^{k+1}) \to L^2(\RR_{x_2};B_{x_1}^k)$ such that
\[
   \left( b^W(x_2,hD_{x_2};x_1,D_{x_1})u\right)(x) = \int e^{\frac{i(x_2 - y_2) \xi_2}{h}} \left(b^w\left(\frac{x_2+y_2}{2},\xi_2;x_1,D_{x_1}\right)u\right)(x_1;y_2) \frac{dy_2d\xi_2}{2\pi h}.
\]
In particular, if we have a symbol $a\in S(\RR_{x,\xi}^4)$, and we view $(x_2,\xi_2)$ as parameters and consider the $(x_1,D_{x_1})$-Weyl quantization of it,  we get $a^w(x,D_{x_1},\xi_2)$ which is an operator-valued symbol in $(x_2,\xi_2)$ (the superscript $w$ represent the $(x_1,D_{x_1})$-Weyl quantization). If we do a further $(x_2,hD_{x_2})$-Weyl quantization of $a^w(x,D_{x_1},\xi_2)$, then we get the $(1,h)$-Weyl quantization defined in \eqref{eq: quantization}. 

\begin{rem}
    \label{rmk: operator_valued_symbol_example}
    For the rest of this section, given an operator, e.g. $\msG_0^\theta$, $\Epm$ and $\msW^W$ in \eqref{eq:Unitary}, \eqref{eq:Grushin} and \eqref{eq: Unitary_transform_for_multiplication}, instead of viewing them as the $(1,h)$-Weyl quantization of the scalar-valued symbol in $S(\RR_{x,\xi}^4)$, we will view them as the $(x_2,hD_{x_2})$-Weyl quantization of the operator-valued symbol in $S\left(\RR_{x_2,\xi_2}^2;\mathcal{L}(B_{x_1}^{k_1};B_{x_1}^{k_2})\right)$, for appropriate $k_1,k_2\in\ZZ$. 
    
    In particular, since $\msG_0^\theta$ only depends on $(x_1,D_{x_1})$, $\Epm$ only depends on $(x_2,hD_{x_2})$, $\msW^W(x,D_{x})$ is the $(1,h)$-Weyl quantization of the symbol $\msV(x_2+\sqrt{h}x_1,h\xi_2-\sqrt{h}\xi_1)$, we see that the operator-valued symbol of $\msG_0^\theta$, $\Epm$ and $\msW^W$ are respectively 
    \[\Sigma_1^\theta x_1+ \Sigma_2^\theta D_{x_1}, \  \Epm(x_2,\xi_2;z,h), \text{~and~} \tilde{\msV}^w(x,D_{x_1},\xi_2) :=  \msV^w(x_2+\sqrt{h}x_1,\xi_2 - \sqrt{h}D_{x_1})\]
    where $\tilde{f}(x,\xi) = f(x_2 + \sqrt{h}x_1,\xi_2 - \sqrt{h} \xi_1)$. And since now
    \begin{equation}
        \label{eq: W_to_tilde_V}
        \msU \msV(x) \msU^{-1} =  \msW^W(x,D_x) = \tilde{\msV}^W(x,D_{x_1},hD_{x_2}),
    \end{equation}
    we will use $\tilde\msV^W$ to replace $\msW^W$ in Lemma 
  \ref{lemma: Unitary} and \ref{lemm:perturbed Grushin}. Finally, we mention that the proof of Lemma 
  \ref{lemma: Unitary} implies in general 
  \begin{equation}
      \label{eq: f_unitary_to_tilde_f}
      \msU f(x) \msU^{-1} = \tilde{f}^W(x,D_{x_1},hD_{x_2}).
  \end{equation}
\end{rem}

\smallsection{Boundedness with $z$ dependence} We now study the boundedness of the operator-valued symbol $\Epm$, $\Epm^{-1}$ and $r_n$ as well as the $z$ dependence of them. 

Notice that since $\Epm$ only depends on $(x_2,hD_{x_2})$, when viewed as a $(x_2,hD_{x_2})$-Weyl quantization, its operator-in-$(x_1,D_{x_1})$-valued symbol coincides with its $\CC_{2\times 2}$-valued symbol. For convenience, we write $S_\delta^k(\RR_{x_2,\xi_2}^2;\CC_{2\times 2})$ as $S_\delta^k$ and omit the ``$0$'' in $\delta$ and $k$.

\begin{lemm}[Boundedness]
  \label{lemma: Properties of r} 
  Let $h_0$, $\Epm$ be as in Lemma \ref{lemm:perturbed Grushin}. Then for all $ h\in [0,h_0)$, we have the symbol of $\Epm$,  $\Epm(x_2,\xi_2;z,h)$, belongs to $S^{-\frac{1}{2}}$ uniformly in $|z|\leq 2\Vert \msV\Vert_\infty$, i.e. for any $\alpha, \beta>0$, there is $C_{\alpha, \beta,n} = C_{\alpha,\beta, n}(\Vert \msV\Vert_\infty)$,  s.t.
  \[
  \sup\limits_{(x_2,\xi_2)\in\RR^2}\Vert\partial_{x_2}^\alpha\partial_{\xi_2}^\beta \Epm(x_2,\xi_2;z,h)\Vert_{\CC_{2\times 2}}\leq C_{\alpha,\beta,n}\sqrt{h}, \quad \text{ for all } |z|\leq 2\Vert \msV\Vert_\infty.
  \]
  Furthermore, if $|\Im z|\neq 0$, then we  also have that for all $ h\in [0,h_0)$, $|z|\leq 2\Vert \msV\Vert_\infty$, $\alpha, \beta>0$,
    \begin{align}
    \label{eq: bound on Epm inv}
     &\Vert\partial_{x_2}^\alpha\partial_{\xi_2}^\beta \Epm^{-1}(x_2,\xi_2;z,h)\Vert_{\CC_{2\times 2}}\leq C_{\alpha,\beta,n}\max\left(1, \frac{h^{3/2}}{|\Im z|^3}\right) h^{-\frac{1}{2}} |\Im z|^{-(|\alpha|+|\beta|)-1},\\
    \label{eq: bound on r}
    &\Vert\partial_{x_2}^\alpha\partial_{\xi_2}^\beta r_n(x_2,\xi_2;z,h)\Vert_{\CC_{2\times 2}}\leq C_{\alpha,\beta,n}\max\left(1, \frac{h^{3/2}}{|\Im z|^3}\right) |\Im z|^{-(|\alpha|+|\beta|)-1}.
   \end{align}
In particular, if $0<\delta <\frac{1}{2}$ and $|\Im z|\geq h^\delta$, then $\Epm^{-1} \in S_\delta^{\frac{1}{2}+\delta}$ and $r_n\in S_\delta^\delta$.
\end{lemm}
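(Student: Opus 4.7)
The overall plan is to exploit the explicit Grushin formula $\Epm(z) = \sqrt{h}(z - Z^W)$ from \eqref{eq: epm}, combined with the semiclassical $(x_2, hD_{x_2})$-Weyl calculus discussed in Subsection \ref{ss: properties_of_effective_Hamiltonian}, and to derive the three statements in turn.

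\emph{Step 1: $\Epm \in S^{-1/2}$.} The Neumann series $(\indic + \sqrt{h}\,E_{\on}\tilde\msV^W)^{-1} = \sum_{j\ge 0}(-\sqrt h\,E_\on \tilde\msV^W)^j$ converges in operator norm for $h\in[0,h_0)$ by Lemma \ref{lemm:perturbed Grushin}, so $Z^W = R_n^+\tilde\msV^W \sum_{j\ge 0}(-\sqrt h\,E_\on\tilde\msV^W)^j R_n^-$. The symbol $\tilde{\msV}(x,\xi) = \msV(x_2 + \sqrt h\,x_1,\, h\xi_2 - \sqrt h\,\xi_1)$ has all $(x_2,\xi_2)$-derivatives bounded uniformly in $h$ by those of $\msV\in C_b^\infty$, while $R_n^\pm$ act as the fixed, $(x_2,\xi_2)$-independent profile $K_n^\theta$ in the $x_1$-variable. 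Applying the composition calculus for operator-in-$(x_1,D_{x_1})$-valued symbols, each term yields an operator with $\CC^{2\times 2}$-valued symbol in $S^0(\RR_{x_2,\xi_2}^2)$; summing and multiplying by $\sqrt h$ places $\Epm$ in $S^{-1/2}$ uniformly in $|z|\le 2\|\msV\|_\infty$.

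\emph{Step 2: Inversion.} The self-adjointness of $\msG_n^\theta$ together with $(R_n^+)^* = R_n^-$ makes the Grushin matrix $\mathcal{P}_n$ self-adjoint at real $z$, hence $(Z^W)^* = Z^W$ and its principal symbol $Z_0(x_2,\xi_2)$ is Hermitian, giving $\|(z - Z_0)^{-1}\|\le |\Im z|^{-1}$. Iterating $\partial(z - Z_0)^{-1} = (z - Z_0)^{-1}\,\partial Z_0\,(z - Z_0)^{-1}$ yields $|\partial^\alpha(z - Z_0)^{-1}|\le C_\alpha |\Im z|^{-|\alpha|-1}$. Taking the pointwise matrix inverse $P_0(x_2,\xi_2;z,h) := h^{-1/2}(z - Z_0(x_2,\xi_2))^{-1}$ as a first parametrix, the Moyal-product expansion produces $\Epm \# P_0 = \indic + R_0$ with
\[ \|\partial^\alpha R_0\| \le C_\alpha\, h\,|\Im z|^{-|\alpha|-2}, \]
the factor $h$ arising both from the subprincipal piece of $\Epm$ and from the $\tfrac{h}{2i}\{\cdot,\cdot\}$-term in the Moyal expansion. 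A Neumann series $\Epm^{-1} = (\indic + R_0)^{-1}\# P_0$ in the semiclassical pseudodifferential algebra then produces the true inverse; the leading bound $h^{-1/2}|\Im z|^{-|\alpha|-1}$ comes from $P_0$ alone, while the first correction $P_0 \# R_0$ contributes an additional term of relative size $h^{3/2}|\Im z|^{-3}$, yielding precisely the $\max(1,\,h^{3/2}/|\Im z|^3)$-factor of \eqref{eq: bound on Epm inv}.

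\emph{Step 3: The operator $r_n$ and the $S_\delta$-statements.} Since $\partial_z\Epm = \sqrt{h}\,\indic_{\CC^{2\times 2}}$ is constant in $(x_2,\xi_2)$, the Moyal product degenerates and $r_n = \sqrt{h}\,\Epm^{-1}$, so \eqref{eq: bound on r} follows by multiplying \eqref{eq: bound on Epm inv} by $\sqrt h$. For the $S_\delta$-memberships we substitute $|\Im z|\ge h^\delta$ with $0 < \delta < 1/2$: the inequality $h^{1/2}\le |\Im z|$ forces $h^{3/2}/|\Im z|^3 \le 1$ so the $\max$-factor collapses, while $|\Im z|^{-(|\alpha|+|\beta|+1)}\le h^{-\delta(|\alpha|+|\beta|+1)}$ identifies $\Epm^{-1}\in S_\delta^{1/2+\delta}$ and $r_n\in S_\delta^\delta$. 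The central technical obstacle is the parametrix iteration: while the pointwise inverse of $z - Z_0$ is elementary, promoting it to an operator inverse of $\Epm$ requires carefully handling the noncommutative Moyal expansion of matrix-valued symbols and tracking the precise powers of $h$ and $|\Im z|$ through every correction in the regime where $\Epm$ becomes non-elliptic as $|\Im z|\to 0$.
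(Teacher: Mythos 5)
Your Step 1 is fine and essentially matches the paper (the paper gets $\mathcal E_n\in S$ from Beals's lemma applied to $\mathcal P_n^{-1}$ together with the uniform boundedness of $E_{0,n}$; your term-by-term Neumann series for $Z^W$ reaches the same conclusion). The problems are in Steps 2 and 3.

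\textbf{Main gap.} Your parametrix--Neumann-series construction of $\Epm^{-1}$ only converges in the elliptic regime $|\Im z|\gtrsim \sqrt h$. If $Z_0$ denotes the principal ($h$-leading) part of $Z$, then the subprincipal part of $z-Z$ is $O(\sqrt h)$ (not $O(h)$: $Z=z_{n,0}+\sqrt h\,z_{n,1}+\cdots$), so $R_0=O(\sqrt h\,|\Im z|^{-1})$; if instead you invert the full symbol $Z$ pointwise, the remainder is $O(h|\Im z|^{-2})$ from the Moyal corrections. Either way $\|R_0\|<1$ forces $|\Im z|\gg\sqrt h$, and the series diverges below that threshold. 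But the lemma asserts \eqref{eq: bound on Epm inv}--\eqref{eq: bound on r} for \emph{all} $\Im z\neq 0$, and this full range is actually used later: the estimate of the term $A_3$ in the proof of Theorems \ref{thm: chiral trace}--\ref{thm: anti-chiral trace} integrates \eqref{eq: bound on r} over $0<|\Im z|\le h^\delta$, all the way down to the real axis, compensating only with the $|\Im z|^K$ vanishing of $\partial_{\bar z}\tilde f$. The paper's mechanism for reaching $|\Im z|\to 0$ is entirely different: the a priori operator-norm bound $\|\sqrt h\,\Epm^{-1}\|_{\mathcal L(L^2)}=O(|\Im z|^{-1})$, which follows for all $\Im z\neq0$ from the resolvent identity \eqref{eq:Reso Id} and the self-adjointness of $\mathscr G_n^\theta$, is then upgraded to symbol bounds by iterating $\operatorname{ad}_{L_j}\Epm^{-1}=-\Epm^{-1}\circ\operatorname{ad}_{L_j}\Epm\circ\Epm^{-1}$ and invoking the parametrized Beals lemma. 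The factor $\max(1,h^{3/2}/|\Im z|^3)$ is the loss incurred in that operator-norm-to-symbol conversion; your explanation of it (a relative correction $P_0\# R_0$ of size $h^{3/2}|\Im z|^{-3}$) is not consistent with your own expansion — inside a convergent Neumann series such a correction would be absorbed into the constant rather than produce a $\max$ that blows up as $|\Im z|\to 0$.

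\textbf{Secondary error.} In Step 3 you claim $\partial_z\Epm=\sqrt h\,\indic$ and hence $r_n=\sqrt h\,\Epm^{-1}$. This is false: $Z$ depends on $z$ through $E_{0,n}(z)$, so $\partial_z\Epm=\sqrt h(\indic-\partial_zZ)$ with $\partial_zZ=O(h)$; correspondingly $\sqrt h\,E_{n,-}E_{n,+}=\partial_z\Epm\neq\sqrt h\,\indic$, and the expansion of $r_n$ in Lemma \ref{lemma: explicit_asymptotic_expansion_needed} contains the term $-(\partial_z z_{n,2})F_{n,0}$ precisely because of this. For the boundedness estimate the correction is harmless, but the identity you assert does not hold, and you should instead bound $\partial_z\Epm\in S^{-1/2}$ (as the paper does) and compose.
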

  \begin{proof}
  When $h\in [0,h_0)$, $|z|\leq 2\Vert \msV\Vert_\infty$, $\Epm$ is a $\Psi$DO because $\mathcal P_n$ is. In fact, by checking term by term, we have the operator-valued symbol $\mathcal P_n(x,D_{x_1},\xi_2)\in S(\RR_{x_2,\xi_2}^2;\mathcal{L}(B_{x_1}^{k+1} \times \CC^2;B_{x_1}^k\times \CC^2))$. By invertibility and Beal's lemma,  $\mathcal E_n(x_2,\xi_2;z,h)\in S(\RR_{x_2,\xi_2}^2;\mathcal{L}(B_{x_1}^{k+1} \times \CC^2;B_{x_1}^k\times \CC^2)).$ In particular, we have 
  \begin{equation}
      \label{eq: everything_in_S}
          \begin{cases}
        R_n^+\in S(\RR_{x_2,\xi_2}^2;\mathcal{L}(B_{x_1}^k;\CC^2)), \ R_n^- \in S(\RR_{x_2,\xi_2}^2;\mathcal{L}(\CC^2;B_{x_1}^{k+1})),\\
        E_\on \in S(\RR_{x_2,\xi_2}^2;\mathcal{L}(B_{x_1}^{k} ;B_{x_1}^{k+1})), \ \tilde{\msV}^w\in S(\RR_{x_2,\xi_2}^2;\mathcal{L}(B_{x_1}^{k+1} ;B_{x_1}^k)).
    \end{cases}
  \end{equation}
Furthermore, by \eqref{eq: ent}, when $|\sqrt{h}z|\leq \frac{\lambda_{|n|+1} - \lambda_{|n|}}{2}$, $E_\on$ is uniformly bounded. Thus $\Epm, \partial_z \Epm \in S^{-\frac{1}{2}}$ uniformly.
  
  Then we consider $\Epm^{-1}$ and $r_n$. Let $l_1$, $l_2$, $\cdots$ be linear forms on $\RR_{x_2,\xi_2}^{2}$. Let $L_j = l_j(x_2,hD_{x_2})$. Since $\Epm\circ \Epm^{-1} = I$, we get
    \[
      \operatorname{ad}_{L_j}\Epm^{-1} = -\Epm^{-1} \circ \operatorname{ad}_{L_j}\Epm\circ \Epm^{-1}, 
    \]
    where $\operatorname{ad}_{L_j}A = [L_j, A]$. Since $\operatorname{ad}_{L_j}(A \circ B) = (\operatorname{ad}_{L_j}A)\circ B + A \circ \operatorname{ad}_{L_j}B$, thus 
    \[
      \operatorname{ad}_{L_j}(\partial_z \Epm\circ\Epm^{-1}) = -\partial_z\Epm \circ  \Epm^{-1}\circ \operatorname{ad}_{L_j}\Epm \circ \Epm^{-1} +  \operatorname{ad}_{L_j}\partial_z\Epm \circ \Epm^{-1}.
    \]
By \eqref{eq:Reso Id}, $\Vert \sqrt{h} \Epm^{-1}\Vert_{\CC_{2\times 2}}  = \mathcal O(|\Im z|^{-1})$. Recall that $\Epm$, $\partial_z \Epm\in S^{-\frac{1}{2}}$, thus 
\[
      \begin{split}
        &\Vert  \operatorname{ad}_{L_j}(\sqrt{h}\Epm^{-1})\Vert_{\CC_{2\times 2}} = \mathcal O\left(\frac{h}{|\Im z|^2}\right)\text{ and }\Vert  \operatorname{ad}_{L_j}(\partial_z\Epm \circ \Epm^{-1})\Vert_{\CC_{2\times 2}} = \mathcal O\left(\frac{h}{|\Im z|^2}\right).
      \end{split}
    \]
    By induction, 
    \[
      \begin{split}
      &\Vert  \operatorname{ad}_{L_1} \circ \cdots \circ  \operatorname{ad}_{L_N}(\sqrt{h} \Epm^{-1})\Vert_{\CC_{2\times 2}} = \mathcal O\left(\tfrac{h^N}{|\Im z|^{N+1}}\right)\\
      &\Vert  \operatorname{ad}_{L_1} \circ \cdots \circ  \operatorname{ad}_{L_N}(\partial_z\Epm \circ \Epm^{-1})\Vert_{\CC_{2\times 2}} = \mathcal O\left(\tfrac{h^N}{|\Im z|^{N+1}}\right).
      \end{split}
    \]
    By a parametrized version of Beal's lemma, \cite[Prop. 8.4]{DS99}, we get \eqref{eq: bound on Epm inv} and \eqref{eq: bound on r}. 
  \end{proof}
  
   \smallsection{Asymptotic Expansion with $z$ dependence} We proceed by discussing the asymptotic expansion of $\Epm$, $\Epm^{-1}$ and $r_n$. Again, we are concerned with $z$-dependence of each term in the asymptotic expansions. In order to focus on the main points, we outsource further details concerning the asymptotic expansion of $\Epm$ and $\Epm^{-1}$, c.f. Prop. \ref{prop: z_dependence_each_term}, and its proof in the Appendix \ref{sec:appendix1}, and present a shorter version here that only summarizes the results that we eventually need in the sequel.

\begin{lemm}[Asymptotic expansion]
  \label{lemma: explicit_asymptotic_expansion_needed}
  Let $h_0$, $\Epm$ be as in Lemma \ref{lemm:perturbed Grushin},   $0<\delta<1/2$. If $h\in [0,h_0)$, $|z|\leq 2\Vert \msV\Vert_\infty$, $|\Im z|\geq h^\delta$, then $r_n(x_2,\xi_2;z,h) = \partial_z \Epm \# \Epm^{-1}$ has an asymptotic expansion in $S_\delta^\delta$: 
  \begin{equation}
      \label{eq: r_n_on_z}
      \begin{split}
        &\quad r_n(x_2,\xi_2;z,h)\sim \sum\limits_{j =0}^\infty h^{\frac{j}{2}} r_{n,j}(x_2,\xi_2;z), \text{~with~} h^{\frac{j}{2}}r_{n,j} \in S_\delta^{(j+1)\delta - \frac{j}{2}}.
        \end{split}
  \end{equation}
  More specifically, there are $d_{n,j,k,l}(x_2,\xi_2;z)$, $e_{n,j,k,\alpha}(x_2,\xi_2)\in S$ s.t.
  \begin{equation}
      \label{eq: r_n_z_dependence}
        r_{n,j} = \sum\limits_{k = 0}^j(z - z_{n,0})^{-1} \prod\limits_{l = 0}^k \left[d_{n,j,k,l}(x_2,\xi_2;z)(z - z_{n,0})^{-1}\right],
  \end{equation}
  with $\prod\limits_{l = 0}^k d_{n,j,k,l}(x_2,\xi_2;z) = \sum\limits_{\alpha = 0}^{j+k -2} z^\alpha e_{n,j,k,\alpha}(x_2,\xi_2)$ and $z_{n,0}$ given in Prop.  \ref{prop: expofQk}. Let $R_{n,J} := r_n - \sum\limits_{j = 0}^{J - 1} h^{\frac{j}{2}}r_{n,j}$, then $R_{n,J} \in S_\delta^{(J+1)\delta - \frac{J}{2}}$, i.e. for all  $ \alpha, \beta>0$, there is $C_{\alpha, \beta,n}'$ s.t. 
  \begin{equation}
      \label{eq: R_J_symbol_class}
     \sup\limits_{(x_2,\xi_2)\in\RR^2} |\partial_{x_2}^\alpha\partial_{\xi_2}^\beta R_{n,J} |\leq C_{\alpha, \beta,n}' h^{\frac{J}{2} - (J+1)\delta -  \delta(|\alpha| + |\beta|)}.
  \end{equation}Furthermore, for the expansion of $\Tr_{\CC^2}(r_n)$, we have for $\eta = x_2 + i\xi_2$,
  \begin{equation}
    \label{eq: explicite_expansion_of_trace}
    \begin{split}
      \operatorname{Chiral} \ \msH_{\ch, n} (J = 3) : & \Tr_{\CC^2}(r_{\ch, n,0} + h^{\frac{1}{2}}r_{\ch, n,1} + hr_{\ch,n,2}) = \frac{2}{z} + 0 + \frac{\lambda_n^2}{z^3}\mathfrak U(\eta)h, \\
      \operatorname{Anti-Chiral} \ \msH_{\ach,n}^\theta (J = 2): & \Tr_{\CC^2}(r_{\ach, n,0} + h^{\frac{1}{2}}r_{\ach, n,1}) = \frac{2z}{z^2 - c_n^2} + \frac{2s_n^2(z^2+c_n^2)}{(z^2 - c_n^2)^2}\sqrt{h} ,
    \end{split}  
  \end{equation}
  where $\mathfrak U(\eta) = \frac{\alpha_1^2}{8}\left[\alpha_1^2(|U_-(\eta)|^2 - |U(\eta)|^2)^2 + 4|\partial_{\bar{\eta}}\overline{U_-(\eta)}-\partial_\eta U(\eta)|^2\right]$, $\partial_\eta = \frac{1}{2}(\partial_{x_2} - i\partial_{\xi_2})$, $s_n(\eta) = \begin{cases}
    \alpha_0  \sin(\thot) |V(\eta)| & n\neq 0 \\
    \alpha_0 |V(\eta)| & n = 0,
  \end{cases}$ and $c_n(\eta) = \begin{cases}
    \alpha_0 \cos(\thot)|V(\eta)| & n \neq 0\\
    \alpha_0 |V(\eta)| & n = 0.
  \end{cases}$
\end{lemm}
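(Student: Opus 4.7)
The plan is to treat $r_n = \partial_z\Epm \,\#\, \Epm^{-1}$ as a Moyal-inverse problem for a single operator-valued $(x_2,hD_{x_2})$-quantization and to combine three nested asymptotic expansions in $\sqrt h$. Since $\Epm = \sqrt h (z - Z^W)$ by \eqref{eq: epm}, one has $\partial_z \Epm = \sqrt h\cdot\indic_{\CC^2}$, so $r_n = (z - Z^W)^{-1_\#}$ where $Z = R_n^+ \tilde\msV^w (I + \sqrt h E_\on \tilde\msV^w)^{-1} R_n^-$ (cf. Remark \ref{rmk: operator_valued_symbol_example} and \eqref{eq: W_to_tilde_V}). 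I would expand $Z$ in $\sqrt h$ by: (i) Taylor expanding the operator-valued symbol $\tilde\msV^w(x,D_{x_1},\xi_2)=\msV^w(x_2+\sqrt h x_1,\xi_2-\sqrt h D_{x_1})$ in $\sqrt h$, producing at each order a polynomial in $(x_1,D_{x_1})$ times derivatives of $\msV$ evaluated at $(x_2,\xi_2)$; (ii) expanding each resolvent factor in \eqref{eq: ent} as $(\lambda_m-\lambda_n-\sqrt h z)^{-1}=\sum_{k\ge 0} (\sqrt h z)^k(\lambda_m-\lambda_n)^{-k-1}$, which is licit since $\sqrt h |z|$ is smaller than the gap; and (iii) a Neumann expansion of $(I+\sqrt h E_\on\tilde\msV^w)^{-1}$, valid for $h<h_0$ by Lemma \ref{lemm:perturbed Grushin}. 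Collecting gives $Z\sim\sum_{j\ge 0} h^{j/2} Z_j(z;x_2,\xi_2)$ with $Z_j$ polynomial in $z$ of degree at most $j$ and coefficients in $S(\RR^2_{x_2,\xi_2})$.

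Applying the Moyal $\#$-inversion to $z - Z^W$ produces $\#$-corrections only in integer powers of $h$, so the full $\sqrt h$-expansion of $r_n$ is obtained by formally writing $(z - Z^W)^{-1_\#}$ as a Neumann expansion in $\sqrt h Z_1^W + h Z_2^W + \cdots$ around the principal symbol $z - z_{n,0}$. The $j$-th term is a sum, indexed by how many resolvent factors appear, of products $(z-z_{n,0})^{-1}\#\cdots\# (z-z_{n,0})^{-1}$ interleaved with $Z_l$'s and their $(x_2,\xi_2)$-derivatives from the Moyal brackets. Reducing these via the telescoping identity $(z - z_{n,0})(z - z_{n,0})^{-1} = 1$ to pull all $z$-monomials into the numerators yields precisely the form of \eqref{eq: r_n_z_dependence}, with the degree bound $\alpha \le j+k-2$ coming from the counting that each $Z_l$ ($l\ge 1$) contributes at most degree $l$ in $z$. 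The remainder estimate \eqref{eq: R_J_symbol_class} is then obtained by the same parametrized Beal's lemma argument used in the proof of Lemma \ref{lemma: Properties of r}: each resolvent factor is bounded by $|\Im z|^{-1}\le h^{-\delta}$, and each commutator with a linear form $l(x_2,hD_{x_2})$ costs a factor $h/|\Im z|^2\le h^{1-2\delta}$, so the $J$-th tail carries $h^{J/2}$ and $J+1$ resolvent factors, giving the class $S_\delta^{(J+1)\delta - J/2}$.

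The explicit formulas \eqref{eq: explicite_expansion_of_trace} reduce to Gaussian integrals against the Hermite functions $r_m$. The leading coefficient is $z_{n,0}(x_2,\xi_2) = R_n^+ \msV(x_2,\xi_2) R_n^-$, a $2\times 2$ matrix of integrals $\int (u_n^{\pm\theta})^* T(x_2,\xi_2) u_n^{\mp\theta}\,dx_1$. In the chiral case ($\alpha_0 = 0$) the potential $T$ in \eqref{eq: def_of_ab} is off-diagonal as a $2\times 2$ block, so these sandwiches are linear combinations of the cross integrals $\int r_{|n|-1}r_{|n|}\,dx_1$, which vanish by Hermite orthogonality; hence $z_{\ch,n,0}=0$, $r_{\ch,n,0} = z^{-1}\indic_{\CC^2}$, and $\Tr r_{\ch,n,0}=2/z$. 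A parity argument on the Hermite integrands further shows that the $\sqrt h$-correction in the chiral model has vanishing $\CC^2$-trace. The $h$-order contribution must then be assembled from four sources — the $h$-order Taylor remainder of $\tilde\msV$, the square of the $\sqrt h$-Taylor term, the first nontrivial Neumann iterate involving $E_\on$, and the first Moyal $\#$-correction — and simplified using the creation/annihilation identities $a r_m=\sqrt{2m} r_{m-1}$, $a^* r_m = \sqrt{2(m+1)} r_{m+1}$; this yields $\lambda_n^2 \mathfrak U(\eta)/z^3$ with $\lambda_n^2 = 2|n|$ emerging from these matrix elements and $\mathfrak U$ combining $|U|^2-|U_-|^2$ and $\partial_{\bar\eta}\overline{U_-}-\partial_\eta U$ precisely as stated. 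In the anti-chiral case ($\alpha_1=0$) the diagonal block structure of $T=\alpha_0\diag(V,\bar V)$ makes $z_{\ach,n,0}$ a Hermitian matrix with eigenvalues $\pm c_n(\eta)$ (with the $\cos(\theta/2)$ factor arising from the $\theta$-rotation of $u_n^\theta$ for $n\ne 0$, and collapsing for $n=0$ because only one spinor component survives), giving $\Tr r_{\ach,n,0}=2z/(z^2-c_n^2)$; the $\sqrt h$-correction from the first Taylor term of $\tilde\msV$ delivers the $s_n^2$ coefficient.

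The main obstacle is the bookkeeping in the chiral order-$h$ computation: one must identify and sum the four separate contributions listed above and verify the exact combination that produces $\mathfrak U$, including tracking which $\sqrt h$-Taylor terms of $\tilde\msV$ give $\partial_\eta$ versus $\partial_{\bar\eta}$ after being converted through the $(x_1,D_{x_1})$-quantization and sandwiched between $u_n^{\pm\theta}$ and $u_{n\pm 1}^{\pm\theta}$ via $a,a^*$. Everything else — symbol class membership, uniform boundedness in $z$, the polynomial-in-$z$ numerator structure, and the remainder estimate — is formal bookkeeping on top of the Beal's lemma technology already deployed for Lemma \ref{lemma: Properties of r}.
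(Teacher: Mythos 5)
Your overall architecture matches the paper's (expand the effective potential $Z$ in powers of $\sqrt h$, invert $z-Z^W$ via a formal Moyal parametrix justified by Beal's lemma, estimate the tail in $S_\delta$ using $|\Im z|\ge h^\delta$, and compute the leading traces with the Hermite/ladder identities). However, there is a genuine error at your very first step that propagates into the explicit chiral formula: you assert $\partial_z\Epm=\sqrt h\,\indic_{\CC^2}$ and hence $r_n=(z-Z^W)^{-1_\#}$. This is false. The symbol $Z=R_n^+\tilde\msV^w(I+\sqrt h E_{0,n}\tilde\msV^w)^{-1}R_n^-$ depends on $z$ through $E_{0,n}(z)=\sum_{m\neq n}K_m^\theta(K_m^\theta)^*/(\lambda_m-\lambda_n-\sqrt h z)$ (see \eqref{eq: ent}), so $\partial_z\Epm=\sqrt h(1-\partial_z Z^W)$ and $r_n=(1-\partial_z Z^W)\#(z-Z^W)^{-1_\#}$, not the bare Moyal inverse. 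Since the first $z$-dependent coefficient of $Z$ is $z_{n,2}$ (through $Q_{n,1}^{(1)}\propto z$ in Prop. \ref{prop: expofQk}), the discrepancy first appears at order $h$ --- exactly the order you must compute for the chiral model. The paper's formula is $r_{n,2}=F_{n,2}-(\partial_z z_{n,2})F_{n,0}$; in the chiral case the piece $\Tr_{\CC^2}(F_{n,0}z_{n,2}F_{n,0})$ contains the term $-\alpha_1^2|n|(|U|^2+|U_-|^2)z^{-1}$, which is cancelled precisely by $-\Tr_{\CC^2}((\partial_z z_{n,2})F_{n,0})$. Dropping the $\partial_z Z$ contribution, as your proposal does, leaves a spurious $O(z^{-1})$ term at order $h$, so the trace would not reduce to $\lambda_n^2\mathfrak U(\eta)z^{-3}$ and would inject an extra $f(\lambda_{n,B})$-type term into Theorem \ref{thm: chiral trace}. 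Your list of ``four sources'' for the order-$h$ chiral term is therefore missing a fifth: the $z$-derivative of the effective potential entering through $\partial_z\Epm$.

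The remainder of the proposal is sound and essentially the paper's route: the three nested expansions of $Z$ (Taylor in $\sqrt h$ of $\tilde\msV^w$, geometric expansion of the resolvent denominators in $\sqrt h z$, Neumann series in $\sqrt h E_{0,n}\tilde\msV^w$), the rational-in-$z$ numerator structure of \eqref{eq: r_n_z_dependence}, the $S_\delta^{(J+1)\delta-J/2}$ tail estimate via commutators with linear forms, and the Hermite-orthogonality and parity arguments giving $z_{\ch,n,0}=0$, the vanishing chiral $\sqrt h$-trace, and the $\pm c_n$ eigenvalue structure in the anti-chiral case. One small sharpening: the paper obtains degree $\le j-1$ in $z$ for $z_{n,j}$ (cf. \eqref{eq: Epm_exp}), slightly better than your degree $\le j$, and this sharper count is what yields the stated bound $\alpha\le j+k-2$ in \eqref{eq: r_n_z_dependence}.
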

\iffalse
\begin{rem}
Notice that the $z$ dependence in the expression of \eqref{eq: r_n_z_dependence} become more clear if we think of $\Epm$ and $z_{n,j}$ as scalars instead of $2\times 2$ matrices: In that case, we have
\[r_{n,j}(x_2,\xi_2;z) =  \sum\limits_{k = 0}^{2(j-1)} \frac{z^k c'_{n,j,k}(x_2,\xi_2)}{(z - z_{n,0})^{j+1}},\quad \text{~for~some~} C'_{n,j,k}\in S.\]
\end{rem}
\fi
\begin{rem}
\label{rmk: higher_order_terms_on_z}
Notice by Prop. \ref{prop: expofQk}, $z_{n,0} = 0$ for the chiral model. Thus we have $r_{\ch,n,j} = \sum\limits_{k = 0}^{2(j - 1)} z^{k - j - 1} f_{n,j,k}(x_2,\xi_2)$ for appropriate $f_{n,j,k}\in S$ when $j\geq 1$. 
\end{rem}

\subsection{Trace formula}
\label{subsec: pfofdos}
Now we are ready to characterize $\tilde{\Tr}f(\msH^\theta))$ using $\Epm$ and still use the operator-valued symbol and $(x_2,hD_{x_2})$-quantization in this subsection.
\begin{lemm}
  \label{lemma: d.o.s.computable}
  Let $\Epm$ be as in Lemma \ref{lemm:perturbed Grushin}. Let $f\in C_c^{K+1}(\Lambda_{n,B,\msV})$ and $f_0(x) = f(x +\lambda_{n,B})$ be as in   \eqref{eq: HSresolvent}. Then the regularized trace $\tilde{\Tr}(f(\msH^\theta))$ satisfies
  \begin{equation}
      \label{eq: Trace_formula}
       \tilde{\Tr}(f(\msH^{\theta})) = -\frac{i}{4\pi^2 h|E|} \int_{\CC} \int_{E}\partial_{\bar{z}} \tilde{f_0} \Tr_{\CC^2}(r_n(x_2,\xi_2;z,h)) \ dx_2 \ d\xi_2 ~dz\wedge d\bar{z},
  \end{equation}
\end{lemm}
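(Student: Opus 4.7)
The plan is to combine four ingredients: the reduction to a cell trace in Lemma \ref{lemma: trace on E}, the Helffer--Sj\"ostrand representation after conjugation by the unitary $\msU$ in \eqref{eq: HSresolvent}, the Grushin resolvent identity \eqref{eq:Reso Id}, and the relation $\sqrt h\,\Em\Ep = \partial_z\Epm$ from \eqref{eq: linear_in_z}. Starting from $\tilde\Tr(f(\msH^\theta)) = \tfrac{1}{|E|}\Tr_{L^2(E)}(f(\msH^\theta))$ and inserting \eqref{eq: HSresolvent}, I reduce, up to a factor of $\tfrac{i\sqrt h}{2\pi}$, to the integral of $\partial_{\bar z}\tilde f_0$ against the normalized trace $\widetilde\Tr(\cdot) := \tfrac{1}{|E|}\Tr_{L^2(E)}(\msU^{-1}(\cdot)\msU)$ of the conjugated resolvent, which the Grushin formula then decomposes as $E_n - \Ep\Epm^{-1}\Em$.

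Next I will show that the $E_n$-contribution vanishes. Since $f\in C_c^{K+1}(\Lambda_{n,B,\msV})$, the shifted function $f_0$ is supported in $(-\|\msV\|_\infty,\|\msV\|_\infty)$, so the quasi-analytic extension $\tilde f_0$ can be chosen with support inside the disk $\{|z|\leq 2\|\msV\|_\infty\}$ on which Lemma \ref{lemm:perturbed Grushin} guarantees $z\mapsto E_n(z)$ is operator-valued analytic. Stokes' theorem applied to the compactly supported $1$-form $\tilde f_0 E_n\,dz$ then yields $\int_\CC\partial_{\bar z}\tilde f_0\,E_n\,dz\wedge d\bar z = 0$.

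For the remaining $-\Ep\Epm^{-1}\Em$ term, I apply cyclicity of $\widetilde\Tr$ together with \eqref{eq: linear_in_z} to obtain
\[
\widetilde\Tr(\Ep\Epm^{-1}\Em) \;=\; \widetilde\Tr(\Epm^{-1}\Em\Ep) \;=\; \tfrac{1}{\sqrt h}\widetilde\Tr(\Epm^{-1}\partial_z\Epm) \;=\; \tfrac{1}{\sqrt h}\widetilde\Tr(r_n^W),
\]
with $r_n = \partial_z\Epm\,\#\,\Epm^{-1}$ and a further cyclicity in the last step. Since $r_n(x_2,\xi_2;z,h)$ inherits $\Gamma$-periodicity in $(x_2,\xi_2)$ from $\msV$ (through the operator-valued symbols of Remark \ref{rmk: operator_valued_symbol_example}), the Weyl trace formula per fundamental phase-space cell in the $(x_2,hD_{x_2})$-quantization gives
\[
\widetilde\Tr(r_n^W) \;=\; \frac{1}{2\pi h|E|}\int_E \Tr_{\CC^2}\!\bigl(r_n(x_2,\xi_2;z,h)\bigr)\,dx_2\,d\xi_2.
\]
Assembling the prefactors $\tfrac{i\sqrt h}{2\pi}\cdot(-1)\cdot\tfrac{1}{\sqrt h}\cdot\tfrac{1}{2\pi h|E|} = -\tfrac{i}{4\pi^2 h|E|}$ gives the asserted identity \eqref{eq: Trace_formula}.

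The main obstacle is the rigorous justification of the cyclicity step and the identification of $\widetilde\Tr$ as the normalized Weyl trace per phase-space cell, since neither $r_n^W$ nor the individual factors $\Epm^{\pm 1}$ are trace class on $L^2(\RR_{x_2};\CC^2)$ in isolation. I would resolve this by using the commutation of $f(\msH^\theta)$ with the magnetic translations $\TT_\gamma$, $\gamma\in\Gamma$, from Lemma \ref{lemma: Floquet_Theory} and \eqref{eq: Tcommutes}: under conjugation by $\msU$ these correspond to the $(x_2,\xi_2)$-periodicity of all operator-valued symbols appearing in the Grushin decomposition, so that the diagonal of the Schwartz kernel of each of $E_n$, $\Ep\Epm^{-1}\Em$, and $r_n^W$ is $\Gamma$-periodic. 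Consequently the integration over a single fundamental cell $E$ in Lemma \ref{lemma: trace on E} is unambiguous and reproduces the right-hand side of \eqref{eq: Trace_formula}.
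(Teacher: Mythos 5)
Your outline follows the same route as the paper: Helffer--Sj\"ostrand after conjugation by $\msU$, the Grushin identity \eqref{eq:Reso Id}, Stokes' theorem to kill the analytic term $E_n$, the relation $\sqrt h\,\Em\Ep=\partial_z\Epm$, and reduction to a phase-space cell average of $\Tr_{\CC^2}(r_n)$; the prefactor bookkeeping is also correct. The one place where your argument is genuinely incomplete is exactly the step you flag as "the main obstacle," and the resolution you propose does not close it. The identity $\widetilde\Tr(\Ep\Epm^{-1}\Em)=\widetilde\Tr(\Epm^{-1}\Em\Ep)$ is not a cyclicity of a single trace functional: the left side is a regularized trace on $L^2(\RR^2_x;\CC^4)$ normalized by position-space area, while the right side is a regularized trace on $L^2(\RR_{x_2};\CC^2)$ normalized by $(x_2,\xi_2)$ phase-space area (with the extra $(2\pi h)^{-1}$ from the Weyl trace formula). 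Periodicity of the Schwartz-kernel diagonals tells you that each regularized trace, once it exists, is computable on a fundamental cell; it does not tell you that the two differently regularized quantities are equal. That equality is a quantitative statement about commutators with cutoffs.

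Concretely, what is needed (and what the paper supplies in Appendix B) is: (i) that $\tcRW\Ep\Epm^{-1}\Em\tcRW$ and $\bcRW\Em\Ep\Epm^{-1}\bcRW$ are trace class for $\Im z\neq 0$, so that $\Tr$ can be exchanged with the $\bar\partial$-integral (the $|\Im z|^{-1}$ blow-up being absorbed by $|\partial_{\bar z}\tilde f_0|\lesssim|\Im z|^K$); and (ii) the estimate $\Tr(\tcRW\Ep\Epm^{-1}\Em\tcRW)-\Tr(\bcRW\Em\Ep\Epm^{-1}\bcRW)=\mathcal O(R^{3/2}h^{-1}|\Im z|^{-1})$, which rests on the improved Hilbert--Schmidt bound $\Vert[\Em,\indic_R]_W\Vert_{\HS}\lesssim h^{-1/2}R^{1/2}$ (the naive bound $h^{-1/2}R$ would give an $\mathcal O(R^2)$ error and destroy the limit). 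Only after dividing by $4R^2$ and letting $R\to\infty$ do the two regularizations coincide, and only then does the Weyl trace formula \eqref{eq: trace_QuantitoSymbol} together with periodicity of $r_n$ produce the cell integral. A smaller instance of the same issue appears in your "further cyclicity" converting $\Epm^{-1}\partial_z\Epm$ into $\partial_z\Epm\#\Epm^{-1}$; the paper sidesteps it by cycling $\Em$ to the left instead of $\Ep$ to the right, so that $\partial_z\Epm\,\Epm^{-1}$ appears directly. So: right skeleton, correct constants, but the crucial analytic content of the lemma is asserted rather than proved.
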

Lemmas needed for the following proof are outsourced to  Appendix \ref{appendix2}.
\begin{proof}
   By \eqref{eq: HSresolvent}, \eqref{eq:Reso Id}, and the analyticity of $E_{n}(z)$ when $h\in [0,h_0)$, $|z|\leq 2\Vert \msV\Vert_\infty$,
  \[
    \begin{split}
      \msU f(\msH^{\theta}){\msU}^{-1} = -\frac{i\sqrt{h}}{2\pi}\int_\CC \partial_{\bar{z}} \tilde{f}_0 (\Ep  \Epm^{-1}\Em )(z) \ dz\wedge d\bar{z}.
    \end{split}
  \]
  Thus we have 
  \[
    \begin{split}
      \tilde{\Tr}f(\msH^\theta) &= \lim\limits_{R\to \infty} \frac{1}{4R^2} \Tr_1\left(\indic_Rf(\msH^\theta)\indic_R\right) = \lim\limits_{R\to\infty} \frac{1}{4R^2} \Tr_1(\tcRW \msU f(\msH^\theta)\msU^{-1}\tcRW)\\
      &\stackrel{\text{2}}{=} \lim\limits_{R\to\infty} -\frac{i\sqrt{h}}{8\pi R^2} \Tr_1 \left(\int_\CC \partial_{\bar{z}}\tilde{f}_0(\tcRW\Ep \Epm^{-1}\Em \tcRW) ~dz\wedge d\bar{z}\right)   \\
      &\stackrel{\text{3}}{=} \lim\limits_{R\to\infty} -\frac{i\sqrt{h}}{8\pi R^2} \int_\CC \partial_{\bar{z}}\tilde{f}_0\Tr_1 \left(\tcRW\Ep \Epm^{-1}\Em \tcRW\right) ~dz\wedge d\bar{z}\\
      &\stackrel{\text{4}}{=} \lim\limits_{R\to\infty} -\frac{i\sqrt{h}}{8\pi R^2} \int_\CC \partial_{\bar{z}}\tilde{f}_0\Tr_2\left(\bcRW \Em \Ep \Epm^{-1}\bcRW \right) ~dz\wedge d\bar{z}\\
      &\stackrel{\text{5}}{=} \lim\limits_{R\to\infty} -\frac{i}{8\pi R^2} \int_\CC \partial_{\bar{z}}\tilde{f}_0\Tr_2\left(\bcRW \partial_z\Epm\Epm^{-1}\bcRW\right) ~dz\wedge d\bar{z}\\
      &\stackrel{\text{6}}{=}\lim\limits_{R\to\infty} -\frac{i}{16\pi^2 h R^2} \int_\CC \partial_{\bar{z}}\tilde{f}_0\int_{\RR^2}\Tr_{\CC^2}\left(\bar \indic_R \#\partial_z \Epm\#\Epm^{-1}\#\bar \indic_R \right)~d{x_2}~d{\xi_2} ~dz\wedge d\bar{z}\\
      &\stackrel{\text{7}}{=} -\frac{i}{4\pi^2 h|E|} \int_\CC \int_{E} \partial_{\bar{z}}\tilde{f}_0  \Tr_{\CC^2}\left(\partial_z \Epm\#\Epm^{-1}\right) ~d{x_2} ~d{\xi_2}  ~dz\wedge d\bar{z}\\
    %   &= \Tr_2\left(-\frac{i}{2\pi}\int_{\CC}\partial_z \Epm \circ \Epm^{-1} \ dz \wedge d\bar{z}\right)
    \end{split}
  \]
where $\msU \indic_R \msU^{-1} =: \tcRW$ follows from \eqref{eq: f_unitary_to_tilde_f}. Here, $\bcRW = \bcRW(x_2,hD_{x_2})$ where $\bar{\indic}_R(x_2,\xi_2)$ coincides with $\indic_R(x_1,x_2)$ but is viewed as a function of phase space variables $(x_2,\xi_2)$ rather than $x$. In addition, $\Tr_1 := \Tr_{L^2(\RR^2_{x};\CC^4)}$, $\Tr_2 := \Tr_{L^2(\RR_{x_2};\CC^2)}$.

The second line follows from the Helffer-Sj\"ostrand formula in Lemma \ref{lemma: HS_formula}. The third line follows from Lemma \ref{lemma: trace class}, where we proved $\cRW \Ep  \Epm^{-1} \Em \cRW$ is trace class. The fourth line follows directly from Lemma \ref{lemma: diffoftrace}. The fifth line follows from \eqref{eq:Reso Id}. The sixth line follows from 
\begin{equation}
  \label{eq: trace_QuantitoSymbol}
  \Tr_{\mathcal{L}(L^2(\RR_{x_2};H_1);L^2(\RR_{x_2};H_2))}(a^W(x_2,hD_{x_2})) = \frac{1}{2\pi h} \int_{\RR^2_{x_2,\xi_2}} \Tr_{\mathcal{L}(H_1,H_2)}(a(x_2,\xi_2)) d{x_2}d{\xi_2}.
\end{equation}
The seventh line follows from periodicity of $\msV$ and thus periodicity of $\partial_z \Epm \# \Epm^{-1}$, which follows immediately from its asymptotic expansion. 
% And the last line follows from   \eqref{eq: trace_QuantitoSymbol}.
\end{proof}

\subsection{Proof of main results}
\label{ss: proof_of_dos_expansion}
Now we can prove our main Theorems \ref{thm: chiral trace} and \ref{thm: anti-chiral trace}:

\begin{proof}[Proof of Theo. \ref{thm: chiral trace}, \ref{thm: anti-chiral trace}]
Let $0<\delta <1/2$. Assume $f\in C_c^{N+1}(\Lambda_{n,B,\msV})$. Let $f_0(x) := f(x+\lambda_{n,B})$ which is supported on a neighbourhood of $0$.  Recall by Lemma \ref{lemma: d.o.s.computable}, we need to compute
  \begin{equation}
      \label{eq: regularized trace}
       \begin{split}
      \tilde{\Tr}(f(\msH_n^\theta)) = -\frac{i}{4\pi^2 h|E|} \int_{E} \int_{\CC} \partial_{\bar{z}} \tilde{f}_0\Tr_{\CC^2}(r_n(x_2,\xi_2;z,h)) ~dz\wedge d\bar{z} \ dx_2 \ d\xi_2.
    \end{split}
  \end{equation}
We can rewrite the integral with expansion $r_n = \sum_{j=0}^{J-1} h^{j/2} r_{n,j} + R_{n,J}$ as in Lemma \ref{lemma: explicit_asymptotic_expansion_needed} 
\[
  \begin{split}
    \left[\int_{\CC} \partial_{\bar{z}}\tilde{f}_0 \Tr_{\CC^2} (r_n) dz \wedge d\bar{z}\right](x_2,\xi_2;h) =& \int_{\CC} \partial_{\bar{z}}\tilde{f}_0 \sum\limits_{j = 0}^{J-1} h^{\frac{j}{2}}\Tr_{\CC^2}(r_{n,j})dz\wedge d\bar{z}\\
    +&\int_{|\Im z|\geq h^\delta} \partial_{\bar{z}}\tilde{f}_0 \Tr_{\CC^2}(R_{n,J}) dz\wedge d\bar{z}\\
    +& \int_{|\Im z|\leq h^\delta} \partial_{\bar{z}}\tilde{f}_0 \Tr_{\CC^2}(R_{n,J}) dz\wedge d\bar{z}
    := A_{1} + A_{2} + A_{3}.
  \end{split}
\]
Notice that by Remark \ref{rmk: gap}, we only need to consider $f_0$ supported at $|z|\leq \Vert \msV\Vert$, for which we can pick $\tilde{f}_0$ s.t. $\tilde{f}_0$ is supported inside $|z|\leq 2\Vert \msV\Vert_\infty$ for the integral. As in Lemma \ref{lemma: explicit_asymptotic_expansion_needed}, we take $J = 3$ in the chiral case and $J = 2$ in the anti-chiral case.

First of all, we compute $A_1$ by \eqref{eq: explicite_expansion_of_trace} and the general version of Cauchy's integral formula, see \cite[(3.1.11)]{Ho03}: Let $X $ be an open subset of $\CC$. Let $g\in C_c^m(X)$, with $m \ge n$ then 
\begin{equation}
    \label{eq: Cauchy_integral}
    2\pi ig^{(n)}(\zeta) = \int_X \partial_{\bar{z}}g(z) \frac{n!}{(z - \zeta)^{n+1}} dz \wedge d\bar{z}.
\end{equation}
In particular, take $X$ to be a small open neighborhood of $\supp(\tilde{f}_0)$. By \eqref{eq: Cauchy_integral} and the definition of $f_0$, we have 
\[
  \begin{split}
    A_{1,\ch} &= \int_{\CC} \partial_{\bar{z}} \tilde{f}_0 \left[\frac{2}{z} +  \frac{\lambda_n^2}{z^3}\mathfrak U(\eta) h \right] dz \wedge d\bar{z}= 2\pi i \left[2f(\lambda_{n,B}) +  \frac{\lambda_n^2}{2}\mathfrak U(\eta) f''(\lambda_{n,B})h \right],\\
    A_{1,\ach} &= \int_{\CC} \partial_{\bar{z}} \tilde{f}_0 \Tr_{\CC^2}\left[ \frac{1}{z - c_n} + \frac{1}{z+c_n} + \frac{s_n^2\sqrt{h}}{(z -c_n)^2} + \frac{s_n^2\sqrt{h}}{(z+c_n)^2}\right] dz\wedge d\bar{z}  \\
    &= 2\pi i\left[f(\lambda_{n,B} + c_n) + f(\lambda_{n,B}-c_n) + f'(\lambda_{n,B} +c_n)s_n^2\sqrt{h}+ f'(\lambda_{n,B}-c_n)s_n^2\sqrt{h}\right].
  \end{split}
\]
For $A_2$, by \eqref{eq: R_J_symbol_class} and $|z|\leq 2\Vert \msV\Vert_\infty$, when $|\Im z |\geq h^\delta$, there are $C_{n},C_n'>0$ such that 
\[
  \begin{split}
  &|A_{2}| \leq \int_{|\Im z |\geq h^\delta} |\partial_{\bar{z}} \tilde{f}_0 | C_{n}  h^{\frac{J}{2} - (J+1)\delta} 2 L(dz)\leq C_n' \Vert f \Vert_{C^{K+1}} \Vert \mathscr V \Vert_{\infty} h^{\frac{J}{2} - (J+1)\delta}.
  \end{split}
\]
Finally, by \eqref{eq: quasi-analytic_extension}, \eqref{eq: bound on r}, \eqref{eq: r_n_z_dependence}, $0<\delta <1/2$ and $|z|\leq 2\Vert \msV\Vert_\infty$, we have for some $C_{n,j},C_n''$
\begin{equation}
    \label{eq: estimates_A_1}
    \begin{split}
    |A_{3}| &\leq \int_{|\Im z|\leq h^\delta} |\partial_{\bar{z}} \tilde{f}_0| \left[|\Tr_{\CC^2}(r_n)| + \sum\limits_{j = 0}^{J-1} \left|\Tr_{\CC^2}(h^{\frac{j}{2}} r_{n,j})\right|\right] dz \wedge d\bar{z}\\
    &\leq \int_{|\Im z|\leq h^\delta} \Vert f \Vert_{C^{K+1}}|\Im z|^K \left[\max\left(\frac{1}{|\Im z|}, \frac{h^{\frac{3}{2}}}{|\Im z|^4}\right) + \sum\limits_{j = 0}^{J-1} \frac{C_{n,j}h^{\frac{j}{2}}}{|\Im z|^{j+1}}\right] dz\wedge d\bar{z}\\
    &\leq 2 C_{n}''\Vert f \Vert_{C^{K+1}}  \Vert \mathscr V \Vert_{\infty}\left[ \max\left(h^{(K-1)\delta}, h^{(K-4)\delta + \frac{3}{2}}\right) + \sum\limits_{j = 0}^{J-1} h^{\frac{j}{2} + (K - j - 1)\delta} \right] \\
    &\leq C_{n}''\Vert f \Vert_{C^{K+1}} \Vert \mathscr V \Vert_{\infty} h^{(K-1)\delta}.
  \end{split}
\end{equation}
Define $C_{n,K,f,V} = \max(C_n, C_n',C_n'') \Vert \mathscr V \Vert_{\infty} \Vert f \Vert_{C^{K+1}}$. We see 
\[
    |A_{2,\ch}|\leq C_{n,K,f,\msV} h^{\frac{3}{2}-4\delta}, \  \ |A_{2,\ach}| \leq C_{n,K,f,\msV} h^{1-3\delta}, \ \ |A_3|\leq C_{n,K,f,\msV} h^{(K-1) \delta}.
\]
Combine the estimates of $A_1$, $A_2$, $A_3$, and plug them into \eqref{eq: regularized trace}, we have
\begin{equation}
    \label{eq: expansion}
    \begin{split}
    &\tilde{\Tr}f(\msH_\ch) = \frac{1}{\pi h} f(\lambda_{n,B}) + \frac{|n|}{2\pi}f''(\lambda_{n,B}) \mathfrak U(\eta) + \mathcal O_{n,K,f,\msV}(h^{\frac{1}{2} - 4\delta} + h^{(K -1)\delta - 1}),\\
    &\tilde{\Tr}f(\msH^\theta_{\ach}) = \frac{1}{2\pi h}t_{n,0}(f) + \frac{1}{2\pi \sqrt{h}} t_{n,1}(f) + \mathcal O(h^{-3\delta} + h^{(K - 1)\delta - 1})
\end{split}
\end{equation}
where $t_{n,0}(f) = \Ave[f(\lambda_{n,B} - c_n) + f(\lambda_{n,B} + c_n)]$, $t_{n,1}(f) = \Ave[s_n^2f(\lambda_{n,B} - c_n) + s_n^2f(\lambda_{n,B} + c_n)]$, and $\Ave(g) = \frac{1}{|E|}\int_E g(\eta) d\eta$. Thus we proved \eqref{eq: Asymp_chiral} and \eqref{eq: Anti-chiral_d.o.s.}.

In general, fix $N\in \NN^+$ and we consider $2N+1$ Landau levels centered at $0$. Let $B$ be large enough such that $\left\{\overline{B_{\Vert \msV\Vert_\infty}(\lambda_{n,B})}\right\}_{n = -N}^N$ do not intersect. For any $f\in C^{K+1}_c(\lambda_{-(N+1),B} +\Vert \msV\Vert_\infty, \lambda_{N+1,B} - \Vert \msV\Vert_\infty)$, by Remark \ref{rmk: gap}, values of $f$ on the gap do not contribute to $\tilde\Tr(f(\msH^\theta))$, thus we can apply the partition of unity of $f$ on $\left\{\Lambda_{n,B,\msV}\right\}_{n= -N}^N$, i.e. find $f_n$ such that $f = \sum\limits_{n = -N}^N f_n$ and $\supp f_n \subset \Lambda_{n,B,\msV}$. Then we can apply \eqref{eq: Asymp_chiral} and \eqref{eq: Anti-chiral_d.o.s.} to each $f_n$ and take the sum. That gives us the rest of the Theorem \ref{thm: chiral trace} and \ref{thm: anti-chiral trace}.

Furthermore, as mentioned in Remark \ref{rmk: higher_order_terms_on_z}, $z_{n,0} = 0$ in the chiral case, thus each term in the expansion of $r_{n,\ch}$ is of the form $r_{n,j,\ch} = \sum\limits_{k = 0}^{k - j - 1} z^{k - j - 1}f_{n,j,k}(x_2,\xi_2)$. Now assume $f$ is smooth enough, then for any $J\in\NN$, by \eqref{eq: Cauchy_integral}, we can see that 
\begin{equation}
\label{eq:chiralmore}
A_{1,\ch} = \sum\limits_{j = 0}^{J-1} h^{j/2} \sum\limits_{k=0}^{k-j-1} F_{n,j,k}(\eta)f^{(j-k)}(\lambda_{n,B}), \ \text{~for~some~} F_{n,j,k}(\eta)\in S.
\end{equation}
Thus for the chiral case, every term in the asymptotic expansion of $\tilde{\Tr}(f(\msH^\theta))$ only depends on derivatives $f^{(k)}$ at $\lambda_{n,B}$. 

\end{proof}
\subsection{Differentiability}
Finally, we comment on the differentiability of the regularized trace with respect to the magnetic field. That $h \mapsto \tilde \Tr (f(\mathscr H^{\theta}))$ is a differentiable function follows already from Lemma \ref{lemma: trace on E}. However, what does not follow from Lemma \ref{lemma: trace on E} is that the asymptotic expansion itself in Theorems \ref{thm: chiral trace} and \ref{thm: anti-chiral trace} is differentiable. The following Proposition, which uses the same notation as Theorems \ref{thm: chiral trace} and \ref{thm: anti-chiral trace} shows that term-wise differentiation yields the right asymptotic expansion:

\begin{prop}[Differentiability]
\label{prop: differentiability}
  Under the same assumption of $\lambda_{n,B}$, $\epsilon$,  as in Theorem \ref{thm: chiral trace}, we have that $B \mapsto \tilde{\Tr}(f(\msH^\theta))$ is differentiable. For all $\epsilon$, $f\in C^K(\Lambda_{n,B,\msV})$, that $K>\frac{6}{\epsilon} - 2$, then for  $\mathcal O_{n,K,f,\msV} =\mathcal O_n( \Vert \mathscr V \Vert_{\infty} \Vert f \Vert_{C^{K}})$, we have:   For the chiral model $\msH^\theta = \msH_\ch$,
   \begin{equation}
       \label{eq: dh_chiral}
       \begin{split}
        \partial_B \tilde{\Tr}(f(\msH_\ch)) &= \frac{\sqrt{2|n|B}}{2\pi} f'(\lambda_{n,B}) +\frac{ f(\lambda_{n,B})}{\pi} + \frac{(2|n|)^{\frac{3}{2}}}{8\pi \sqrt{B}}\Ave(\mathfrak U) f'''(\lambda_{n,B})+ \mathcal O_{n,K,f,\msV}(B^{-1+\epsilon})
    \end{split}
   \end{equation}
 For the anti-chiral model $\msH^\theta = \msH_\ach^\theta$, 
\begin{equation}
\begin{split}
    \label{eq: dh_antichiral}
     \partial_B \tilde{\Tr}(f(\msH_\ach^\theta)) &=\frac{\sqrt{2|n|B}}{4\pi } t_{n,0}(f') + \frac{1}{4 \pi}\left(2t_{n,0}(f) + \sqrt{2|n|} t_{n,1} (f')\right )+  \mathcal O_{n,K,f,\msV}(B^{-\frac{1}{2} + 3\delta})
     \end{split}
\end{equation}
In particular, when $n = 0$, we get a better estimate for the chiral and anti-chiral case respectively:
\begin{equation}
    \label{eq: n_is_0}
    \begin{split}
        &\partial_B\tilde{\Tr}(f(\msH_{\ch})) = \frac{1}{\pi}f(0) + \mathcal O_{0,K,f,\msV}(B^{-\frac{3}{2} +4\delta})\\
        &\partial_B\tilde{\Tr}(f(\msH_{\ach}^\theta)) = \frac{1}{2\pi}t_{0,0}(f) + \frac{3}{4\pi\sqrt{B}}t_{0,1}(f) + \mathcal O_{0,K,f,\msV} (B^{-1+3\delta})
    \end{split}
\end{equation}
where $t_{n,0}(f)$, $t_{n,1}(f)$, $\mathfrak U$, $s_n$ and $c_n$ are the same as in Theorem \ref{thm: chiral trace}, \ref{thm: anti-chiral trace}.
\end{prop}

To prove this proposition, we will need to prove two auxiliary Lemmas \ref{lemm: boundedness_number_2} and \ref{lemm:bound} discussing properties of $\partial_h \Epm$, $\partial_h \Epm^{-1}$ and $\partial_h r_n$, which are similar to the two properties needed for $\Epm$, $\Epm^{-1}$ and $r_n$ previously in \ref{lemma: Properties of r} and \ref{lemma: explicit_asymptotic_expansion_needed}. The rest of the proof is similar to Sec. \ref{ss: proof_of_dos_expansion}. We start with some preparations:
To discuss the differentiability of asymptotic expansions, we define $\#_h^M$ for $a(x,\xi;h), b(x,\xi;h)\in S(\RR_{x,\xi}^2)$ by 
\begin{equation}
  \label{eq: new_sharp}
  \begin{split}
    a \#^M_h b &= \left[e^{\frac{ih}{2}\sigma(D_x,D_{\xi}; D_y, D_\eta)} \left(\frac{i}{2}\sigma(D_x,D_\xi;D_y,D_\eta)\right)^M\right]\left(a(x,\xi,h) b(y,\eta,h)\right)|_{\substack{x = y \\ \xi = \eta}}\\
    &= \sum\limits_{|\alpha| = |\beta| = M} C_{\alpha, \beta} (\partial_{x,\xi}^\alpha a ) \# (\partial_{y,\eta}^\beta b),
  \end{split}
\end{equation}
where $\sigma(x,\xi;y,\eta) = \langle \xi, y \rangle - \langle x,\eta\rangle$. Then we see that, 
\begin{equation}
  \label{eq: diff_sharp}
  \partial_h^M(a \# b) = a \#^M_h b \ + \sum\limits_{\substack{i + j + k = M \\  j \neq M}} C_{i,j,k} \ (\partial_h^i a) \ \#_h^j \ (\partial_h^k b).
\end{equation}
The following result is derived for general $M \in \mathbb N$ but we will, for simplicity, only consider the  $M =1$ case later:
\begin{lemm}[Boundedness]
\label{lemm: boundedness_number_2}
Let $h_0$, $\Epm$ be as in Lemma \ref{lemm:perturbed Grushin}. The symbol $\Epm(x_2,\xi_2;z,h)$ is smooth in $h$ when $h<h_0$ and for any $M\in \NN$, $\partial_h^M \Epm \in S^{M - \frac{1}{2} }$ uniformly in $|z|\leq 2\Vert \msV\Vert_\infty$, i.e. for any multi-index $\alpha$, $\beta$, there is $C_{\alpha, \beta, n} = C_{\alpha, \beta, n}(\Vert \msV\Vert_\infty)$ such that 
\[
 \Vert\partial_{x_2}^\alpha\partial_{\xi_2}^\beta \partial_h^M \Epm(x_2,\xi_2;z,h)\Vert_{\CC_{2\times 2}}\leq C_{\alpha,\beta,n}\sqrt{h}, \quad \text{ for all } |z|\leq 2\Vert \msV\Vert_\infty.
\]
If $|\Im z|\neq 0$, $M>0$, then $\partial_h^M \Epm^{-1}$ and $\partial_h^M r_n$ satisfy
\begin{align}
    \label{eq: partial_Epm}
    &\Vert \partial_{x_2}^\alpha\partial_{\xi_2}^\beta\partial_h^M \Epm^{-1}(x_2,\xi_2;z,h)\Vert_{\CC_{2\times 2}} \leq C_{\alpha,\beta,n}  \max\left(1, \frac{h^{\frac{3}{2}}}{|\Im z|^3}\right) h^{-\frac{1+2M}{2}} |\Im z|^{-2M - |\alpha| - |\beta|},\\
    \label{eq: partial_r}
    &\Vert \partial_{x_2}^\alpha\partial_{\xi_2}^\beta\partial_h^M r_n(x_2,\xi_2;z,h)\Vert_{\CC_{2\times 2}} \leq C_{\alpha,\beta,n}  \max\left(1, \frac{h^{\frac{3}{2}}}{|\Im z|^3}\right) h^{-M} |\Im z|^{-2M - |\alpha| - |\beta|}.
\end{align}
In particular, when $0<\delta<1/2$ and $|\Im z|\geq h^\delta$, we have $\partial_h^M \Epm^{-1} \in S_\delta^{M(2\delta +1) + \frac{1}{2}}$ and  $\partial_h^M r_n\in S_\delta^{M(2\delta+1)}$.
\end{lemm}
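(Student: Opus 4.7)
The plan is to adapt the strategy of Lemma~\ref{lemma: Properties of r} to the $h$-differentiated setting. Two features of $\Epm$ contribute to its $h$-dependence: the explicit prefactor $\sqrt{h}$ in \eqref{eq: epm} and the implicit $\sqrt{h}$ inside $\tilde{\msV}^w(x,D_{x_1},\xi_2) = \msV^w(x_2+\sqrt{h}x_1,\xi_2-\sqrt{h}D_{x_1})$, whose symbol is smooth in $\sqrt{h}$ rather than in $h$. Each differentiation in $h$ therefore produces a factor of $h^{-1/2}$, consistent with the target bound $\partial_h^K\Epm \in S^{K-1/2}$; smoothness of $\Epm$ and $\Epm^{-1}$ in $h>0$ follows from the analyticity already exploited in Lemma~\ref{lemm:perturbed Grushin}.

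First, I would establish boundedness of $\partial_h^K\Epm$ by differentiating the explicit formula
\begin{equation*}
\Epm = \sqrt{h}\,z - \sqrt{h}\,R_n^+\,\tilde{\msV}^W\,(\indic + \sqrt{h}\,E_\on\,\tilde{\msV}^W)^{-1}\,R_n^-
\end{equation*}
coming from \eqref{eq: epm}. Each derivative either removes a factor of $\sqrt{h}$, differentiates $\tilde{\msV}^W$ (producing extra $x_1$ or $D_{x_1}$ via the chain rule), or differentiates the Neumann inverse, which is a convergent geometric series for $h<h_0$. The $x_1, D_{x_1}$ factors generated are absorbed by the mapping properties \eqref{eq: everything_in_S} between the $B^k_{x_1}$-spaces, so the norm of $\partial_h^K\Epm$ is controlled by $h^{1/2-K}$ uniformly on $|z|\leq 2\Vert\msV\Vert_\infty$, yielding $\partial_h^K\Epm \in S^{K-1/2}$. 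The same argument carries over, coordinate-wise, to $\partial_{x_2}^\alpha\partial_{\xi_2}^\beta\partial_h^K\Epm$ for any multi-index $(\alpha,\beta)$.

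Next, to bound $\partial_h^K\Epm^{-1}$, I would differentiate the identity $\Epm\#\Epm^{-1} = I$ by applying \eqref{eq: diff_sharp}. For $K=1$ this gives
\begin{equation*}
\partial_h\Epm^{-1} = -\Epm^{-1}\#(\partial_h\Epm)\#\Epm^{-1} - \Epm^{-1}\#(\Epm\#_h^1\Epm^{-1}).
\end{equation*}
Iterating produces $\partial_h^K\Epm^{-1}$ as a finite sum of compositions $\Epm^{-1}\#D_1\#\Epm^{-1}\#D_2\#\cdots\#\Epm^{-1}$, where each $D_\ell$ is either $\partial_h^{j_\ell}\Epm$ or a $\#_h^{i_\ell}$-correction, with $\sum_\ell j_\ell+\sum_\ell i_\ell = K$. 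Combining the bound $\Vert\Epm^{-1}\Vert\lesssim \max(1,h^{3/2}/|\Im z|^3)\,h^{-1/2}|\Im z|^{-1}$ from \eqref{eq: bound on Epm inv} with $\Vert\partial_h^{j_\ell}\Epm\Vert\lesssim h^{1/2-j_\ell}$ from the previous step, one checks that every summand is bounded by $\max(1,h^{3/2}/|\Im z|^3)\,h^{-(1+2K)/2}|\Im z|^{-2K}$, giving \eqref{eq: partial_Epm} in the case $(\alpha,\beta)=(0,0)$. Derivatives in $x_2,\xi_2$ are then handled exactly as in Lemma~\ref{lemma: Properties of r} by iterated commutators with $L_j = l_j(x_2,hD_{x_2})$ and an application of the parametrized Beals' lemma \cite[Prop.~8.4]{D-S}; each commutator exchanges an $h$ for an $|\Im z|^{-1}$, exactly accounting for the $|\Im z|^{-2K - |\alpha|-|\beta|}$ factor.

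Finally, for $r_n = \partial_z\Epm\#\Epm^{-1}$ I would apply \eqref{eq: diff_sharp} to distribute $\partial_h^K$ across the Weyl product; using $\partial_z\partial_h^j\Epm \in S^{j-1/2}$ (since $\partial_z$ commutes with $\partial_h$) together with the bound on $\partial_h^k\Epm^{-1}$ from the previous step, one obtains \eqref{eq: partial_r}. The gain of $h^{1/2}|\Im z|^{-1}$ of $\partial_z\Epm\#\Epm^{-1}$ over $\Epm^{-1}$ propagates to every derivative, producing the factor $h^{-K}$ rather than $h^{-(1+2K)/2}$. The main obstacle of the whole argument is the combinatorial bookkeeping in the iterated product rule for $\partial_h^K\Epm^{-1}$: one must verify that among the many terms generated, every summand receives the same upper bound after substituting the estimates on $\partial_h^{j_\ell}\Epm$ and $\Epm^{-1}$, and one must also confirm that the $x_1/\sqrt{h}$ contributions from differentiating the shifted symbol $\tilde{\msV}$ can always be re-absorbed into operator-norm bounds via the $B^{k+m}_{x_1}\to B^k_{x_1}$ continuity of $x_1^m D_{x_1}^m$.
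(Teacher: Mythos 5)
Your proposal is correct and follows essentially the same route as the paper: differentiate the inverse identity with the $\#_h^K$-corrected Leibniz rule \eqref{eq: diff_sharp}, count one factor of $h^{-1/2}$ (resp. $h^{-1}|\Im z|^{-2}$) per $h$-derivative, handle $(x_2,\xi_2)$-derivatives by iterated commutators with $L_j$ plus the parametrized Beals lemma, and treat $r_n$ via the product rule. The only cosmetic difference is in the first step: the paper gets $\partial_h^K\Epm\in S^{K-1/2}$ by differentiating the Grushin identity $\mathcal E_n=\mathcal E_n\#\mathcal P_n\#\mathcal E_n$ (so it only needs $\partial_h^K\mathcal P_n\in S^{K-\frac12}$ as an operator-valued symbol on the $B^k_{x_1}$-scale), whereas you differentiate the explicit Neumann series for $\Epm$ and absorb the resulting $x_1,D_{x_1}$ factors by the same $B^{k}_{x_1}$-mapping properties — both are fine.
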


\begin{proof}
 Let $\mathcal P_n$ be as in Lemma \ref{lemm:perturbed Grushin}, by \eqref{eq: everything_in_S}, $\msG^\theta - \sqrt{h} z$, $R_n^\pm\in S(\RR_{x_2,\xi_2}^2)$. Furthermore, since $\msG^\theta = \msG_0^\theta + \sqrt{h}\tilde{\msV}^w$, by direct computation, we see $\partial_h^M (\msG^\theta - \sqrt{h}z)\in S^{M - \frac{1}{2}}$ while $\partial_h^M R_n^\pm = 0$, for $M >0$.

 Then consider $\mathcal E_n = \mathcal P_n^{-1}$. First of all, by the proof of Lemma \ref{lemma: Properties of r}, we have $\mathcal E_n(x,D_{x_1},\xi_2)\in S(\RR_{x_2,\xi_2}^2;\mathcal{L}(B_{x_1}^k\times \CC^2;B_{x_1}^{k+1} \times \CC^2))$. By differentiating  $\mathcal E_n = \mathcal E_n \# \mathcal P_n \# \mathcal E_n$ with respect to $h$ and using \eqref{eq: new_sharp} and \eqref{eq: diff_sharp}, we have
 \begin{equation}
   \label{eq: partial_h_mathcal_E}
   \partial_h \mathcal E_n = -\mathcal E_n \# \partial_h \mathcal P_n \# \mathcal E_n + \sum\limits_{|\alpha| = |\beta| = 1} C_{\alpha, \beta} \left( \partial_{x_2,\xi_2}^\alpha \mathcal E_n \# \partial_{x_2,\xi_2}^\beta \mathcal P_n \# \mathcal E_n\right).
 \end{equation}
Since $\partial_h \mathcal P_n \in S^{\frac{1}{2}}$, thus $\partial_h \mathcal E_n \in S^{\frac{1}{2}}$ above. By differentiating \eqref{eq: partial_h_mathcal_E} with respect to $h$ and using  \eqref{eq: new_sharp} and \eqref{eq: diff_sharp}, we see that $\partial_h^2 \mathcal E_n \in S^{\frac{3}{2}}$. An iterative argument shows that $\partial_h^M \mathcal E_n \in S^{M - \frac{1}{2}}$. In particular, $\partial_h^M \Epm \in S^{M - \frac{1}{2}}$. Furthermore, by differentiating $\Epm^{-1} = \Epm^{-1}\# \Epm \#\Epm^{-1}$ with respect to $h$ and using \eqref{eq: diff_sharp} and \eqref{eq: new_sharp}, we have 
  \begin{equation}
  \begin{split}
    \label{eq: partial_h_c}
    \partial_h \Epm^{-1} = -\Epm^{-1} \# \partial_h \Epm \# \Epm^{-1} - \sum\limits_{|\alpha| = |\beta| = 1} C_{\alpha, \beta} \partial_{x_2,\xi_2}^\alpha \Epm^{-1} \# \partial_{x_2,\xi_2}^\beta \Epm \# \Epm^{-1}. 
    \end{split}
  \end{equation}
  When $|\Im z| \geq h^\delta$, by \eqref{eq: bound on r} and \cite[Theorem 4.23(ii)]{Zw12}, we see that 
  \[
  \Vert \partial_h \Epm^{-1}\Vert = \mathcal O(h^{-\frac{3}{2}}|\Im z|^{-2})+ \mathcal O(h^{-\frac{1}{2}}|\Im z |^{-3}) =  \mathcal O(h^{-\frac{3}{2}}|\Im z|^{-2}).\] Furthermore, since $[D_{x_j}, A^W] = (D_{x_j}A)^W$ and $-[x_j, A^W] = (hD_{\xi_j}A)^W$, we see that 
  \[
    \begin{split}
     \Vert \operatorname{ad}_{L_{j_1}} \circ \cdots \operatorname{ad}_{L_{j_N}} (\partial_h \Epm^{-1})^W\Vert = \mathcal O\left(\frac{h^{-\frac{3}{2}}}{|\Im z|^{2}} \ \frac{h^N}{|\Im z|^N}\right).
    \end{split}
  \]
  By \cite[Prop. 8.4]{DS99}, we get 
  \begin{equation}
    \label{eq: first_partial_h}
      \Vert \partial_{x_2}^\alpha\partial_{\xi_2}^\beta\partial_h \Epm^{-1}(x_2,\xi_2;z,h)\Vert_{\CC_{2\times 2}} \leq C_{\alpha,\beta} \max\left(1, \frac{h^{\frac{3}{2}}}{|\Im z|^3}\right) h^{-\frac{3}{2}} |\Im z|^{-2 - |\alpha| - |\beta|}.
  \end{equation}
  Iterating this process by taking $\partial_h$ of \eqref{eq: partial_h_c}, expanding it and using  \eqref{eq: new_sharp},  \eqref{eq: diff_sharp} and \eqref{eq: first_partial_h}, we see that every time we differentiate, we derive an extra order of $1/(h|\Im z|^2)$. Thus we obtain \eqref{eq: partial_Epm} for $M>0$. Then
  \[
    \partial_h r_n = \partial_h \partial_z \Epm \# \Epm^{-1} + \partial_z \Epm \# \partial_h \Epm^{-1} + \sum\limits_{|\alpha| = |\beta| = 1} C_{\alpha, \beta} (\partial_{x_2,\xi_2}^\alpha \partial_z \Epm) \# (\partial_{x_2,\xi_2}^\beta\Epm^{-1}).
  \]
  By \eqref{eq: bound on r}, \eqref{eq: partial_Epm} and \cite[Theorem 4.23(ii)]{Zw12}, we see that $\Vert \partial_h r_n^W\Vert = \mathcal O(h^{-1}|\Im z|^{-2})$. By the same argument as for $\Epm^{-1}$, we get \eqref{eq: partial_r}.
\end{proof}

We shall now focus on $M = 1$, for simplicity, and study the asymptotic expansion of $\partial_h r_n$.
 \begin{lemm}[Asymptotic expansion]
 \label{lemm:bound}
Let $0<\delta<1/2$ and $|\Im z|\geq h^\delta$, then $\partial_h r_n$ has an asymptotic expansion in $S_\delta^{1+2\delta}$: 
\begin{equation}
    \label{eq: asymptotic_of_dh_r_n}
    \partial_h r_n \sim \sum\limits_{j = 1}^\infty  \ \frac{j}{2} \  h^{\frac{j}{2} - 1}r_{n,j} =: \sum\limits_{j = 1}^\infty h^{\frac{j}{2} - 1}q_{n,j}, \text{~where~} r_{n,j} \text{~are~given~in~Lemma~\ref{lemma: explicit_asymptotic_expansion_needed}.}
\end{equation}
Then $h^{\frac{j}{2} - 1} q_{n,j}\in S_\delta^{(j+1)\delta +1 -\frac{j}{2}}$. Let $Q_{n,J} := \partial_h r_n - \sum\limits_{j = 1} ^{J-1} h^{\frac{j}{2}-1} q_{n,j} \in S_\delta^{(J+1)\delta +1 - \frac{J}{2}}$, i.e., $\text{ for all } \alpha, \beta>0$, there is $C_{\alpha, \beta,n}''$ such that 
\begin{equation}
    \label{eq: Q_n,J}
     \sup\limits_{(x_2,\xi_2)\in\RR^2}|\partial_{x_2,}^\alpha\partial_{\xi_2}^\beta Q_{n,J}| \leq C_{\alpha, \beta, n}''h^{\frac{J}{2} - 1 - (J+1)\delta - \delta(|\alpha| + |\beta|)}.
\end{equation}
Furthermore, for the expansion of $\Tr_{\CC^2}(\partial_h r_n)$, we have for $\eta = x_2+i\xi_2$, 
 \begin{equation}
 \label{eq: explicit_exp_of_q_n}
    \begin{split}
      \operatorname{Chiral} \ \msH_{\ch, n} (J = 3): & \Tr_{\CC^2}( h^{-\frac{1}{2}}q_{n,1} + q_{n,2}) =\frac{\lambda_n^2}{z^3}\mathfrak U(\eta), \\
      \operatorname{Anti-Chiral} \ \msH_{\ach,n}^\theta (J = 2): & \Tr_{\CC^2}(h^{-\frac{1}{2}}q_{n,1}) = \frac{s_n^2(z^2+c_n^2)}{(z^2 - c_n^2)^2\sqrt{h} }.
    \end{split}  
  \end{equation}
 \end{lemm}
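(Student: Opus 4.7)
The plan is to derive the expansion of $\partial_h r_n$ directly by differentiating the expansion \eqref{eq: r_n_on_z} of $r_n$ term-by-term in $h$. The key observation, which I would establish first, is that the coefficients $r_{n,j}(x_2,\xi_2;z)$ constructed in Lemma \ref{lemma: explicit_asymptotic_expansion_needed} and Prop. \ref{prop: expofQk} (in Appendix \ref{sec:appendix1}) are $h$-independent: they are built out of $z_{n,0}$, the functions $d_{n,j,k,l}(x_2,\xi_2;z)$ and $e_{n,j,k,\alpha}(x_2,\xi_2)$, none of which carry an $h$-dependence. Defining $q_{n,j} := (j/2)\, r_{n,j}$ for $j\geq 1$, the symbol-class statement $h^{j/2-1} q_{n,j}\in S_\delta^{(j+1)\delta+1-j/2}$ is then immediate from the corresponding bound $h^{j/2} r_{n,j}\in S_\delta^{(j+1)\delta - j/2}$ given in Lemma \ref{lemma: explicit_asymptotic_expansion_needed}, since multiplication by $h^{-1}$ increases the order in $S_\delta$ by exactly one.

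The main work is controlling the remainder. Because each $r_{n,j}$ is $h$-independent, term-by-term differentiation yields the identity
\[
Q_{n,J} \;=\; \partial_h r_n - \sum_{j=1}^{J-1} h^{j/2-1} q_{n,j} \;=\; \partial_h R_{n,J},
\]
where $R_{n,J}\in S_\delta^{(J+1)\delta - J/2}$ is the remainder from \eqref{eq: R_J_symbol_class}. I would then bound $\partial_h R_{n,J}$ by mimicking the Beal's-lemma argument of Lemma \ref{lemm: boundedness_number_2}: start from the basic resolvent-type identity $\partial_h R_{n,J}^W = \partial_h r_n^W - \partial_h(\sum_{j<J} h^{j/2} r_{n,j})^W$ and use the bound \eqref{eq: partial_r}, which gives a loss of one power of $h^{-1}|\Im z|^{-2}$ relative to \eqref{eq: bound on r}; then iteratively commute with linear forms $L_j = l_j(x_2, hD_{x_2})$, noting that each such commutator produces only the expected factor $h/|\Im z|$ both from the Moyal calculus on $\partial_h r_n$ and on the subtracted partial sum. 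Applying the parametrized Beal's lemma \cite[Prop.~8.4]{D-S} with $|\Im z|\geq h^\delta$ converts these operator bounds into the symbol estimate \eqref{eq: Q_n,J}.

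To obtain the explicit formulas \eqref{eq: explicit_exp_of_q_n}, I would simply differentiate the trace expansions \eqref{eq: explicite_expansion_of_trace} in $h$. In the chiral case, the $h$-independent term $2/z$ contributes nothing, the $h^{1/2}$-coefficient vanishes (so $q_{\ch,n,1}=0$), and the coefficient of $h$ in $\Tr_{\CC^2}$ is $\lambda_n^2 \mathfrak U(\eta)/z^3$, giving the stated formula. In the anti-chiral case, $q_{\ach,n,1} = \tfrac12 r_{\ach,n,1}$, so $h^{-1/2}\Tr_{\CC^2}(q_{\ach,n,1}) = s_n^2(z^2+c_n^2)/((z^2-c_n^2)^2\sqrt h)$ as claimed.

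The main obstacle I expect is the second step: although the formal picture is clean, one must verify carefully that differentiation in $h$ of the remainder $R_{n,J}$ produces exactly one $h^{-1}$ loss and no additional loss in $|\Im z|^{-1}$ beyond what is tracked by the $S_\delta$-scale; this requires keeping track of the mixed Moyal product $\#_h^K$ in \eqref{eq: new_sharp}--\eqref{eq: diff_sharp} and using uniformity of the $\Epm$, $\Epm^{-1}$ estimates on $|z|\leq 2\|\msV\|_\infty$, $|\Im z|\geq h^\delta$ coming from Lemma \ref{lemm: boundedness_number_2}.
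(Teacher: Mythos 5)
Your first and third steps are fine: the coefficients $r_{n,j}$ are indeed $h$-independent, so the symbol-class membership of $h^{j/2-1}q_{n,j}$ follows from that of $h^{j/2}r_{n,j}$, the identity $Q_{n,J}=\partial_h R_{n,J}$ is correct, and your extraction of the explicit formulas \eqref{eq: explicit_exp_of_q_n} from \eqref{eq: explicite_expansion_of_trace} is exactly right.

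The gap is in the remainder estimate \eqref{eq: Q_n,J}. You propose to bound $\partial_h R_{n,J}$ by ``mimicking the Beals-lemma argument of Lemma \ref{lemm: boundedness_number_2}'', but that argument only controls $\partial_h r_n$ itself, giving $\partial_h r_n\in S_\delta^{1+2\delta}$ (the $K=1$ case of \eqref{eq: partial_r}). Writing $\partial_h R_{n,J}=\partial_h r_n-\sum_{j<J}h^{j/2-1}q_{n,j}$ and estimating the two pieces separately can never do better than the worst class among them, namely $S_\delta^{1+2\delta}$; it cannot produce the improved gain $h^{J/2-1-(J+1)\delta}$ of \eqref{eq: Q_n,J}. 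Knowing $R_{n,J}=\mathcal O(h^{J/2-(J+1)\delta})$ in every seminorm does \emph{not} by itself imply the analogous bound for $\partial_h R_{n,J}$ with one power of $h$ lost --- differentiation and smallness in a parameter do not commute without an extra input. The paper supplies that input by a Borel-summation argument: setting $g=\sqrt h$, one builds (Borel's theorem, \cite[Theorem 4.15]{ev-zw}, together with \cite[Theorem 1.2.6]{H1} for $g$-smoothness) a function $\tilde r_n\in C^\infty(\RR_g^+\times\RR^2_{x_2,\xi_2})$ whose Taylor series at $g=0$ is $\sum_j g^j r_{n,j}$, so that $\partial_g\tilde r_n$ has the termwise-differentiated expansion by construction; then $r_n-\tilde r_n=\mathcal O(g^\infty)$ in all seminorms, and since both functions are smooth in $g$ up to $g=0$ (this is where Lemma \ref{lemm: boundedness_number_2} enters), Taylor's theorem forces $\partial_g(r_n-\tilde r_n)=\mathcal O(g^\infty)$ as well, which is precisely the statement that the termwise derivative is an asymptotic expansion of $\partial_h r_n=\tfrac{1}{2\sqrt h}\partial_g r_n$. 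To complete your proof you would either need to import this Borel argument, or else differentiate the explicit remainder formulas coming from the parametrix construction of $\Epm^{-1}$ in Appendix \ref{sec:appendix1} and track the $h$-powers by hand --- substantially more work than the commutator scheme you sketch.
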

 We will prove that the termwise differentiation of the asymptotic expansion of $r_n$ in \eqref{eq: r_n_on_z} is indeed an asymptotic expansion of $\partial_h r_n$ in $S_\delta^{2\delta+1}$.
 \begin{proof}
 Let $g = \sqrt{h}$ and consider $r_n \sim \sum\limits_{j = 0}^\infty g^j r_{n,j}$. By Borel's theorem, see for instance \cite[Theorem 4.15]{Zw12} or \cite[Theorem 1.2.6]{Ho03}, we see that for such $r_{n,j}\in C^\infty(\RR_{x_2,\xi_2}^2)$, there is $\tilde{r}_n\in C^\infty(\RR_{g}^+ \times \RR_{x_2,\xi_2})$ such that $\tilde{r}_n = \sum\limits_{j = 0}^\infty g^j r_{n,j}$. Thus
 \begin{equation}
     \label{eq: termwise_diff}
    \partial_g \tilde{r}_n = \sum\limits_{j = 1}^\infty j g^{j-1} r_{n,j}.
 \end{equation}
  On the other hand, uniqueness in Borel's theorem implies that $\tilde{r}_n - r_n = \mathcal O(h^\infty)$. Thus $\partial_g \tilde{r}_n - \partial_g r_n = \mathcal O(g^\infty)$. Thus \eqref{eq: termwise_diff} is also an asymptotic expansion of $\partial_g r_n$. Furthermore, since $\partial_h r_n = \frac{1}{2\sqrt{h}}\partial_g r_n$, thus we proved $\partial_h r_n$ has the following asymptotic expansion in $S_\delta^{1 + 2\delta}$:
  \[
  \partial_h r_n \sim \frac{1}{2\sqrt{h}}\sum\limits_{j =1}^\infty j h^{\frac{j - 1}{2}}r_{n,j} = \sum\limits_{j =1}^\infty \ \frac{j}{2}h^{\frac{j}{2} -1 }r_{n,j}.
  \]
  The rest of the Lemma follows from Lemma \ref{lemma: explicit_asymptotic_expansion_needed}.
\end{proof}
%  \begin{rem}
%   The error $Q_{\ch,n,3}= \Tr_{\CC^2}(\partial_h r_n)-\Tr_{\CC^2}(q_{n,0} + h^{\frac{1}{2}}q_{n,1} + hq_{n,2})$ satisfies for the chiral model $\Vert R_{\ch,n,3} \Vert_{\infty} = \mathcal O(h^{\frac{1}{2} - 5\delta})$ and for the anti-chiral model $R_{\ach,n,2}= \Tr_{\CC^2}(\partial_h r_n)-\Tr_{\CC^2}(q_{n,0} + h^{\frac{1}{2}}q_{n,1})$ satisfies $\Vert R_{\ach,n,2} \Vert_{\infty} = \mathcal O(h^{-4\delta}).$
%   \end{rem}
 \begin{proof}[Proof of Prop. \ref{prop: differentiability}
]
Recall that $f_0(z) = f(z+\sqrt{2|n|/h})$ also depends on $h$. By differentiating \eqref{eq: Trace_formula} with respect to $h$, we get
\begin{equation}
      \begin{split}
          \partial_h &\tilde{\Tr}(f(\msH^{\theta}_{c,n}))  = \frac{i}{4\pi^2 h^2|E|} \int_{\CC} \int_{E}\partial_{\bar{z}} \tilde{f}_0(z) \Tr_{\CC^2}(r_n) \ dx_2 \ d\xi_2 ~dz\wedge d\bar{z}\\
          &+\frac{i\sqrt{2\vert n\vert/h}}{8\pi^2 h^{2}|E|} \int_{\CC} \int_{E}\partial_{\bar{z}} \tilde{f'}_0(z+\sqrt{2n/h}) \Tr_{\CC^2}(r_n) \ dx_2 \ d\xi_2 ~dz\wedge d\bar{z} \\
          &- \frac{i}{4\pi^2 h|E|} \int_{\CC} \int_{E}\partial_{\bar{z}} \tilde{f}_0(z) \Tr_{\CC^2}(\partial_h r_n) \ dx_2 \ d\xi_2 ~dz\wedge d\bar{z}:= -B_1 - B_2 - B_3.
       \end{split}
  \end{equation}
where the asymptotic expansion of $B_{1} = \frac{1}{h}\tilde{\Tr} (f(\msH^\theta))$ and $B_{2} = \sqrt{\frac{|n|}{2h^3}}\tilde{\Tr}(f'(\msH^\theta))$ are known by \eqref{eq: expansion}. While $B_3$ can be computed by splitting the integral as in Subsection \ref{ss: proof_of_dos_expansion}:
\[
  \begin{split}
    \left[\int_{\CC} \partial_{\bar{z}}\tilde{f}_0 \Tr_{\CC^2} (\partial_h r_n) dz \wedge d\bar{z}\right](x_2,\xi_2;h) =& \int_{\CC} \partial_{\bar{z}}\tilde{f}_0 \sum\limits_{j = 1}^{J-1} h^{\frac{j}{2}-1}\Tr_{\CC^2}(q_{n,j})dz\wedge d\bar{z}\\
    +&\int_{|\Im z|\geq h^\delta} \partial_{\bar{z}}\tilde{f}_0 \Tr_{\CC^2}(Q_{n,J}) dz\wedge d\bar{z}\\
    +& \int_{|\Im z|\leq h^\delta} \partial_{\bar{z}}\tilde{f}_0 \Tr_{\CC^2}(Q_{n,J}) dz\wedge d\bar{z}
    := A_{1}' + A_{2}' + A_{3}',
  \end{split}
\]
and we imitate the estimates of $A_1$, $A_2$, $A_3$ in the Subsection \ref{ss: proof_of_dos_expansion} with $\partial_h r_n$ instead of $r_n$, and we use Lemma \ref{lemm: boundedness_number_2} and \ref{lemm:bound} instead of Lemma \ref{lemma: Properties of r} and \ref{lemma: explicit_asymptotic_expansion_needed}. In short, we need \eqref{eq: Cauchy_integral} and \eqref{eq: explicit_exp_of_q_n} for $A_1'$, \eqref{eq: Q_n,J} for $A_2'$, \eqref{eq: partial_r} and Lemma \ref{lemm:bound} for $A_3'$ and we derive that
\[
\begin{cases}
    A_{1,\ch}' = \pi i f''(\lambda_{n,B})\lambda_n^2\mathfrak U(\eta), \quad A_{1,\ach}' = \frac{\pi i}{\sqrt{h}} \left(s_n^2f'(\lambda_{n,B} - c_n) + s_n^2f'(\lambda_{n,B} + c_n)\right), \\
    |A_2'|\leq C_{n,K,f,\msV} h^{\frac{J}{2} - 1 - (J+1)\delta}, |A_3'|\leq C_{n,K,f,\msV} h^{(K - 2)\delta - 1},
\end{cases}
\]
from which we can find $B_3$. We summarize $B_1$, $B_2$, $B_3$ below:

For the chiral model where $J = 3$, we have
\[
\begin{split}
    &B_{1,\ch} = \frac{1}{\pi h^2} f(\lambda_{n,B}) + \frac{|n|}{2\pi h}\Ave(\mathfrak U) f''(\lambda_{n,B}) + \mathcal O_{n,K,f,\msV}(h^{-\frac{1}{2} - 4\delta} + h^{(K - 1)\delta - 2}),\\
    &B_{2,\ch}  = \frac{\sqrt{2|n|}}{2\pi h^{\frac{5}{2}}} f'(\lambda_{n,B}) + \frac{(2|n|)^{\frac{3}{2}}}{8\pi h^{\frac{3}{2}}}\Ave(\mathfrak U) f'''(\lambda_{n,B}) + \mathcal O_{n,K,f,\msV}(h^{-1 - 4\delta} + h^{(K -1) \delta - \frac{5}{2}})\\
    &B_{3,\ch} = \frac{|n|}{2\pi h} f''(\lambda_{n,B})\Ave(\mathfrak U) + \mathcal O_{n,K,f,\msV}(h^{-\frac{1}{2} - 4\delta} + h^{(K-2)\delta - 2}).
\end{split}
\]
When $n \neq 0$ and $K>\frac{3}{2\delta} - 3$, we have 
\[
    \partial_h\tilde{\Tr}(f(\msH_{\ch,n})) = -\frac{\sqrt{2|n|}}{2\pi h^{\frac{5}{2}}} f'(\lambda_{n,B}) - \frac{1}{\pi h^2} f(\lambda_{n,B}) - \frac{(2|n|)^{\frac{3}{2}}}{8\pi h^{\frac{3}{2}}}\Ave(\mathfrak U) f'''(\lambda_{n,B}) + \mathcal O_{n,K,f,\msV}h^{-1 - 4\delta}.
\]
When $n  = 0$ and $K>\frac{3}{2\delta} - 3$, since $B_2 = 0$, we get a better estimate:
\[
    \partial_h\tilde{\Tr}(f(\msH_{\ch,0})) = -\frac{1}{\pi h^2} f(0)  - O_{0,K,f,\msV}h^{-\frac{1}{2} - 4\delta}.
\]
For the anti-chiral model where $J = 2$, we have
\[
\begin{split}
     &B_{1,\ach}   = \frac{1}{2\pi h^2}t_{n,0}(f) + \frac{1}{2\pi h^{\frac{3}{2}}}t_{n,1}(f) + \mathcal O_{n,K,f,\msV}(h^{-1-3\delta} + h^{(K-1)\delta - 2}),\\
    &B_{2,\ach}  = \frac{\sqrt{2|n|}}{4\pi h^{\frac{5}{2}}} t_{n,0}(f') +\frac{\sqrt{2|n|}}{4\pi h^{2}}t_{n,1}(f') + \mathcal O_{n,K,f,\msV}(h^{-\frac{3}{2} - 3\delta} + h^{(K - 1)\delta - \frac{5}{2}}),\\
    &B_{3,\ach} =  \frac{1}{4\pi h^{\frac{3}{2}}}t_{n,1}(f) +  \mathcal O_{n,K,f,\msV}(h^{-1 - 3\delta} + h^{(K - 2) \delta - 2}).
\end{split}
\] 
Thus when $n \neq 0$ and $K>\frac{1}{\delta} - 2$, we have
\[
    \partial_h \tilde{\Tr}(f(\msH^\theta_{\ach})) = -\frac{\sqrt{2|n|}}{4\pi h^{\frac{5}{2}}} t_{n,0}(f') - \frac{1}{4\pi h^2}\left(2t_{n,0}(f) +\sqrt{2|n|}t_{n,1}(f')\right) - \mathcal O_{n,K,f,\msV}h^{-\frac{3}{2} - 3\delta}.
\]
If $n = 0$ and $K>\frac{1}{\delta} - 2$, since $B_2 = 0$, we get a better estimate:
\[
\partial_h \tilde{\Tr}(f(\msH^\theta_{\ach})) =- \frac{1}{2\pi h^2}t_{0,0}(f)  - \frac{3}{4\pi h^{\frac{3}{2}}}t_{0,1}(f)+  O_{0,K,f,\msV}h^{-1 - 3\delta}.
\]
Recall $h =\frac{1}{B}$. By $\partial_B = -\frac{1}{B^2} \partial_h$, we get the results \eqref{eq: dh_chiral}, \eqref{eq: dh_antichiral} and \eqref{eq: n_is_0}.
% Similar argument for anti-chiral case gives us:
% \[
% \begin{split}
%      &B_{1,\ach}   = -\frac{1}{h^2}t_{n,0}(f) - \frac{1}{h^{\frac{3}{2}}}t_{n,1}(f) -C_{n,K,f,\msV}(h^{-1-\delta} + h^{(N-1)\delta - 2}),\\
%     &B_{2,\ach}  = -\frac{\sqrt{2|n|}}{2h^{\frac{5}{2}}} t_{n,0}(f') -\frac{\sqrt{2|n|}}{2h^{2}}t_{n,1}(f') - C_{n,K,f,\msV}'(h^{-\frac{3}{2} - \delta} + h^{(N - 1)\delta - \frac{5}{2}}),\\
%     &B_{3,\ach} =  \frac{1}{2h^{\frac{3}{2}}}t_{n,1}(f) +  C_{n,K,f,\msV}(h^{-1} + h^{-4\delta}).
% \end{split}
% \]
% As for $B_{3,\ach}$, similarly, we estimate $A_j$, $j = 1,2,3$. We have, 
% \[
% \begin{split}
%     & |A_1| \leq C_{n,K,f,\msV} h^{-1},\quad |A_3| \leq C_{n,K,f,\msV} h^{-4\delta},\\
%     & A_2 = \frac{\pi i}{\sqrt{h}} \left(s_n^2f'(\lambda_{n,B} - c_n) + s_n^2f'(\lambda_{n,B} + c_n)\right) .
% \end{split}
% \]
% Thus 
% Combining the estimates of $B_1$, $B_2$, $B_3$ together, we have
% \[
%     \partial_h \tilde{\Tr}(f(\msH_\ach^\theta)) = -\frac{\sqrt{2|n|}}{2h^{\frac{5}{2}}} t_{n,0}(f') - \frac{\left(t_{n,0}(f) - \frac{\sqrt{2|n|}}{2} t_{n,1} (f')\right )}{h^2}   - C_{n,K,f,\msV}''h^{-\frac{3}{2} - \delta}.
% \]
\end{proof}

\section{Magnetic response quantities}
\label{sec:applications}

This section discusses applications of the regularized trace expansions derived in the previous section, cf. Theorems \ref{thm: chiral trace} and \ref{thm: anti-chiral trace} as well as Proposition \ref{prop: differentiability}. They form the rigorous foundation of our analysis in this section and we shall focus on qualitative features rather here, instead. 
\begin{figure}
    \centering
    \includegraphics[width=7.5cm]{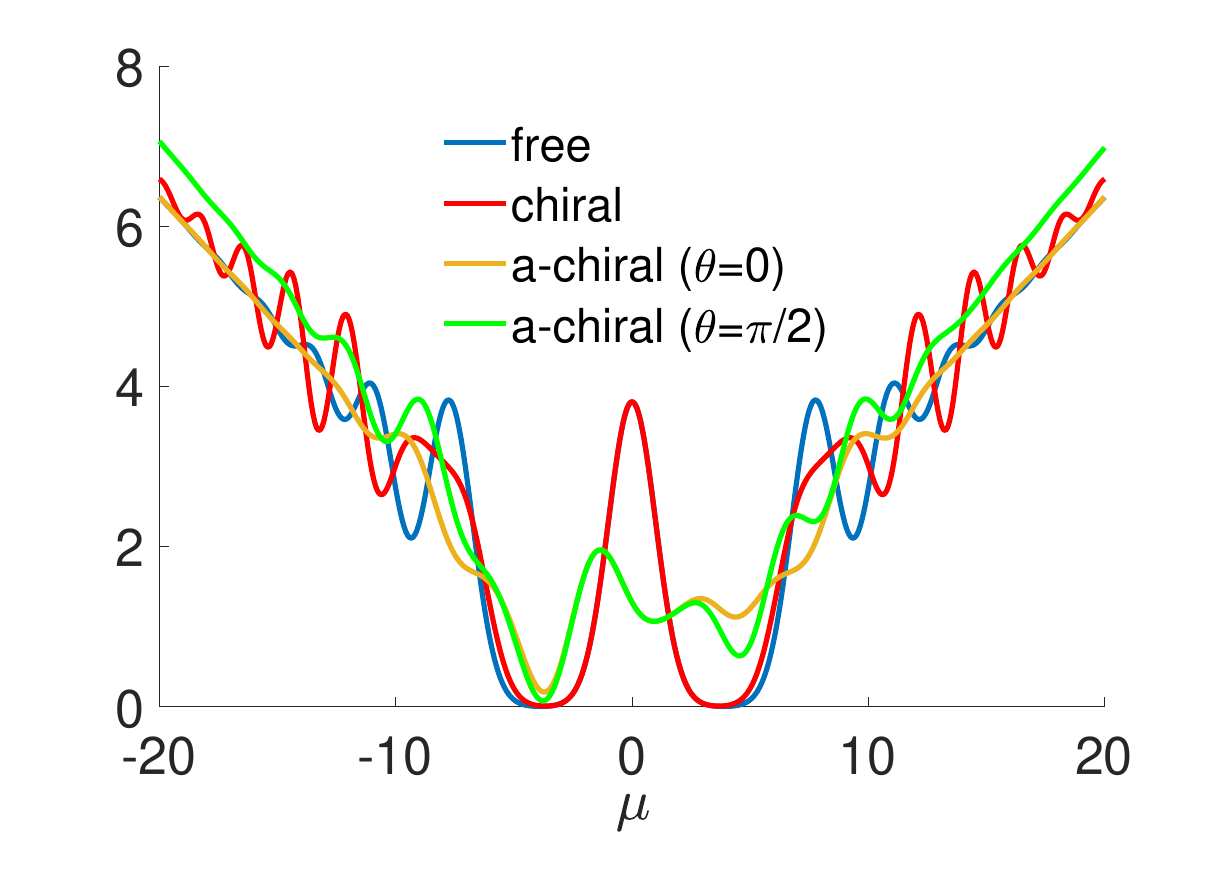} 
    \includegraphics[width=7.5cm]{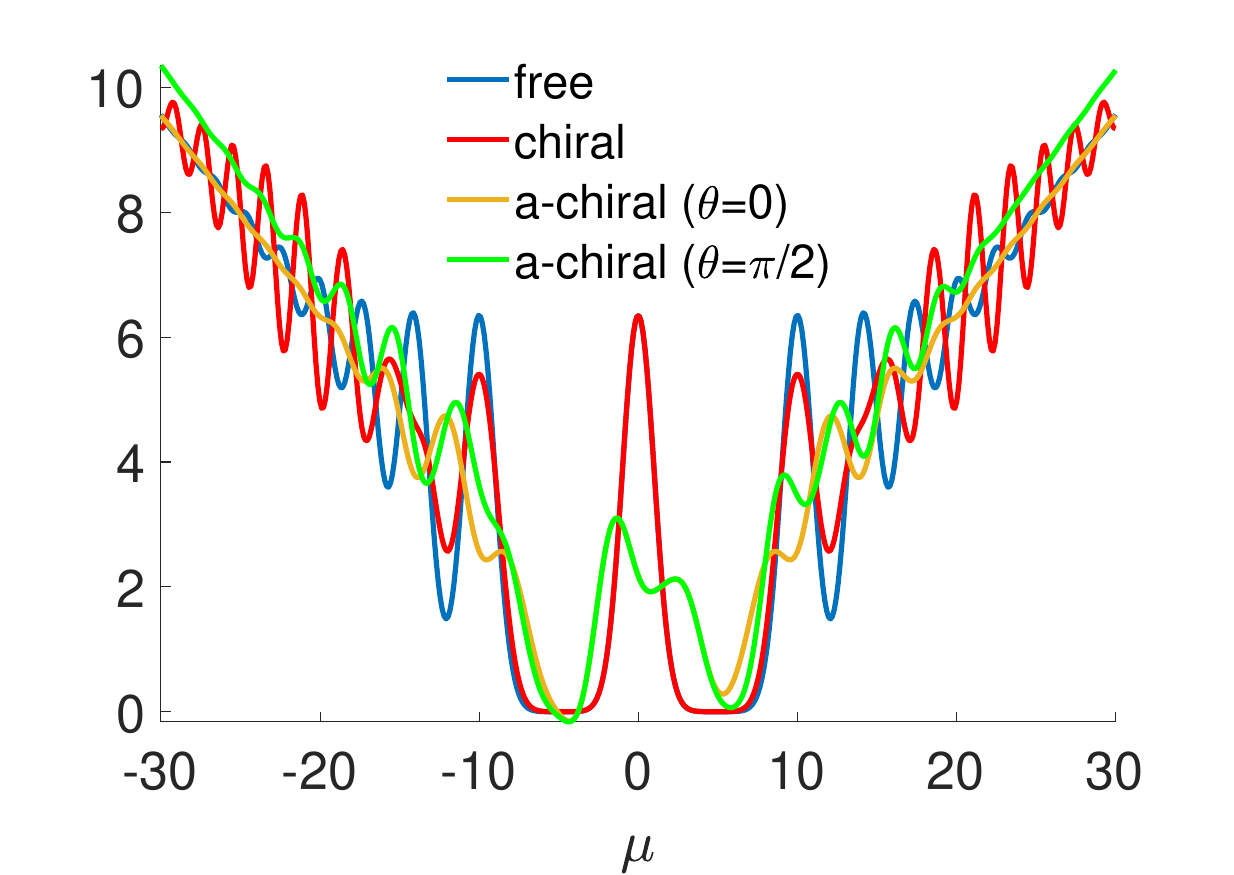} 
    \caption{SdH oscillations: Smoothed out DOS $\rho(f_{\mu})$ with $f_{\mu}(x) = e^{-\frac{(x-\mu)^2}{2\sigma^2}}/\sqrt{2\pi}\sigma$ illustrating the oscillatory features. On the left, $B=30$ and on the right $B=50$ for $\sigma=1.$}
    \label{fig:gaussfaltung}
\end{figure}

\begin{figure}
    \includegraphics[width=7.5cm]{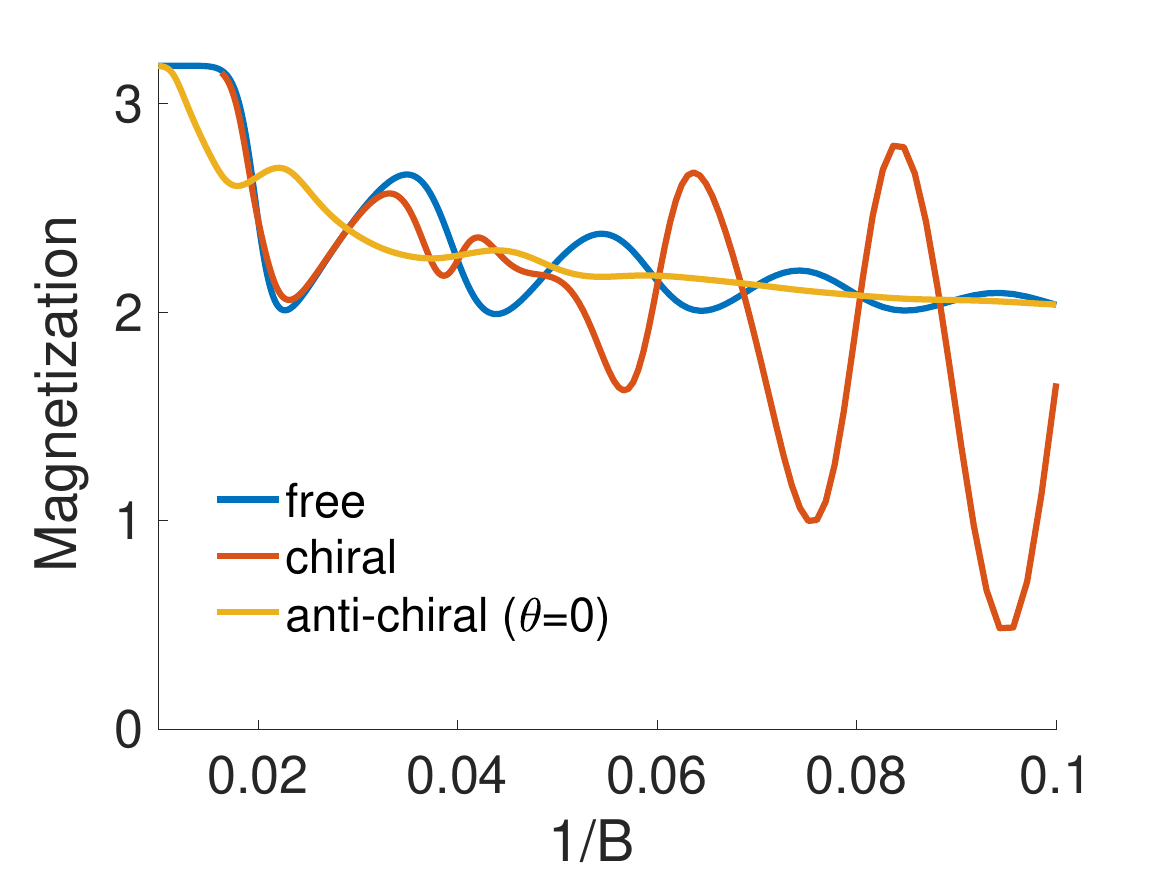}
    \includegraphics[width=7.5cm]{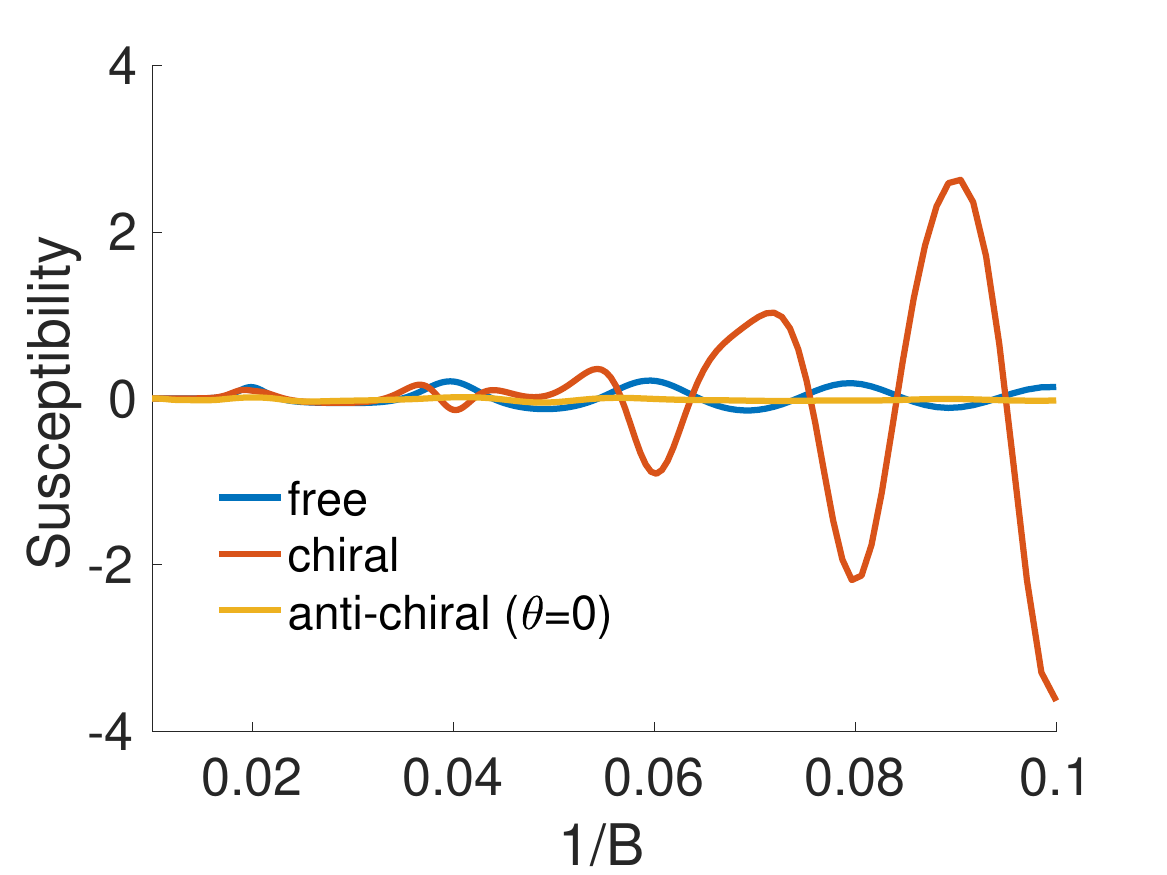}
     \caption{Magnetization and susceptibility for $\beta=4$, $\alpha_i=3/5,$ and chemical potentials  $\mu=5$ (left) and $\mu=10$ (right).}
    \label{fig:dHvA}
\end{figure}
Our main contribution on magnetic response properties of TBG is a careful analysis of the oscillatory behaviour of the DOS. While this effect can be easily explained using the Poisson summation formula, we shall illustrate this phenomenon, by considering a Gaussian density $f_{\mu}(x) = e^{-\frac{(x-\mu)^2}{2\sigma^2}}/\sqrt{2\pi}\sigma$ and analyze the Shubnikov--de Haas (SdH) oscillations in a smoothed-out version of the DOS $\mu \mapsto \rho(f_{\mu})$ in Figure \ref{fig:gaussfaltung} for $\sigma=1$ using the asymptotic formulae of Theorems \ref{thm: chiral trace} and \ref{thm: anti-chiral trace}. 
As a general rule from our study, we find that the AB/BA interaction leads to an enhancement of this oscillatory behaviour compared to the non-interacting case, while the $AA^{\prime}$/$BB^{\prime}$ interaction damps oscillations. The smoothing effect of the $AA^{\prime}$/$BB^{\prime}$ interaction is due to a splitting and broadening of the highly degenerate Landau levels. This splitting has also consequences for the Quantum Hall effect, see Fig. \ref{fig:QHE}. We also study the de Haas--van Alphen (dHvA) effect in TBG, see Fig. \ref{fig:dHvA} and \ref{fig:my_label} for which we find a similar phenomenon.

We study magnetic response quantities by thoroughly analyzing the following cases: 
\begin{itemize}
  \item The free or non-interacting case, corresponds to two non-interacting sheets of graphene modeled by the direct sum of two magnetic Dirac operators, see also \cite{BZ19,BHJZ21} for similar results in a quantum graph model and \cite{SGB04} for a thorough analysis of the magnetic Dirac operator, directly.
    \item The chiral case, which corresponds to pure AB/BA interaction.
     \item The anti-chiral case, which corresponds to pure $AA^{\prime}$/$BB^{\prime}$ interaction.
\end{itemize}
For our analysis of the de Haas-van Alphen effect, we shall employ a cut-off function $\eta_N \in C_c^{\infty}(\RR)$ that is one on the interval $[0,\sqrt{2BN}]$ and smoothly decays to zero outside of that interval, enclosing precisely $N+1$ Landau levels and $\eta_N^{\operatorname{sym}}$ which is equal to one on $[-\sqrt{2BN},\sqrt{2BN}].$ The choice of cut-off function mainly plays the role of a reference frame. In particular, for the study of magnetic oscillations it seems more natural to consider $\eta_N$ instead of $\eta_N^{\operatorname{sym}}$ as the former cut-off function singles out the effect of individual Landau levels moving past a fixed chemical potential $\mu.$ 
We shall employ the leading order terms for the regularized trace in this section, as specified in  Theorems \ref{thm: chiral trace} and \ref{thm: anti-chiral trace} and Proposition \ref{prop: differentiability}. For this reason, we write functionals $\rho(f),$ where $f \in C^{\infty}(\RR),$ as $\rho(f) \sim g,$ to indicate that $g$ are the first terms in the asymptotic expansion of $\rho(f)$ and analogously for derivatives of $\rho(f)$ with respect to the magnetic field. 
\begin{figure}[!ht]
    \centering
    \includegraphics[width=7.5cm]{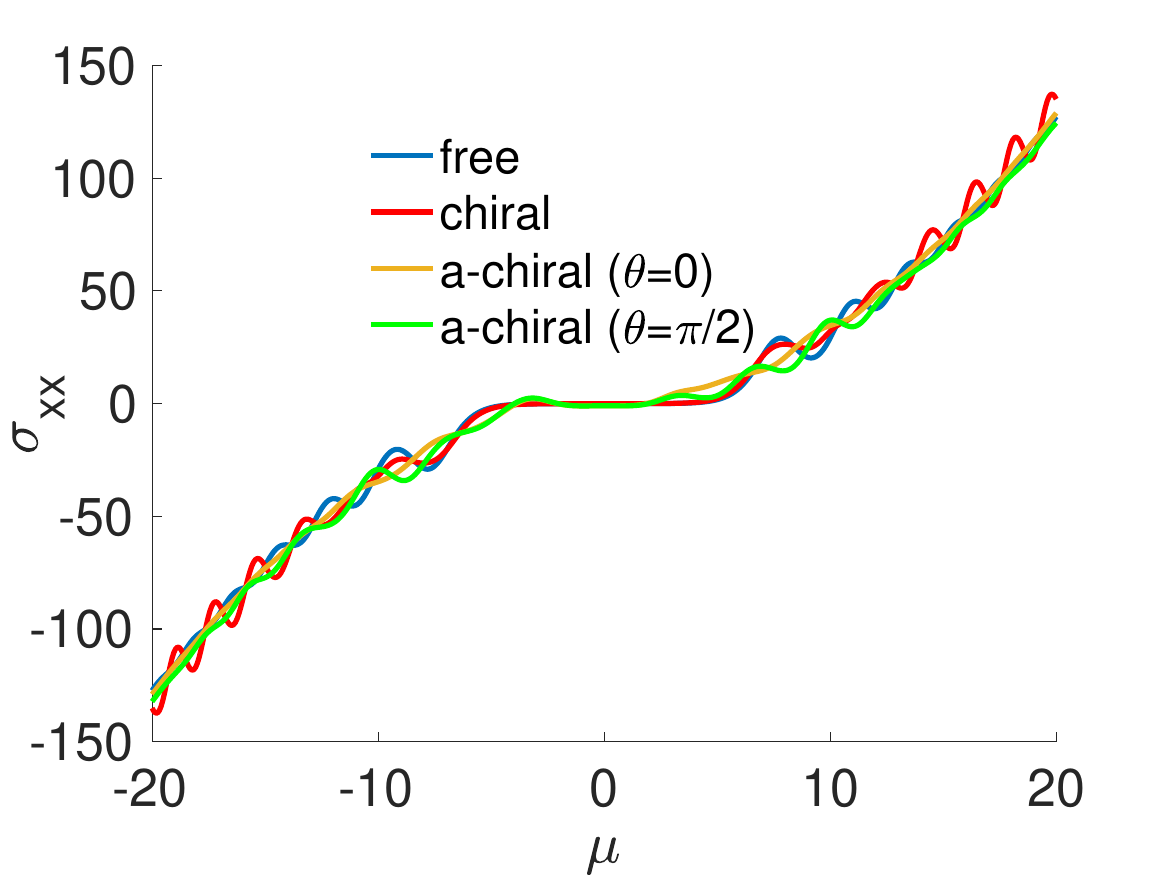} 
    \includegraphics[width=7.5cm]{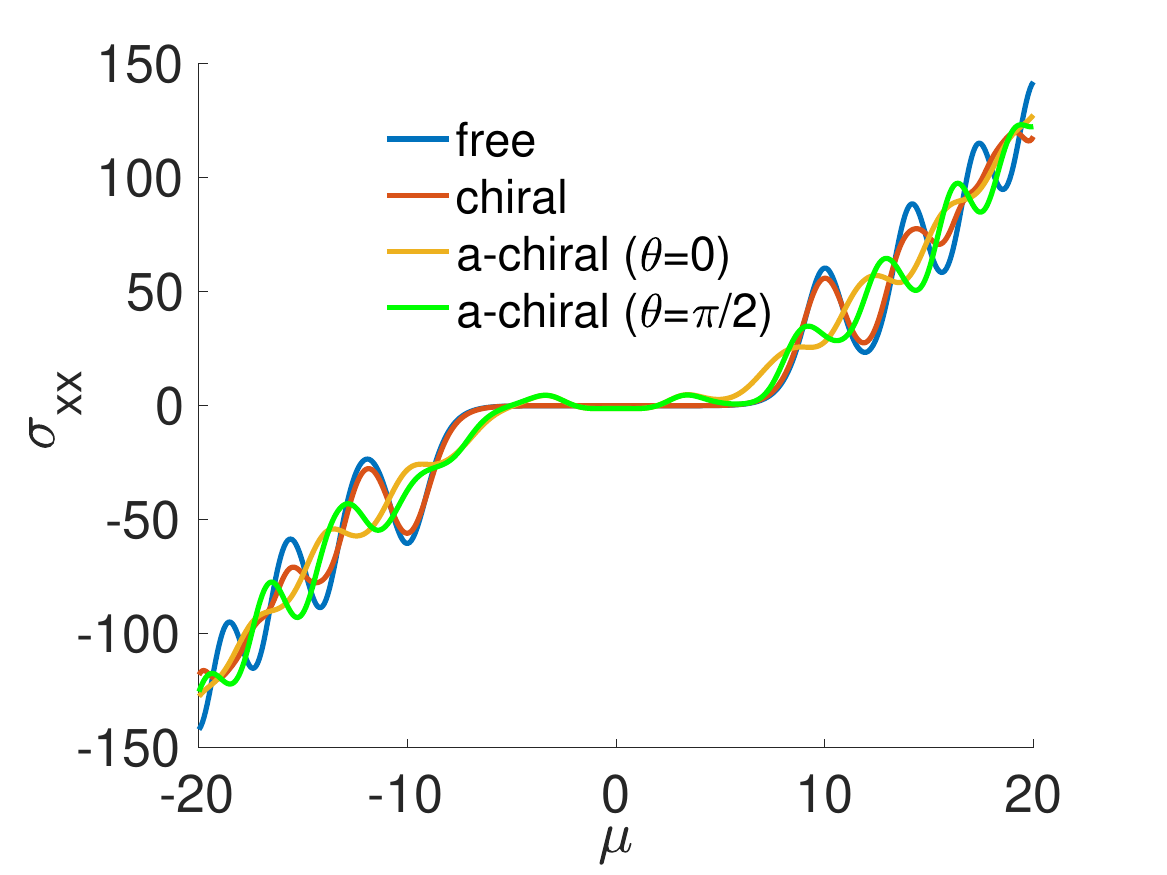}  \\
    \caption{Smoothed out longitudinal conductivity $\sigma_{xx} \propto -\rho(\lambda n'_{\beta}(\lambda-\mu))$ with $n_{\beta}$, the Fermi-Dirac distribution, showing Shubnikov-de Haas oscillations. On the left, $B=30$ and on the right $B=50$ for $\beta=1.5.$ with $\alpha_i = \frac{3}{5}.$}
    \label{fig:sigmaxx_chem}
\end{figure}
\subsection{Shubnikov-de Haas oscillations}
We shall start by discussing \emph{Shubnikov - de Haas (SdH)} oscillations in the density of states. A common method of measuring SdH oscillations is by measuring longitudinal conductivity and resistivity, see also \cite{W11,T11}. In the following, let $\sigma \in \mathbb R^{2 \times 2}$ be the conductivity matrix, such that the current density $j=\sigma E$, where $E$ is an external electric field, then the resistivity matrix is just $\rho = \sigma^{-1}.$ Hence, we shall focus on conductivities in the sequel.

The SdH oscillations are most strongly pronounced at low temperatures in the regime of strong magnetic fields and describe oscillations in the longitudinal conductivity $\sigma_{xx}$ of the material. 

The expression for the longitudinal conductivity goes back to Ando et al \cite{A70,A82} who derived the following relation, see also \cite{FS14},  
\[\sigma_{xx}(\beta,\mu,B) = -\int_{0}^{\infty} n_{\beta}'(\lambda-\mu) \lambda \eta^{\text{sym}}_N(\lambda) \ d\rho(\lambda), \]
where $n_{\beta}(x) = \frac{1}{e^{\beta x}+1}$ is the Fermi-Dirac statistics.
In the free case, i.e. without any tunnelling potential, the oscillations happen precisely at the relativistic Landau levels. For the chiral model, oscillations caused by higher Landau levels are enhanced compared to the free case, whereas oscillations in the anti-chiral case are much more smoothed out.

The oscillatory behaviour of the longitudinal conductivity is visible both as a function of chemical potential, for a fixed magnetic field strength, as shown in Fig. \ref{fig:sigmaxx_chem} as well as function of inverse magnetic field in Fig. \ref{fig:sigmaxx_B} for fixed chemical potential.

\begin{figure}[!ht]
    \centering
    \includegraphics[width=7.5cm]{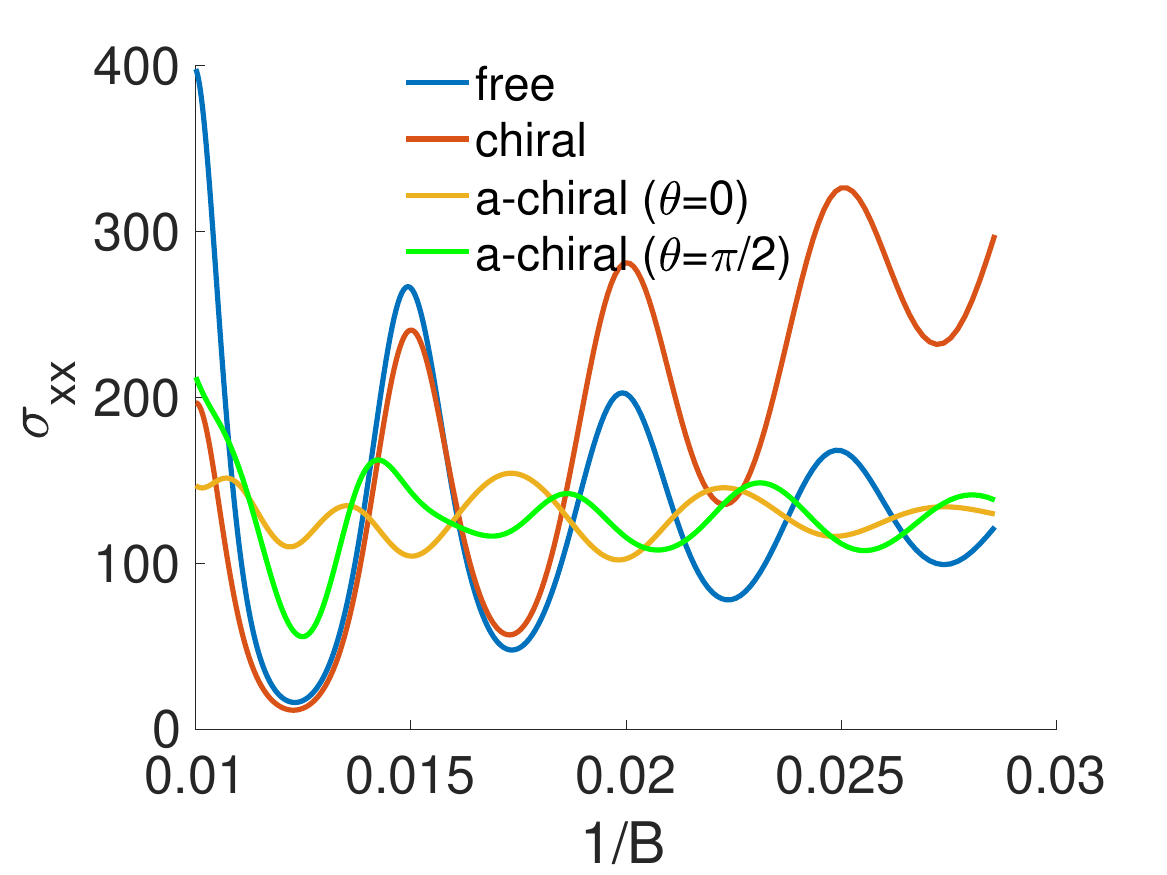} 
    \includegraphics[width=7.5cm]{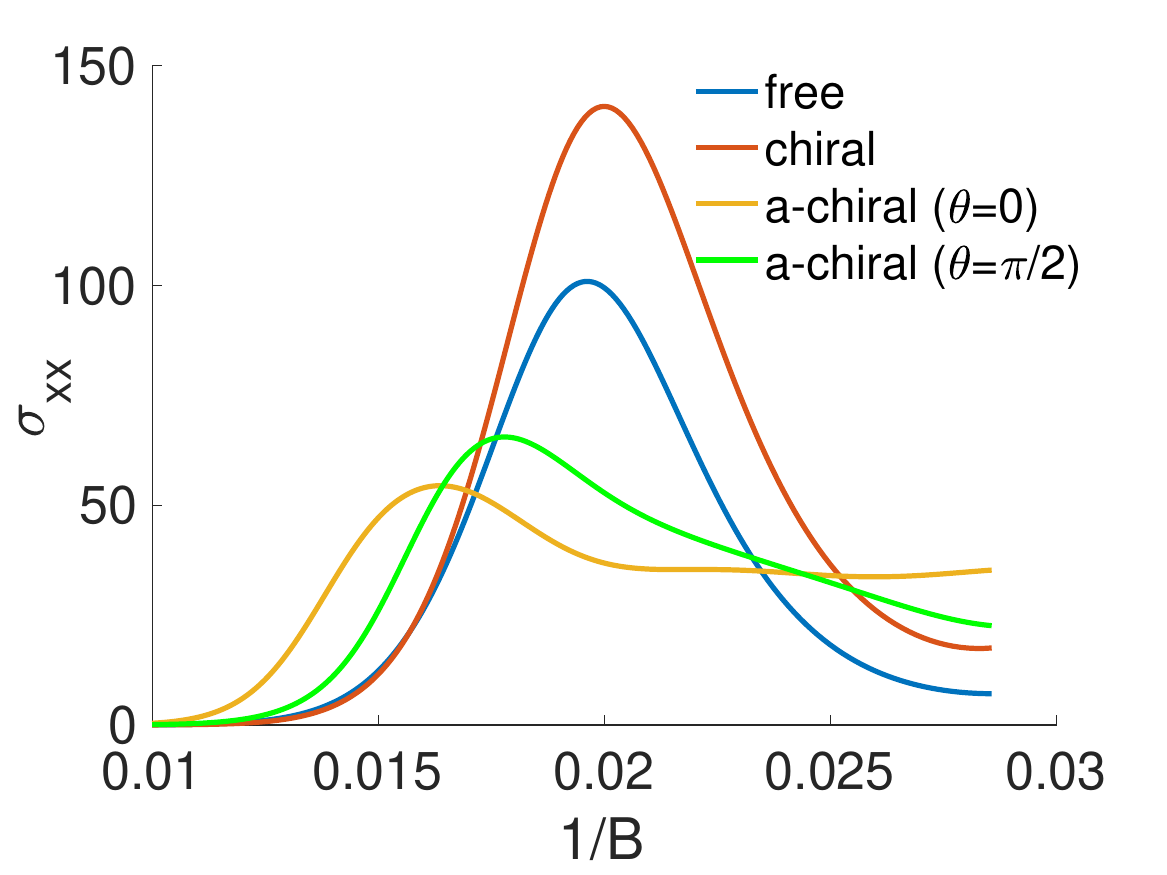}  \\
    \caption{Smoothed out longitudinal conductivity $\sigma_{xx} \propto -\rho(\lambda n'_{\beta}(\lambda-\mu))$ with $n_{\beta}$, the Fermi-Dirac distribution, showing Shubnikov-de Haas oscillations. On the left, $B=30$ and on the right $B=50$, both for $\beta=2.5.$ with $\alpha_i = 0.35.$}
    \label{fig:sigmaxx_B}
\end{figure}

\subsection{De Haas-van Alphen oscillations}

In 1930, de Haas and van Alphen who discovered that both the magnetization and the magnetic susceptibility of metals show an oscillatory profile as a function of $B^{-1}.$ This effect is called the de Haas-van Alphen (dHvA) effect. Even in the simpler case of graphene, both the experimental as well as theoretical foundations of that effect are not yet well-understood \cite{L11,KH14,SGB04}. One problem in
understanding the dHvA effect \cite{SGB04}, lies in the dependence of the chemical potential on the external magnetic field. To simplify mathematical analysis, it is more convenient to work in the grand-canonical ensemble, which is also discussed in \cite{CM01,SGB04,KF17}. The comparison with the canonical ensemble is made in this subsection as well.

 The grand thermodynamic potential for a DOS measure $\rho$, at inverse temperature $\beta,$ and field-independent chemical potential $\mu$ is defined as 
\[ \Omega_\beta(\mu,B):= (f_{\beta} * (\eta_N \rho))(\mu),\]
where $f_{\beta}(x):=-\beta^{-1} \log(e^{\beta x}+1).$ 
The magnetization $M$ and susceptibility $\chi$ are then in the grand-canonical ensemble defined as 
$$ M(\beta,\mu,B)  = - \frac{\partial \Omega_{\beta}(\mu,B) }{\partial B} \text{ and } \chi(\beta,\mu,B) = \frac{\partial M_{\beta}(\mu,B)}{\partial B}.$$
The susceptibility describes the response of a material to an external magnetic field. When $\chi>0$ the material is paramagnetic, when $\chi<0$ diamagnetic, and strongly enhanced $\chi \gg 1$ for ferromagnets.

While the approximation of computing the magnetization in the grand canonical ensemble is common, one should strictly speaking compute it in the canonical ensemble, instead.

\begin{figure}
    \centering
    \includegraphics[width=7.5cm]{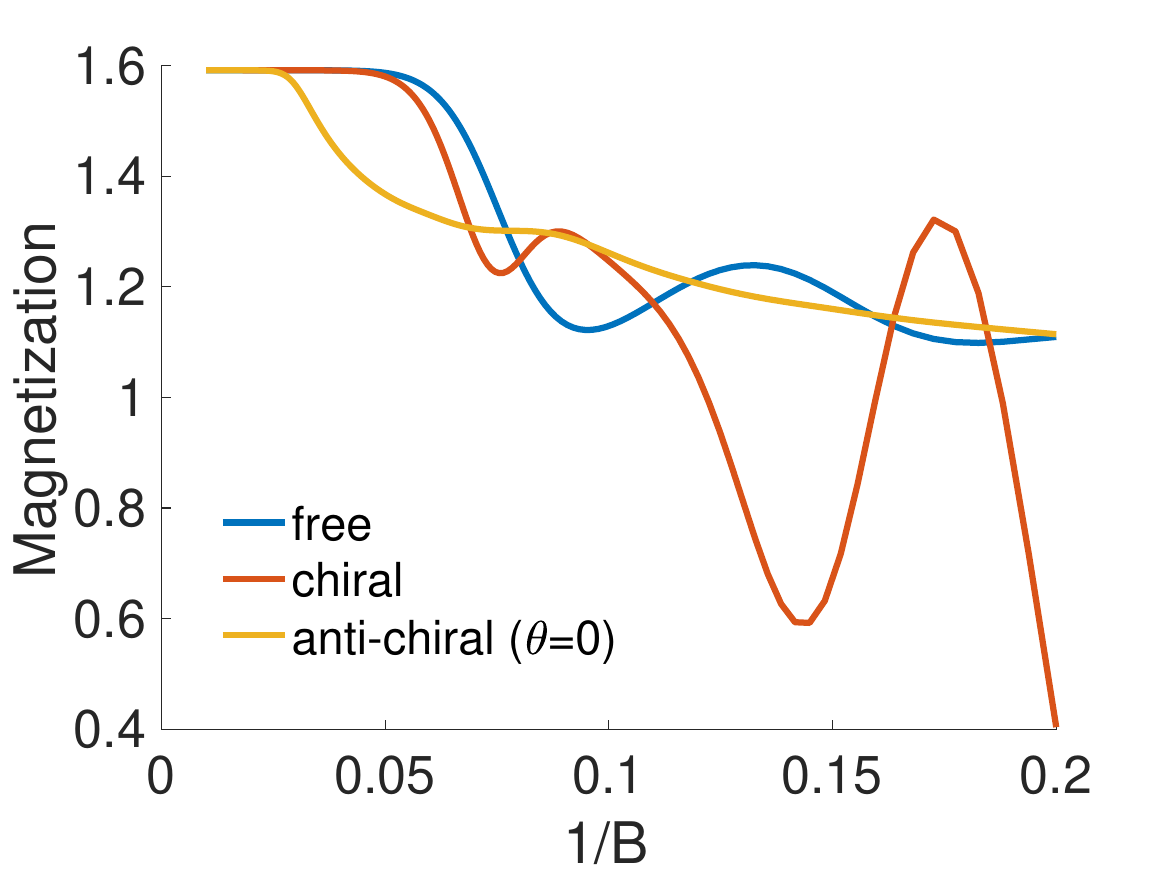}    \includegraphics[width=7.5cm]{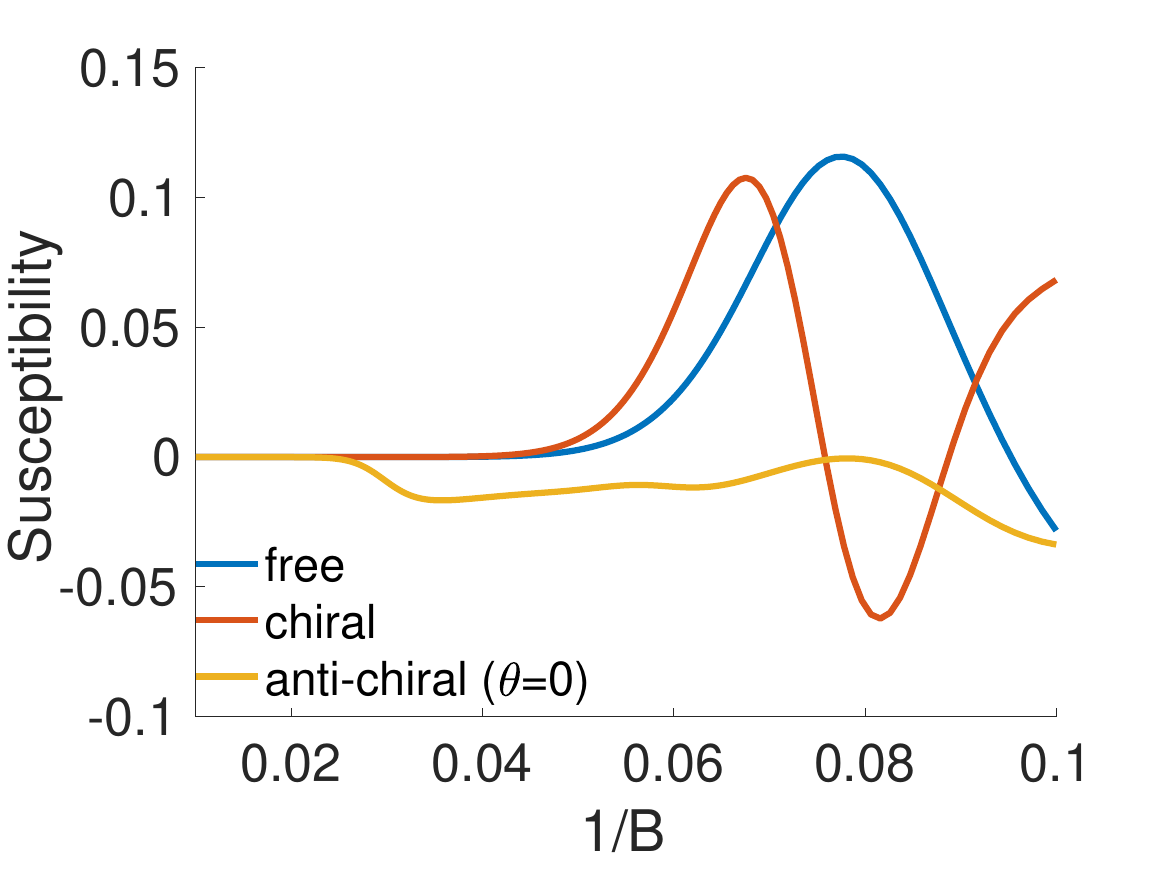}
     \caption{Magnetization and susceptibility for $\beta=4$, $\alpha_i=3/5,$ and chemical potential  $\mu=5$.}
    \label{fig:my_label}
\end{figure}

In this case, the charge density $\varrho$ given by the Fermi-Dirac statistics, with $n_{\beta}(x) := \frac{1}{e^{\beta x}+1}$, according to
\[ \varrho  =-\frac{\Omega_{\beta}(\mu,B)}{\partial \mu} = \rho(n_{\beta}(\cdot -\mu))\]
is fixed and the chemical potential becomes a function of $\rho$ and $B$. 

To see that this uniquely defines $\mu$ as a function of $\varrho$ and $B$ large enough, it is sufficient to observe that 
$$ \mu \mapsto \sum_{n \in \mathbb Z}(\eta_N n_{\mu})(\lambda_n \sqrt{B})$$
is a monotonically increasing function.
\begin{figure}[!ht]
    \centering
    \includegraphics[width=7.0cm]{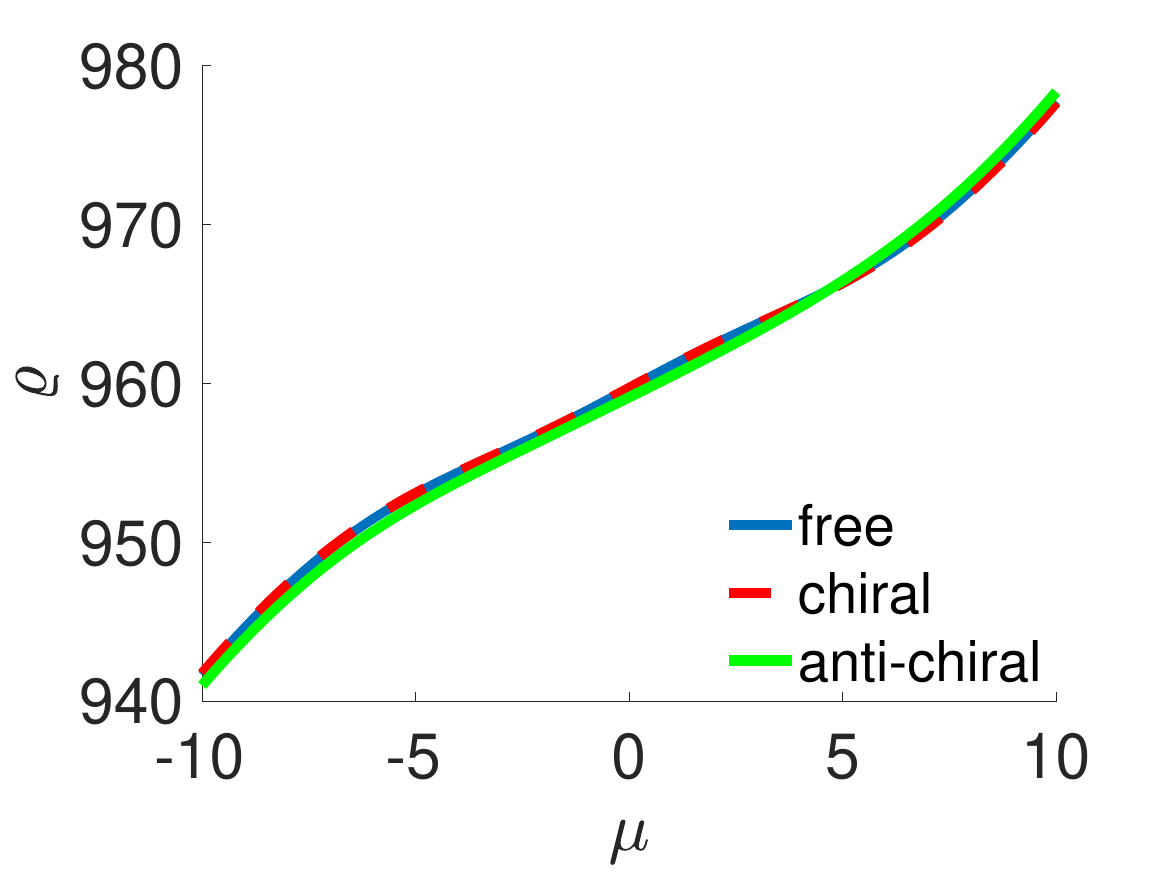}
     \includegraphics[width=7.0cm]{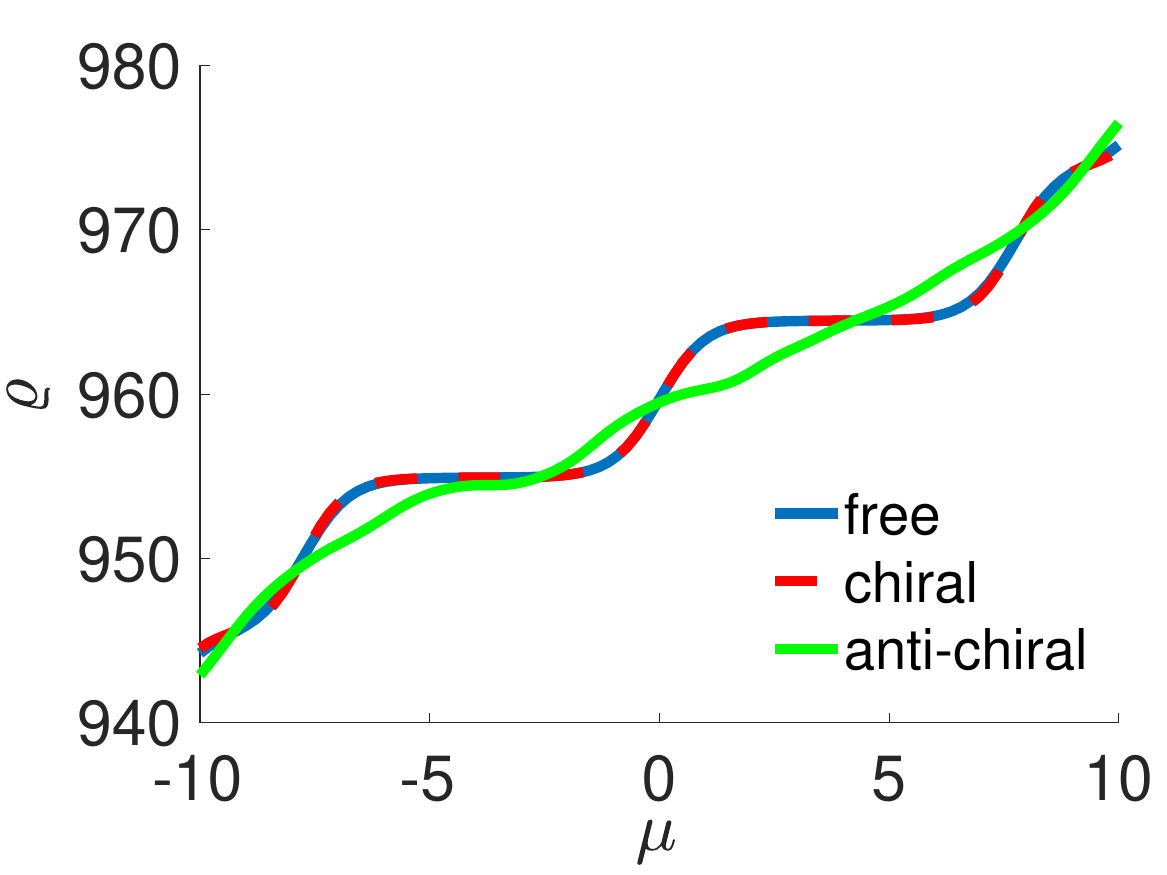}
    \caption{ Charge density with respect to chemical potential. Magnetic field $B=30$ for $\beta=1/2$ and $\beta=2$. We consider $100$ Landau levels around zero and an anti-chiral model with $\theta=0.$}
    \label{fig:my_label1}
\end{figure}
The Helmholtz free energy is then given as
$$F_{\beta}(\varrho,B) =\Omega_{\beta}(\mu(\varrho,B),B) + \mu(\rho,B)\varrho$$
with the magnetization given as the derivative $ M(\beta,\varrho,B) = -\frac{\partial F_{\beta}(\varrho,B)}{\partial B}.$
Hence, the magnetization in the canonical ensemble is also given by 
\[ M(\beta,\varrho,B) = - \frac{\partial \Omega_{\beta}(\mu,B)}{\partial B} \Big\vert_{\mu=\mu(\varrho,B)},\]
where the difference to the grand-canonical ensemble lies in the $B$-dependent chemical potential. The dHvA oscillations are shown in Figures \ref{fig:dHvA} and \ref{fig:my_label}, with the $AB^{\prime}$/$BA^{\prime}$ interaction leading to enhanced oscillations and the $AA^{\prime}$/$BB^{\prime}$ interaction damping the oscillations, compared to the non-interacting case.
\subsection{Quantum Hall effect}

\begin{figure}[!t]
\includegraphics[width=7.5cm, ]{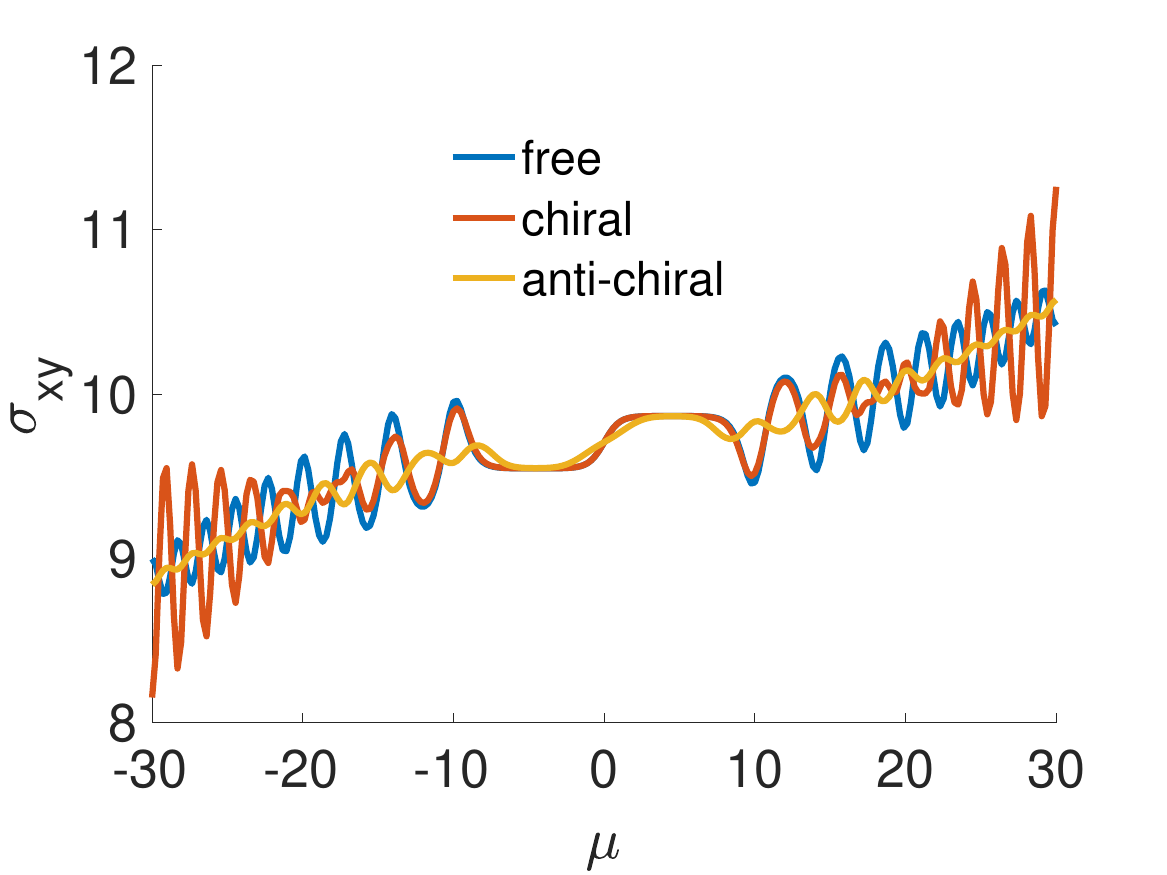}
\includegraphics[width=7.5cm, ]{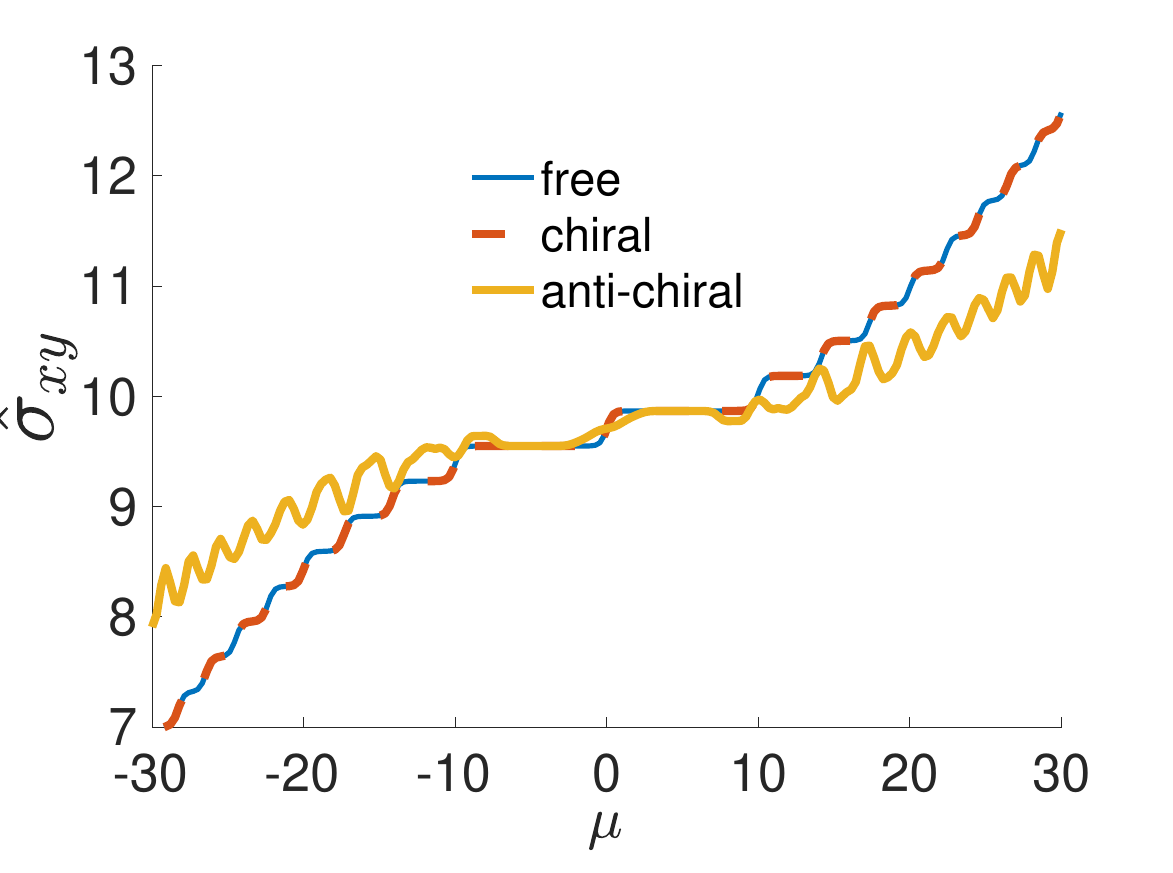}
\caption{Full quantum Hall conductivity \eqref{eq:Hall_cond} on the left with $\beta=2$, $B=40$ and on the right the high temperature conductivity \eqref{eq:high_Hall_cond} with $\beta=5$,$B=50$.}
\label{fig:QHE}
\end{figure}
The (transversal) quantum Hall conductivity $\sigma_{xy}$ is, by the Streda formula \cite[(16)]{MK12}, for a Fermi energy $\mu$ given by
\begin{equation}
\label{eq:Hall_cond}
     \sigma_{xy}(\beta,\mu,B) = \sum_{n=-N}^N \frac{\partial \rho(\eta_N n_{\beta}(\bullet-\mu))}{\partial B}.
\end{equation}

In case of the chiral Hamiltonian, the Gibbs factor $\gamma_{\beta, n}(\mu) = e^{\beta(\lambda_n \sqrt{B}-\mu)}$ allows us to write
\begin{equation}
    \begin{split}
&\sigma_{xy,c}(\beta,\mu,B) =\Bigg(\sum_{n=-N}^N \frac{n_{\beta}(\lambda_n\sqrt{B}-\mu)}{\pi}\left(1-\frac{\beta \lambda _n \sqrt{B}}{2}\gamma_{\beta, n}(\mu) n_{\beta}(\lambda_n \sqrt{B}-\mu)\right)\\
&+\sum_{n=-N}^N-\frac{\lambda_n\vert \lambda_n \vert^2 \beta^3\Ave(\mathfrak U)}{4\pi\sqrt{B}} n^4_{\beta}(\lambda_n\sqrt{B}-\mu)\left(\gamma_{\beta, n}(\mu)-4\gamma_{\beta, n}(\mu)^2+\gamma_{\beta, n}(\mu)^3\right)\Bigg) \left(1+o(1)\right)
\end{split}
\end{equation}

At very low temperatures, and $\mu$ well between two Landau levels, the contribution of the derivative of the Landau levels with respect to $B$ can be discarded. 

We then obtain the high-temperature limiting expression
\begin{equation} 
\label{eq:high_Hall_cond}
\hat{\sigma}_{xy,c}(\beta, \mu,B) := \sum_{n=-N}^{N} \frac{\eta_{\beta}(\lambda_{n,B}-\mu)}{\pi} \xrightarrow[\beta \to \infty]{} \vert \{ n ; \vert \lambda_{n,B}\vert \le \mu\}\vert
\end{equation}
as $n_{\beta}(\lambda_{n,B}-\mu) \to 1-H(\lambda_n \sqrt{B}-\mu)$ for $\beta  \uparrow \infty,$ where $H$ is the Heaviside function. 

This expression reveals the well-known staircase profile of the Hall conductivity which can already be concluded in this model in the $\beta \to \infty$ limit from Proposition \ref{prop:Sjostrand}.

For the $AA^{\prime}$/$BB^{\prime}$ interaction, the situation is rather different. Due to the broadening and splitting of the Landau levels, the staircase profile is less pronounced at non-zero temperature. Setting $\hat{\sigma}_{xy,ac}(\beta,\mu,B) :=t_{n,0}(n_{\beta}(\bullet-\mu)) -\frac{t_{n,1}(n_{\beta}(\bullet-\mu))}{2\sqrt{B}},$
where in the limit $\beta \to \infty,$ the second term vanishes, for $\mu$ away from the spectrum as $n_{\beta}'$ is a $\delta_0$ approximating sequence such that also in case of the $AA^{\prime}$/$BB^{\prime}$ interaction $\lim_{\beta \to \infty} \hat{\sigma}_{xy}(\beta,\mu,B) = \vert \{ n ; \vert \lambda_{n,B}\vert \le \mu\}\vert.$

\appendix
\section{Asymptotic expansion}
\label{sec:appendix1}
In this appendix, we shall prove Prop. \ref{prop: z_dependence_each_term} which, in particular, includes the proof of Lemma \ref{lemma: explicit_asymptotic_expansion_needed}. The quantization is as in Subsection \ref{ss: properties_of_effective_Hamiltonian}.

\begin{prop}
\label{prop: z_dependence_each_term}
Let $h_0$,  $\Epm$ be as in Lemma \ref{lemm:perturbed Grushin}. For $h\in [0,h_0)$, $|z|\leq 2\Vert \msW\Vert_\infty$, we have
  \begin{enumerate} 
      \item \label{part_one} The symbol $\frac{1}{\sqrt{h}}\Epm$ has an asymptotic expansion in $S$: There are $a_{n,j,k}\in S$ such that 
  \begin{equation}
  \label{eq: Epm_exp}
     \frac{1}{\sqrt{h}} \Epm(x_2,\xi_2;z,h)\sim \sum\limits_{j = 0}^\infty h^{\frac{j}{2}}E_{n,j}(x_2,\xi_2;z) \ \text{with} \  E_{n,j} = \sum\limits_{k = 0}^{j - 1} a_{n,j,k}(x_2,\xi_2)z^k, j\geq 1.
  \end{equation} 
  In particular, $E_{n,0} = z - z_{n,0}, \quad E_{n,1} = -z_{n,1},\quad E_{n,2} = - z_{n,2},$
  where $z_{n,j}$ are given in Lemma \ref{prop: expofQk}.
  \item \label{part_two} Let $0<\delta<1/2$, if $|\Im z|\geq h^{\delta}$, then $\sqrt{h}\Epm^{-1}$ has an asymptotic expansions in $S_\delta^\delta$: There are $b_{n,j,k,l}, c_{n,j,k}\in S$ such that in terms of $\prod\limits_{l = 0}^k b_{n,j,k,l}(x_2,\xi_2;z) = \sum\limits_{\alpha = 0}^{j +k - 2} z^\alpha c_{n,j,k}(x_2,\xi_2)$ the expansion of
  \begin{equation}
      \label{eq: Epm_inv}
      \begin{split}
     & \sqrt{h} \Epm^{-1}\sim \sum\limits_{j = 0}^\infty h^{\frac{j}{2}}F_{n,j} (x_2,\xi_2;z), \text{ with } F_{n,j} = \sum_{k = 0}^{j} (z - z_{n,0})^{-1} \prod\limits_{l = 0}^{k}\left(b_{n,j,k,l}(x_2,\xi_2;z)(z - z_{n,0})^{-1}\right).
  \end{split}
  \end{equation}
    Thus $h^{\frac{j}{2}} F_{n,j}\in S^{j(\delta - \frac{1}{2})+ \delta}$. In particular, we have
\begin{equation}
    \label{eq: Epm_inv_explicit}
    \begin{split}
        &F_{n,0} =(z - z_{n,0})^{-1}, \ F_{n,1} = F_{n,0}z_{n,1}F_{n,0},\ F_{n,2} = F_{n,0}\left(z_{n,1}F_{n,1} + z_{n,2}F_{n,0} - \frac{\{F_{n,0}, z - z_{n,0}\}}{2i} \right), 
    \end{split}
\end{equation}
    where $\{\cdot, \cdot\}$ is the Poisson bracket.
    \item \label{part_three} Let $0<\delta<1/2$, if $|\Im z|\geq h^{\delta}$, then $r_n$ has an  asymptotic expansions in $S_\delta^\delta$: There are $d_{n,j,k,l}(x_2,\xi_2;z), e_{n,j,k,\alpha}(x_2,\xi_2)\in S$, such that in terms of $\prod\limits_{l = 0}^k d_{n,j,k,l}(x_2,\xi_2;z) = \sum\limits_{\alpha = 0}^{j+k -2} z^\alpha e_{n,j,k,\alpha}(x_2,\xi_2)$
  \begin{equation}
      \label{eq: r_n_exp}
      \begin{split}
        &\quad r_n(x_2,\xi_2;z,h) \sim \sum\limits_{j =0}^\infty h^{\frac{j}{2}} r_{n,j}(x_2,\xi_2;z,h),
        \text{with~} r_{n,j} = \sum\limits_{k = 0}^j(z - z_{n,0})^{-1} \prod\limits_{l = 0}^k \left(d_{n,j,k,l}(x_2,\xi_2;z)(z - z_{n,0})^{-1}\right).
        \end{split}
  \end{equation}
 Thus $h^{\frac{j}{2}}r_{n,j}\in S_0^{(J+1)\delta - \frac{J}{2}}$. In particular, 
\begin{equation}
    \label{eq: r_n_exp_explicit}
    r_{n,0} = F_{n,0},\quad r_{n,1} = F_{n,1},\quad r_{n,2} = F_{n,2} - (\partial_z z_{n,2})F_{n,0}.
\end{equation}
\item \label{part_four} Finally, let $\eta = x_2 + i\xi_2$, then the leading terms of $\Tr_{\CC^2}(r_n)$ are:
 \begin{equation}
    \begin{split}
      \operatorname{Chiral} \ \msH_{\ch, n}: & \Tr_{\CC^2}(r_{\ch, n,0} + h^{\frac{1}{2}}r_{\ch, n,1} + hr_{\ch,n,2}) = \frac{2}{z} + 0 + \frac{\lambda_n^2}{z^3}\mathfrak U(\eta)h, \\
      \operatorname{Anti-Chiral} \ \msH_{\ach,n}^\theta: & \Tr_{\CC^2}(r_{\ach, n,0} + h^{\frac{1}{2}}r_{\ach, n,1}) = \frac{2z}{z^2 - c_n^2} + \frac{2s_n^2(z^2+c_n^2)}{(z^2 - c_n^2)^2}\sqrt{h} ,
    \end{split}  
  \end{equation}
   where $\mathfrak U(\eta) = \frac{\alpha_1^2}{8}\left[\alpha_1^2(|U_-(\eta)|^2 - |U(\eta)|^2)^2 + 4|\partial_{\bar{\eta}}\overline{U_-(\eta)}-\partial_\eta U(\eta)|^2\right]$, $\partial_\eta = \frac{1}{2}(\partial_{x_2} - i\partial_{\xi_2})$, $s_n(\eta) = \begin{cases}
    \alpha_0  \sin(\thot) |V(\eta)| & n\neq 0 \\
    \alpha_0 |V(\eta)| & n = 0,
  \end{cases}$ and $c_n(\eta) = \begin{cases}
    \alpha_0 \cos(\thot)|V(\eta)| & n \neq 0\\
    \alpha_0 |V(\eta)| & n = 0.
  \end{cases}$
    \end{enumerate}
\end{prop}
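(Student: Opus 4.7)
The plan is to build everything from the Neumann series expansion of the effective Hamiltonian formula \eqref{eq: epm}, track how $z$ enters each term, and then invert and post-compose. Writing $\Epm/\sqrt{h} = z - Z^W$ with $Z^W = R_n^+ \tilde{\msV}^W (\indic + \sqrt{h}E_{0,n} \tilde{\msV}^W)^{-1} R_n^-$, expand the inverse as a Neumann series in $\sqrt{h}$. Two further expansions produce the $z$-polynomial structure: first, Taylor-expand the symbol $\tilde{\msV}(x,\xi) = \msV(x_2+\sqrt{h}x_1, \xi_2-\sqrt{h}\xi_1)$ in $\sqrt{h}$, obtaining $\tilde{\msV}^W \sim \sum_{a\geq 0} h^{a/2} \msV_a$, with each $\msV_a$ a differential operator of order $a$ in $(x_1, D_{x_1})$ whose coefficients are $(x_2,\xi_2)$-derivatives of the entries of $\msV$; second, expand the resolvent $E_{0,n}(z;h) = \sum_{m \neq n} K_m^\theta (K_m^\theta)^*/(\lambda_m - \lambda_n - \sqrt{h}z) \sim \sum_{b\geq 0} (\sqrt{h}z)^b G_b$. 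Multiplying the three expansions, a term at Neumann order $k$ and total order $h^{j/2}$ carries $z^b$ with $b \leq j - k$, and $z$ appears only through the $E_{0,n}$ factors; since such factors need $k\geq 1$, every $h^{j/2}$-contribution with $j\geq 1$ is a polynomial in $z$ of degree $\leq j-1$, giving \eqref{eq: Epm_exp}. The first three coefficients $z_{n,0}, z_{n,1}, z_{n,2}$ are then matrix elements $\langle u_n^{\pm\theta}, (\cdot) u_n^{\pm\theta}\rangle$ of explicit expressions in $\msV$ and its derivatives, which I would collect in the auxiliary Lemma~\ref{prop: expofQk}.

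For part \ref{part_two}, invert $\Epm$ by solving $\Epm \# \Epm^{-1} = I$ order by order in $\sqrt{h}$ through the asymptotic Moyal product. The leading symbol $E_{n,0} = z - z_{n,0}$ has inverse $F_{n,0} = (z-z_{n,0})^{-1}$, and by induction on $j$ the higher $F_{n,j}$ are recursively expressed as Moyal sandwiches involving $F_{n,0}$, the polynomials $E_{n,\ell}$ (degree $\leq \ell - 1$ in $z$), and Poisson brackets of these. The structure $\sum_{k=0}^j (z-z_{n,0})^{-1} \prod_{l=0}^k b_{n,j,k,l}(z-z_{n,0})^{-1}$ is manifest from this recursion, the degree count $j+k-2$ on the numerator polynomial following by induction. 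The explicit formulas \eqref{eq: Epm_inv_explicit} drop out of the first three recursion steps. The symbol-class bound $h^{j/2}F_{n,j} \in S_\delta^{j(\delta-1/2)+\delta}$ follows by noting each factor $(z-z_{n,0})^{-1}$ costs $|\Im z|^{-1} \leq h^{-\delta}$, each $(x_2,\xi_2)$-derivative hitting such a factor an additional $h^{-\delta}$, and compounding with Lemma~\ref{lemma: Properties of r}.

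Part \ref{part_three} follows by applying $\partial_z$ to the part \ref{part_one} expansion, which again gives polynomials in $z$ (with degree shifted by one), and composing via Moyal product with the expansion from part \ref{part_two}. At order $h^{j/2}$, the dominant contribution has at most $j+1$ factors of $(z-z_{n,0})^{-1}$, confirming the form \eqref{eq: r_n_exp}. The explicit leading terms \eqref{eq: r_n_exp_explicit} come from combining $\partial_z E_{n,0} = 1$, $\partial_z E_{n,1} = 0$, $\partial_z E_{n,2}$ with the corresponding $F_{n,j}$, including the Poisson-bracket correction in the Moyal product at order $h$.

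For part \ref{part_four}, specialize $\msV$ to the chiral form $T = \alpha_1 \operatorname{antidiag}(\overline{U_-}, U)$ (with $\alpha_0 = 0$) and to the anti-chiral form $T = \alpha_0 V\cdot \indic_{2\times 2}$ (with $\alpha_1 = 0$), and evaluate the matrix elements in the Landau basis using the ladder-operator identities $a r_m = \sqrt{2m} r_{m-1}$, $a^* r_m = \sqrt{2(m+1)}r_{m+1}$ from Lemma~\ref{lemma: eigenspace} and the rotated structure $u_n^\theta = e^{-i\theta\sigma_3/4} u_n e^{i\theta\sigma_3/4}$. \textbf{The main obstacle} is exactly this explicit computation in the chiral case: one must show that the first subleading trace $\Tr_{\CC^2}(r_{\ch,n,1})$ vanishes and identify the $h$-term $\lambda_n^2 \mathfrak U(\eta)/z^3$, which requires exploiting (i) that $z_{\ch,n,0} = 0$ (so $F_{n,0} = 1/z$), (ii) the symmetries \eqref{eq:symmU} of $U, U_-$ under complex conjugation that force cancellation at order $\sqrt{h}$, and (iii) the coupling of adjacent Landau levels $u_{n\pm 1}^\theta$ through the $\sqrt{h}$-expansion of $\tilde{\msV}$. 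The anti-chiral case is routine by comparison since the scalar-valued $T$ diagonalizes immediately into the $\pm c_n$ eigenvalues and the leading trace factorizes as $\Tr(1/(z-c_n) + 1/(z+c_n))$, with the $s_n$-correction from the single Moyal Poisson bracket.
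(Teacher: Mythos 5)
Your proposal follows essentially the same route as the paper's Appendix A argument: Neumann series for $Z^W$ in \eqref{eq: epm} combined with the Taylor expansion of $\tilde{\msV}$ and the geometric expansion of $E_{0,n}$ in $\sqrt{h}z$ to get the polynomial-in-$z$ structure of $E_{n,j}$, a Moyal parametrix recursion for $\Epm^{-1}$, composition for $r_n$, and Landau-basis ladder computations (the paper's Prop.~\ref{prop: expofQk} and Lemmas~\ref{lemma: Twexpansion}, \ref{lemma: projections}) for the explicit leading terms. Two small remarks: the paper is explicit that the formal parametrix built from the truncated Moyal product must be matched to the true $\Epm^{-1}$ via a remainder in $S^{-\infty}$ and Beal's lemma (you implicitly rely on Lemma~\ref{lemma: Properties of r} for this), and the vanishing of $\Tr_{\CC^2}(r_{\ch,n,1})$ comes from $z_{\ch,n,1}$ being a sum of a $\sigma_3$-multiple and an off-diagonal matrix (hence traceless) rather than from the conjugation symmetries \eqref{eq:symmU}.
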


We will prove Proposition \ref{prop: z_dependence_each_term} in the rest of this appendix in two steps: First, we compute explicitly the leading terms (three terms for the chiral model, two for anti-chiral model) in the expansion of $Z_n(x_2,\xi_2;z,h)$, the symbol of $Z_n^W$, where $\Epm = \sqrt{h}(z - Z_n^W)$ by \eqref{eq: epm}. Then, we exhibit the $z$ dependence for each term in the expansion of $\Epm$, from which we build up both the legitimacy of the existence of asymptotic expansions of $\Epm^{-1}$ and $r_n$, and the $z$ dependence of each term in the expansions.
% \begin{rem}
% \label{rmk: appendix_proof_structure}
% {\xz rewrite, in the following of this section, we prove this proposition by first explicit compute the leading terms, and then derive the formulation of each term in the expansion.}
% By Lemma \ref{lemma: Properties of r}, $\Epm\in S$ uniformly in $z$ in a compact set, thus there exist asymptotic expansions of $\Epm$ in $S$. In particular, by \eqref{eq: epm}, we can get an explicit one. Starting from that, we prove the Proposition above in two steps: We first compute explicitly the leading terms of $\Epm$. Then we derive in general the $z$-dependence of each term in the expansion of $\Epm$ and prove the existence of asymptotic expansions of $\Epm^{-1}$ and $r_n$ under the condition $z\in \mathcal K$ and $|\Im z|\geq h^\delta$ for $0<\delta<1/2$.
% \end{rem}

\smallsection{Explicit leading terms}
Recall that by \eqref{eq: epm} and \eqref{eq: W_to_tilde_V}, $\Epm = \sqrt{h}(z - Z_n^W)$ with
\begin{equation}
    \label{eq: Z_n}
      \begin{split}
    Z_n^W &(x_2,hD_{x_2};h) = R_n^+\tilde{\msV}^W(I + \sqrt{h}E_{0,n}^\theta\tilde{\msV}^W)^{-1}R_n^-
   \\ &= \sum\limits_{k = 0}^\infty h^{\frac{k}{2}}(-1)^k R_n^+\tilde{\msV}^W(E_{0,n}^\theta\tilde{\msV}^W)^k R_n^-
    =: \sum\limits_{k = 0}^\infty h^{\frac{k}{2}}Q_{n,k}^W(x_2,hD_{x_2};h),
  \end{split}
\end{equation}
where $R_n^\pm$, $E_\on^\theta$, $\tilde{\msV}^W$ are given in \eqref{eq:R_pm}, \eqref{eq: ent} and \eqref{eq: W_to_tilde_V}. Then we can express the asymptotic expansion of $Z_n(x_2,\xi_2)$ in terms of $Q_{n,k}(x_2,\xi_2)$:

\begin{prop}
  \label{prop: expofQk}
Let $Q_{n,k}^W(x_2,hD_{x_2};h) = (-1)^k R_n^+\tilde{\msV}^W(E_{0,n}^\theta\tilde{\msV}^W)^k R_n^-$. Then symbols $Q_{n,0}$, $Q_{n,1}$, $Q_{n,2}$ have the following asymptotic expansions 
\[
  \begin{split}
    &Q_{n,0}(x_2,\xi_2;h) = Q_{n,0}^\upzero(x_2,\xi_2) + \sqrt{h}Q_{n,0}^\upone(x_2,\xi_2) + hQ_{n,0}^\uptwo(x_2,\xi_2) +\mathcal O_{S}(h^{\frac{3}{2}}),\\
    &Q_{n,1}(x_2,\xi_2;h) = Q_{n,1}^\upzero(x_2,\xi_2) +\sqrt{h}Q_{n,1}^\upone(x_2,\xi_2) +\mathcal O_{S}(h),\\
    &Q_{n,2}(x_2,\xi_2;h) = Q_{n,2}^\upzero(x_2,\xi_2) +\mathcal O_{S}(\sqrt{h}).
  \end{split}
\]
For the chiral Hamiltonian, with $\eta = x_2\ + i\xi_2$, $D_\eta = \frac{1}{2}(D_{x_2} -i D_{\xi_2})$,
\[
\begin{split}
    &Q_{\ch,n,0}^{\upzero} = Q_{\ch,n,0}^{\uptwo} = Q_{\ch,n,2}^{\upzero} = 0, \quad Q_{\ch,n,1}^{\upzero} = -\frac{\alpha_1^2\lambda_n}{4}\left[|U|^2 - |U_-|^2\right] \sigma_3,\\
    &Q_{\ch,n,0}^{\upone} = \frac{\lambda_n\alpha_1}{2} \begin{pmatrix}
      0 & D_\eta U - D_{\bar{\eta}}\overline{U_-}\\
      D_\eta U_- - D_{\bar{\eta}} \overline{U} & 0 
    \end{pmatrix},\\
    &Q_{\ch,n,1}^{\upone} = \begin{cases}
      -\frac{\alpha_1^2z}{4}\left[2|n|(|U|^2+|U_-|^2)\indic_{2\times 2} + (|U|^2 - |U_-|^2)\sigma_3\right] & n \neq 0, \\
      -\frac{\alpha_1^2z}{2}\begin{pmatrix}
        |U|^2 & 0 \\ 0 & |U_-|^2
      \end{pmatrix} & n = 0.
    \end{cases}
\end{split}
\]
Analogously, for the anti-chiral Hamiltonian, when $\msH^\theta = \msH_\ach$, we have, when $n = 0$, 
\[
\begin{split}
    &Q_{\ach, 0, 0}^\upone = Q_{\ach, 0, 1}^\upzero = Q_{\ach, 0, 1}^\upone = Q_{\ach, 0, 2}^\upzero = 0, \\
    &Q_{\ach, 0, 0}^\upzero = \alpha_0 \begin{pmatrix}
        0 & e^{-\frac{\theta}{2}i}V \\ e^{\frac{\theta}{2}} V^* & 0 
      \end{pmatrix}, Q_{\ach, 0, 0}^\uptwo =\frac{\alpha_0}{4} \begin{pmatrix}
        0 & e^{-\frac{\theta}{2}i}\Delta_{x_2,\xi_2} V\\
        e^{\frac{\theta}{2}i}\Delta_{x_2,\xi_2} \bar{V}
      \end{pmatrix},
\end{split}
\]
when $n \neq 0$, 
\[
\begin{split}
    &Q_{\ach, n,0}^\upone = 0, \ Q_{\ach, n,1}^\upzero =\frac{\alpha_0^2|V|^2\sin^2(\tfrac{\theta}{2})}{2\lambda_n}\indic_{2\times 2}, \ Q_{\ach, n,1}^\upone = -\frac{z\alpha_0^2|V|^2\sin^2(\tfrac{\theta}{2})}{4\lambda_n^2}\indic_{2\times 2},\\
    &Q_{\ach, n,0}^\upzero = \alpha_0 \cos(\tfrac{\theta}{2})\begin{pmatrix}
        0 & V^* \\ V & 0 
      \end{pmatrix}, Q_{\ach, n,2}^\upzero = -\frac{\alpha_0^3|V|^2\sin^2(\tfrac{\theta}{2})\cos(\tfrac{\theta}{2})}{4\lambda_n^2}\begin{pmatrix}
      0 & V \\ V^* & 0
    \end{pmatrix},\\
    & Q_{\ach, n,0}^\uptwo = \frac{\alpha_0}{4}\left(2|n|\cos(\tfrac{\theta}{2}) - i\sigma_3\sin(\tfrac{\theta}{2})\right)\begin{pmatrix}
      0 & \Delta_{x_2,\xi_2} V\\ \Delta_{x_2,\xi_2} \bar{V} & 0 
    \end{pmatrix}.
\end{split}
\]
In particular, $Z_n$ has an asymptotic expansion $Z_n\sim \sum\limits_{k = 0}^\infty h^{\frac{k}{2}} z_{n,k}$ in $S$ with  
\[z_{n,0} = Q_{n,0}^{\upzero}, \ z_{n,1} = Q_{n,1}^{\upzero} + Q_{n,0}^{\upone}, \ 
    z_{n,2} = Q_{n,2}^{\upzero} + Q_{n,1}^{\upone} + Q_{n,0}^{\uptwo}.\]
\end{prop}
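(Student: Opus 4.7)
The plan is to derive the expansions of $Q_{n,0}, Q_{n,1}, Q_{n,2}$ by reducing each to matrix elements of $(x_1,D_{x_1})$-polynomials against the Landau basis $K_n^\theta$. Two Taylor-type expansions drive the calculation. First, expand
\[
  \tilde{\msV}^w(x,D_{x_1},\xi_2;h) = \msV(x_2,\xi_2) + \sqrt{h}\,\msV^{(1)} + h\,\msV^{(2)} + O_S(h^{3/2}),
\]
where $\msV^{(j)}$ is the $(x_1,D_{x_1})$-Weyl quantization of $\tfrac{1}{j!}(x_1\partial_{x_2} - \xi_1\partial_{\xi_2})^j \msV(x_2,\xi_2)$; since the coefficients depend only on $(x_2,\xi_2)$, Weyl ordering is trivial at first order and yields only a commutator correction at second. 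Second, write the reduced resolvent as the geometric series
\[
  E_{0,n}^\theta = \sum_{m \neq n}\frac{K_m^\theta(K_m^\theta)^*}{\lambda_m - \lambda_n}\sum_{j=0}^\infty \left(\frac{\sqrt{h}\,z}{\lambda_m - \lambda_n}\right)^j,
\]
so that each factor of $z$ arising in $Z_n$ is accompanied by $\sqrt{h}$.

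Substituting both expansions into $Q_{n,k}^W = (-1)^k R_n^+\tilde{\msV}^w(E_{0,n}^\theta\tilde{\msV}^w)^k R_n^-$ reduces each coefficient to a finite sum of matrix elements $\langle K_n^\theta, P\,K_m^\theta\rangle$, where $P$ is a polynomial in $(x_1,D_{x_1})$ with coefficients given by derivatives of $\msV$. The ladder identities $a r_m = \sqrt{2m}\,r_{m-1}$, $a^*r_m = \sqrt{2(m+1)}\,r_{m+1}$ together with $x_1 = (a - a^*)/(-2i)$, $D_{x_1} = (a + a^*)/2$ render every such matrix element explicit, and Hermite orthogonality collapses most. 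In the chiral model, $T$ is purely off-diagonal so $e^{i\theta\sigma_3/4}T e^{i\theta\sigma_3/4} = T$, hence $\langle u_n^\theta, T u_n^{-\theta}\rangle$ reduces to integrals of cross-products $r_{|n|-1}^* r_{|n|}$ which vanish by orthogonality; this forces $Q_{\ch,n,0}^{(0)} = 0$, and similar parity arguments kill $Q_{\ch,n,0}^{(2)}$ and $Q_{\ch,n,2}^{(0)}$. The first nonzero chiral term $Q_{\ch,n,1}^{(0)}$ is the second-order perturbation sum $-\sum_{m\neq n}\frac{\langle K_n^\theta, \msV K_m^\theta\rangle\langle K_m^\theta,\msV K_n^\theta\rangle}{\lambda_m - \lambda_n}$, which by the one-step index shift of $T$ restricts to $m = \pm(|n|\pm 1)$ and telescopes to $-\tfrac{\alpha_1^2\lambda_n}{4}(|U|^2 - |U_-|^2)\sigma_3$. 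The subleading $Q_{\ch,n,0}^{(1)}$ arises from $\msV^{(1)}$, whose off-diagonal matrix element converts gradients of $U,U_-$ into $D_\eta$, $D_{\bar\eta}$ via the identification $\eta = x_2 + i\xi_2$.

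The anti-chiral case proceeds analogously: $T = \alpha_0 V\,I$ preserves the Landau index at leading order, the cosine/sine factors arise from the overlap $(u_n^\theta)^* u_n^{-\theta}$ after the $\theta$-rotation (which mixes the two ladder components), and the distinction between $n = 0$ and $n \neq 0$ reflects $r_{-1} = 0$ truncating the ladder at the ground state as well as the different normalization $C_n = \tfrac{1}{\sqrt{2}}$ versus $C_0 = 1$. The final assertion $z_{n,j} = \sum_{k + l = j}Q_{n,k}^{(l)}$ is then immediate from regrouping powers of $h^{1/2}$ in $Z_n = \sum_k h^{k/2}Q_{n,k}$. The main technical obstacle is the systematic book-keeping of ladder matrix elements: tracking the $\sigma_3$-rotation, the normalization constants, and the special $n = 0$ ground state each requires separate verification, but once these are organized each explicit term is obtained by a finite calculation relying only on Hermite orthonormality and the ladder relations.
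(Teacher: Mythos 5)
Your proposal follows essentially the same route as the paper's proof: Taylor-expanding $\tilde{\msV}^w$ in $\sqrt{h}$ around $(x_2,\xi_2)$, expanding $E_{0,n}^\theta$ as a geometric series in $\sqrt{h}z/(\lambda_m-\lambda_n)$, and evaluating the resulting matrix elements via ladder relations, Hermite orthogonality, and the overlap identity $\langle u_m^{-\theta},u_n^\theta\rangle=\cos(\tfrac{\theta}{2})\delta_{m,n}+i\sin(\tfrac{\theta}{2})\delta_{m,-n}$ (the paper's Lemmas on the expansion of $\tilde T^w$, $e_{0,n}^\theta$ and on projections). The structural explanations you give for the vanishing chiral terms, the second-order perturbation form of $Q_{\ch,n,1}^{(0)}$, the $n=0$ truncation, and the regrouping $z_{n,j}=\sum_{k+l=j}Q_{n,k}^{(l)}$ all match the paper's argument, which likewise leaves the remaining coefficients to a routine but lengthy basis computation.
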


\begin{proof}
$Q_{n,k}$ has the symbol $
    Q_{n,k}(x_2,\xi_2) = (-1)^k \int_{\RR_{x_1}}(K_n^\theta(x_1))^* \tilde{\msV}^w \#(E_\on^\theta \tilde{\msV}^w)^{\#k} K_n^\theta(x_1) dx_1.$
Recall that by\eqref{eq:contmodel},  \eqref{eq: Knt}, and \eqref{eq: ent}, we have 
\[
  K_n^\theta = \begin{pmatrix}
  u_n^\theta & 0 \\ 0 & u_n^{-\theta}
\end{pmatrix},\quad \msV = \begin{pmatrix}
  0 & T \\ T^* & 0 
\end{pmatrix}, \quad T = \begin{pmatrix}
  \alpha_0 V & \alpha_1 \overline{U_-}\\ \alpha_1 U & \alpha_0 V
\end{pmatrix}, \quad \Ent = \begin{pmatrix}
    \ent & 0 \\ 0 & \ennt
  \end{pmatrix}.
\] 
Thus, inserting the above expressions into the definition of $Q_{n,k}$, we find for its symbol
\[
  Q_{n,k} =\int \begin{pmatrix}
    \unt^* & 0 \\ 0 & (\unnt)^*
  \end{pmatrix} \begin{pmatrix}
    0 & \tilde{T}^w \\ (\tilde{T}^w)^* & 0 
  \end{pmatrix} \left( \begin{pmatrix}
    \ent & 0 \\ 0 & \ennt
  \end{pmatrix}\begin{pmatrix}
    0 & \tilde{T}^w \\ (\tilde{T}^w)^* & 0 
  \end{pmatrix}\right)^k \begin{pmatrix}
    \unt^* & 0 \\ 0 & \unnt
  \end{pmatrix} \frac{dx_1}{(-1)^k},
\]
where $\tilde{T}^w = T^w(\w)$. In particular,
\begin{equation}
  \label{eq: Q_i}
  \begin{split}
    &Q_{n,0}= \begin{pmatrix}
      0 & \int (\unt)^* \tilde{T}^w \unnt dx_1\\ \int (\unnt)^* (\tilde{T}^w)^* \unt dx_1
    \end{pmatrix},\\
    &Q_{n,1} = \begin{pmatrix}
      -\int (\unt)^* \tilde{T}^w \ennt (\tilde{T}^w)^* \unt dx_1 & 0 \\ 0 & -\int (\unnt)^* (\tilde{T}^w)^* \ent  \tilde{T}^w  \unnt dx_1
    \end{pmatrix}, \text{ and }\\
    &Q_{n,2} = \begin{pmatrix}
      0 & \int (\unt)^* \tilde{T}^w \ennt (\tilde{T}^w)^*\ent \tilde{T}^w \unnt dx_1\\
      \int (\unnt)^* (\tilde{T}^w)^*\ent \tilde{T}^w \ennt (\tilde{T}^w)^*\unt dx_1 & 0
    \end{pmatrix}.
  \end{split}
\end{equation}
Notice that since both $\tilde{T}^w$ and $\ent$ depend on $h$, we need to further expand them in order to obtain asymptotic expansions of $Q_{n,k}$. Thus the proof of Proposition \ref{prop: expofQk} rests now on the following two lemmas.
\begin{lemm} [Expansion of $\tilde{T}^w$ and $\ent$]\leavevmode
  \label{lemma: Twexpansion}
  \begin{enumerate}
    \item Let $T\in C^\infty_b(\RR_x^2)$. Recall the definition $\tilde T(x,\xi) := T(x_2+h^{\frac{1}{2}}x_1, \xi_2 - h^{\frac{1}{2}} \xi_1)\in S(\RR_{x,\xi}^4)$. Then
    \begin{equation}
      \label{eq: Twexpansion}
      \begin{split}
        \tilde{T}^w(x,D_{x_1},\xi_2) = &T(x_2,\xi_2) + \sqrt{h}\langle \nabla_{x_2,\xi_2}T(x_2,\xi_2), (x_1,-D_{x_1}) \rangle \\
         + & \frac{h}{2}\langle (x_1,-D_{x_1}), \operatorname{Hess}T (x_2,\xi_2)(x_1,-D_{x_1})^T \rangle +\mathcal O_{S(\RR_{x_2,\xi_2}^2;\mathcal L(B^3_{x_1};B^0_{x_1}))}(h^{\frac{3}{2}})
      \end{split}
    \end{equation}

    \item   Let $\ent$ be as in \eqref{eq: ent}. Then $\ent(x,D_{x_1},\xi_2)$ has an asymptotic expansion 
    $\ent \sim \sum\limits_{k = 0}^\infty h^{\frac{k}{2}}\sigma_k(e_\on^\theta)$ where 
        $
            \sigma_k(\ent) = \sum\limits_{m\neq n} \frac{z^k u_m^\theta(u_m^\theta)^*}{(\lambda_m - \lambda_n)^{k+1}}.$
  \end{enumerate}
\end{lemm}

\begin{lemm}[Projections]
  \label{lemma: projections}
  Let $S_n^\theta = \operatorname{span}\{u_n^\theta,u_{-n}^\theta\}$ with $S_n:=S_n^{0}$ and $u_n:=u_n^{0}.$ The following properties hold:
  \begin{enumerate}
    \item Reflection invariance with respect to $\theta$ such that $S_n^\theta = S_n^{-\theta}$, in particular $
              \unt = \cos\left(\thot\right) \unnt + i\sin\left(\thot\right) \unnnt .$
    \item  Let $M = \begin{pmatrix}
      0 & \alpha \\ \beta & 0
    \end{pmatrix} \in \mathbb C^{2\times 2}$ then $Mu_n\in S_{n-1}\cup S_{n+1}$, for any $n\geq 0$. More specifically for $\theta=0$
          \begin{equation}
          \begin{split}
            \label{eq: Tu_n}
              Mu_{\pm n} &= \frac{\alpha i}{2}(u_{n+1} - u_{-(n+1)}) {\mp} \frac{\beta i}{2}(u_{n-1} + u_{-(n-1)}), \text{ for }n \ge 2\\
              Mu_{\pm 1} &= \frac{\alpha i}{2}(u_2 - u_{-2}) \pm \frac{\beta}{\sqrt{2}} u_0,\text{ and }
              Mu_0 = \frac{\alpha}{\sqrt{2}}(u_1 - u_{-1}).
        \end{split}
        \end{equation}
    \item We have $x_1\unt\in S_{n-1}^\theta\cup S_{n+1}^\theta$, $D_{x_1}\unt\in S_{n-1}^\theta\cup S_{n+1}^\theta$. More specifically 
        \begin{equation}
              \begin{split}
               x_1u_{\pm n}^{\theta} &= \frac{\sqrt{2}}{4}[u_{n-1}^\theta(\sqrt{n}\pm +\sqrt{n-1})+u_{-(n-1)}^\theta(\sqrt{n} \mp \sqrt{n-1})\\
                  &+u_{n+1}^\theta(\sqrt{n+1} +\sqrt{n})\pm u_{-(n+1)}^\theta(\sqrt{n+1}\mp \sqrt{n})],\text{ for } \vert n \vert\ge 2\\
          x_1 u_{\pm 1}^\theta &= \frac{i}{2} u_0^\theta + \frac{\sqrt{2}}{4} [u_2^\theta(\sqrt{2}\pm\sqrt{1}) + u_{-2}^\theta(\sqrt{2} \mp \sqrt{1})]\text{ and }
          x_1 u_0^\theta = \frac{\sqrt{2}i}{4}(u_1^\theta + u_{-1}^\theta).
      \end{split}
      \end{equation}
  \end{enumerate}
\end{lemm}
\begin{proof}We omit the proof of this Lemma here as it follows from straightforward but lengthy basis expansions and the simple observation that $\langle u_m^{-\theta}, u_n^\theta\rangle = \cos\left(\thot\right) \delta_{m,n} + i \sin\left(\thot\right) \delta_{m,-n}.$
  \end{proof}
From the preceding Lemmas \ref{lemma: Twexpansion} and 
  \ref{lemma: projections}, we can compute the asymptotic expansion of each term of $Q_{n,k}$ in \eqref{eq: Q_i} and therefore prove  Prop.\ref{prop: expofQk}.

For the $(1,2)$-entry of $Q_{n,0}$, by Lemma \ref{lemma: Twexpansion}, we have
\[
  \begin{split} \int (\unt)^*\tilde{T}^w\unnt dx_1 =& \int (\unt)^* T \unnt dx_1 + \sqrt{h}\int (\unt)^* \langle \nabla_{x_2,\xi_2}T, (x_1,-D_{x_1}) \rangle\unnt dx_1 \\
    &+ \frac{h}{2}\int (\unt)^* \langle (x_1,-D_{x_1}), \operatorname{Hess} T(x_2,\xi_2)(x_1,-D_{x_1})^T \rangle\unnt dx_1\\
    =: &t_{n,0}^\upzero +\sqrt{h} t_{n,0}^\upone +h t_{n,0}^\uptwo + \mathcal O_{S(\RR_{x_2,\xi_2}^2;\CC_{2\times 2})}(h^{\frac{3}{2}}).
  \end{split}
\]
Specializing now to the chiral case, in which case the $\theta$-dependence can be gauged away, we choose $$T(x_2,\xi_2) = \begin{pmatrix} 0 & \alpha_1\overline{U(x_2,\xi_2)} \\ \alpha_1 U_-(x_2,\xi_2) & 0 \end{pmatrix}$$
where in the chiral case, by Lemmas \ref{lemma: Twexpansion} and \ref{lemma: projections}, we see that 
\[
  \begin{split}
  &t_{\ch, n,0}^{\upzero} =  0, \ t_{\ch, n,0}^{\upone} =\frac{\lambda_n\alpha_1 i}{2}(\partial_{\bar{w}}\overline{U_-} - \partial_w U),\text{ and  } t_{\ch, n,0}^{\uptwo}  = 0,
  \end{split}
\]
while in the anti-chiral case, choosing $T(x_2,\xi_2) =\alpha_0 V(x_2,\xi_2) \operatorname{id}_{\CC_{2\times 2}}$
\[
  \begin{split}
    &t_{\ach, n,0}^{\upzero} = \begin{cases}
      \alpha_0\cos(\thot)V & n\neq 0,\\
      \alpha_0 e^{-\frac{\theta}{2}i} V & n = 0,
    \end{cases}, \quad t_{\ach, n,0}^{\upone}  = 0, \text{ and }t_{\ach, n,0}^{\uptwo} = \begin{cases}
      \frac{\alpha_0}{4}(2|n|\cos(\thot) - i\sigma_3\sin(\thot))\Delta_{x_2,\xi_2}V, &n\neq 0\\
      \frac{\alpha_0}{4}e^{-i\frac{\theta}{2}}\Delta_{x_2,\xi_2}V & n = 0.
    \end{cases}
    \end{split}
\]
Due to the conjugacy relation $\int (\unt)^* (\tilde{T}^w)^*\unnt dx_1 = (\int (\unnt)^* \tilde{T}^w \unt dx_1)^*$, the expansion of $Q_{n,0}^\theta$ follows by \eqref{eq: Q_i}.

Similarly for the $(1,1)$-entry $Q_{n,1}^\theta$, denote
\[
  -\int (\unt)^* \tilde{T}^w \ennt (\tilde{T}^w)^* \unt dx_1 =: t_{n,1}^{\upzero} + t_{n,1}^\upone \sqrt{h}+\mathcal O_{S(\RR_{x_2,\xi_2}^2;\CC_{2\times 2})}(h)
\]
where, using Lemma \ref{eq: Twexpansion}, in the chiral case,
\[
  \begin{split}
    &t_{\ch, n,1}^{(0)} = -\frac{\alpha_1^2\lambda_n}{4}(|U|^2 - |U_-|^2)\text{ and }t_{\ch, n,1}^{\upone} = \begin{cases}
      -\frac{\alpha_1^2z}{4}[2|n|(|U|^2+|U_-|^2)+(|U|^2 - |U_-|^2)], & n \neq 0\\
      -\frac{\alpha_1^2z}{2}|U|^2, & n = 0
    \end{cases}
  \end{split}
\]
and in the anti-chiral case
\[
  \begin{split}
    &t_{\ach, n,1}^{\upzero} = \begin{cases}
      \frac{\alpha_0^2|V|^2\sin^2(\thot)}{2\lambda_n}, & n \neq 0\\
      0 , & n = 0
    \end{cases}\text{ and }t_{\ach, n,1}^{\upone} = \begin{cases}
      -\frac{\alpha_0^2|V|^2\sin^2(\thot)z}{4\lambda_n^2}, & n \neq 0\\
      0, & n = 0.
    \end{cases}
  \end{split}
\]

In a similar fashion, the $(2,2)$-entry of $Q_{n,1}$, defined in \eqref{eq: Q_i}, can be obtained by precisely the same computations after only replacing $\theta$ by $-\theta$ and $T^*$ by $T$, i.e. $U$ switching with $U_-$ and using $V^*$ instead of $V$. Thus the asymptotic expansion of $Q_{n,1}^\theta$ follows.

Similarly for $Q_{n,2}^\theta$ we restrict us to the $(1,2)$ entry in \eqref{eq: Q_i}. Then, we denote 
\[
  \int (\unt)^* \tilde{T}^w \ennt (\tilde{T}^w)^*\ent \tilde{T}^w \unnt dx_1 =: t_{n,2}^{\upzero} + \mathcal O_{S(\RR_{x_2,\xi_2}^2;\CC_{2\times 2})}(\sqrt{h}).
\]
It follows then by Lemma \ref{eq: Twexpansion}, that in the chiral model, $t_{\ch, n,2}^{\upzero} = 0$
while in the anti-chiral model, $t_{n,2}^{\upzero} = -\frac{\alpha_0^3|V|^2\sin^2(\thot)\cos(\thot)}{4\lambda_n^2}V. $
By the conjugacy relation 
\[\int (\unnt)^* (\tilde{T}^w)^*\ent \tilde{T}^w \ennt (\tilde{T}^w)^*\unt dx_1  = \Bigg(\int (\unt)^* \tilde{T}^w \ennt (\tilde{T}^w)^*\ent \tilde{T}^w \unnt dx_1\Bigg)^*,\] this also yields directly the expansion of $Q_{\ach, n,2}^\theta$.
\end{proof}

\smallsection{Existence, derivation and $z$-dependence}
Now we prove the rest of Prop. \ref{prop: z_dependence_each_term}, which includes the existence and derivation of asymptotic expansion of $\Epm^{-1}$ and $r_n$ and the $z$ dependence of each terms in the expansions of $\Epm$, $\Epm^{-1}$ and $r_n$.
\begin{proof}[Proof of Prop. \ref{prop: z_dependence_each_term}]
 By \eqref{eq: Z_n} and Prop. \ref{prop: expofQk}, $\Epm = \sqrt{h}(z - Z_n)$, and $Z_n$ has an asymptotic expansion in $S$. Thus, $\frac{1}{\sqrt{h}}\Epm$ also has an asymptotic expansion in $S$: $\frac{1}{\sqrt{h}}\Epm\sim \sum\limits_j h^{\frac{j}{2}}E_{n,j}$ with $E_{n,j}\in S$. To exhibit the $z$-dependence, we notice that only $E_\on$ depends on $z$ in \eqref{eq: Z_n}. Thus, by \eqref{eq: everything_in_S}, we have
 \[
 \begin{split}
     Z_n^W  &= R_n^+\tilde{\msV}^W(\indic + \sqrt{h}E_\on\tilde{\msV}^W)^{-1} R_n^-= R_n^+ \tilde{\msV}^W R_n^- + \sum\limits_{\alpha = 1}^\infty R_n^+ \tilde{\msV}^W (\sqrt{h}E_\on \tilde{\msV}^W)^\alpha R_n^-\\
     &= R_n^+ \tilde{\msV}^W R_n^- + \sum\limits_{\alpha = 1}^\infty h^{\frac{\alpha}{2}} R_n^+ \tilde{\msV}^W \left[\sum\limits_{m \neq n} \frac{K_m^\theta (K_m^\theta)^*}{\lambda_m - \lambda_n} \sum\limits_{\beta = 0}^\infty \left(\frac{\sqrt{h}z}{\lambda_m - \lambda_n}\right)^\beta \right]^\alpha R_n^-\\
     &= R_n^+ \tilde{\msV}^W R_n^- + \sum\limits_{\alpha = 1}^\infty \sum\limits_{\gamma = 0}^\infty  h^{\frac{\alpha + \gamma}{2}} z^{\gamma} A_{n,\alpha,\gamma}^W(x_2,hD_{x_2})= R_n^+\tilde\msV^WR_n^- - \sum\limits_{j = 1}^\infty h^{\frac{j}{2}} \left(\sum\limits_{k = 0}^{j - 1} z^k a_{n,j,k}^W(x_2,hD_{x_2})\right)
 \end{split}
 \]
 for some appropriate $A_{n,\alpha,\gamma}(x_2,\xi_2)\in S$ and $a_{n,j,k}(x_2,\xi_2)\in S$. Thus we proved part \eqref{part_one}.
 
 We can formally derive \eqref{eq: Epm_inv} and \eqref{eq: Epm_inv_explicit} for $\sqrt{h}\Epm^{-1}$, using a formal parametrix construction by using 
 \begin{equation}
      \label{eq: fake_sharp}
      a\widetilde\# b \sim \sum_k\frac{1}{k!}\left.\left(\left(\frac{ih}{2}\sigma(D_{x_2},D_{\xi_2};D_y,D_\eta)\right)^k\left(a(x_2,\xi_2)b(y,\eta)\right)\right)\right|_{x_2 = y,\ \xi_2 = \eta}.
\end{equation}
More specifically, there is a formal expansion of $\sqrt{h}\Epm^{-1}$, which is denoted by $\sqrt{h}F_n \sim \sum\limits_j h^{\frac{j}{2}}F_{n,j}$, such that $\frac{1}{\sqrt{h}}\Epm \widetilde{\#} \sqrt{h}F_n= \indic_{2\times 2}$. Denote $\sigma(D_{x_2},D_{\xi_2};D_y,D_\eta)$ in \eqref{eq: fake_sharp} by $\sigma$, we can solve for $F_{n,j}$ by considering
\[
\begin{split}
    \indic_{2\times 2} &= \Epm \widetilde{\#} F_n^{-1} \sim \sum\limits_{\alpha = 0}^\infty \sum\limits_{\beta = 0}^\infty h^{\frac{\alpha+\beta}{2}} E_{n,\alpha} \widetilde{\#} F_{n,\beta}\\
    &= \sum\limits_{\alpha = 0}^\infty \sum\limits_{\beta = 0}^\infty h^{\frac{\alpha+\beta}{2}} \sum\limits_{\gamma = 0}^\infty h^\gamma \left.\left(\left( \frac{i\sigma}{2}\right)^\gamma (E_{n,\alpha}(x_2,\xi_2)F_{n,\beta}(y,\eta))\right)\right|_{x_2 = y, \xi_2 = \eta}\\
    &= \sum\limits_{j = 0}^\infty \sum\limits_{\beta = 0}^j \sum\limits_{\alpha = 0}^{j -\beta} h^{\frac{j}{2}} \left.\left(\left( \frac{i\sigma}{2}\right)^{\frac{j - \alpha - \beta}{2}} (E_{n,\alpha}(x_2,\xi_2)F_{n,\beta}(y,\eta))\right)\right|_{x_2 = y, \xi_2 = \eta}.\\
\end{split}
\]
Then we compare the parameter of the term of $h^{\frac{j}{2}}$ on both sides and get
\[
    -E_{n,0}F_{n,j} = \sum\limits_{\beta = 0}^{j - 1} \sum\limits_{\alpha = 0}^{j - \beta} \left.\left(\left(\frac{i\sigma}{2}\right)^{\frac{j - \alpha - \beta}{2}}(E_{n,\alpha}(x_2,\xi_2)F_{n,\beta}(y,\eta))\right)\right|_{x_2 = y, \xi_2 = \eta},
\]
from which we can solve for $F_{n,j}$. Furthermore, by \eqref{eq: Epm_exp} and $E_{n,0} = z - z_{n,0}$, we can check inductively that for $j\geq 0$, there are $b_{n,j,k,l}$, $c_{n,j,k}$ such that 
\[
\begin{split}
     &F_{n,j} = \sum_{k = 0}^{j} (z - z_{n,0})^{-1} \prod\limits_{l = 0}^{k}\left(b_{n,j,k,l}(x_2,\xi_2;z)(z - z_{n,0})^{-1}\right),\\ &\text{with~} \prod\limits_{l = 0}^k b_{n,j,k,l}(x_2,\xi_2;z) = \sum\limits_{\alpha = 0}^{j +k - 2} z^\alpha c_{n,j,k}(x_2,\xi_2), \text{~for~appropriate~} c_{n,j,k}\in S.
\end{split}
\]

Notice that $\widetilde{\#}$ differs from the actual sharp product $\#$:
\begin{equation}
  \label{eq: true_sharp}
  a\# b = e^{\frac{ih}{2}\sigma(D_{x_2},D_{\xi_2};D_y,D_\eta)} \left(a(x_2,\xi_2)b(y,\eta)\right)|_{x_2 = y,\ \xi_2 = \eta}.
\end{equation}
% If $a\in S_\delta^{k_1}$, $b\in S_\delta^{k_2}$, then $a\# b - a\widetilde\# b \in S_\delta^{-\infty}= \bigcap\limits_{k\in\NN} S_\delta^{-k}$.

Now we claim that this formal expansion for $\sqrt{h}F_n$ is legitimate as an asymptotic expansion in $S_\delta^\delta$ and in fact, it is exactly the asymptotic expansion of $\sqrt{h}\Epm$ when $|z|\leq 2\Vert \msV\Vert_\infty$ and $|\Im z|\geq h^\delta$. In fact, $\sqrt{h}(\Epm^{-1} - F_n)\in S^{-\infty}$.

In fact, since $|z|$ is bounded and $|\Im z|\geq h^\delta$ and $F_{n,j}$ is a rational function in $z$, thus $h^{\frac{j}{2}}F_{n,j} \in S_{\delta}^{j(\delta - \frac{1}{2}) +\delta}$. Since $j(\delta - \frac{1}{2}) +\delta\to -\infty$, \eqref{eq: Epm_inv} is not only a formal expansion but is indeed an asymptotic expansion of $F_n$ in the symbol class $S_\delta^\delta$. 

Furthermore, comparing \eqref{eq: fake_sharp} with \eqref{eq: true_sharp}, we see that $F_n\# \Epm = 1 - R_n$ with $R_n \in S^{-\infty}$. By Beal's lemma, there is $\tilde{R}_n\in S^{-\infty}$ such that $(1 - R_n^W)^{-1} = 1-\tilde{R}_n^W$. Thus $\sqrt{h}\Epm^{-1} = F \# (1 - \tilde{R}_n^W)\in S_\delta^\delta$ and have exactly the same asymptotic expansion as $F_n$ in \eqref{eq: Epm_inv} since $\tilde{R}_n\in S_\delta^{-\infty}$. Thus part \eqref{part_two} is proved. 

It follows that $r_n:= \partial_z \Epm \# \Epm^{-1}$ is also well-defined with an asymptotic expansion in $S_\delta^\delta$. Since
\[
\begin{split}
    r_n &\sim \sum\limits_{\alpha = 0}^\infty h^{\frac{\alpha}{2}}\partial_z E_{n,\alpha} \# \sum\limits_{\beta = 0}^\infty h^{\frac{\beta}{2}} F_{n,j}= \sum\limits_{\alpha = 0}^\infty \sum\limits_{\beta = 0}^\infty h^{\frac{\alpha + \beta}{2}} \sum \limits_{\gamma= 0}^\infty h^\gamma \left.\left(\left(\frac{i\sigma}{2}\right)^\gamma(E_{n,\alpha}(x_2,\xi_2;z)F_{n,\beta}(y,\eta;z))\right)\right|_{x_2 = y,\xi_2 = \eta}\\
    &= \sum_{j = 0}^\infty \sum\limits_{\alpha =0}^j \sum\limits_{\beta = 0}^{j - \alpha} h^{\frac{j}{2}} r_{n,j,\alpha, \beta} \left.\left(\left(\frac{i\sigma}{2}\right)^{\frac{j - \alpha - \beta}{2}}(E_{n,\alpha}(x_2,\xi_2;z)F_{n,\beta}(y,\eta;z))\right)\right|_{x_2 = y,\xi_2 = \eta}.
\end{split}
\]
Combining it with part \eqref{part_one} and \eqref{part_two} and the fact that $\sigma$ is linear in $D_{x_2}$, $D_{\xi_2}$, we get part \eqref{part_three}. Part \eqref{part_four} follows directly from parts \eqref{part_one}, \eqref{part_two}, \eqref{part_three} with Prop. \ref{prop: expofQk}.
\end{proof}

\section{For the proof of Lemma \ref{lemma: d.o.s.computable}}
\label{appendix2}
In this subsection, we provide several lemmas that together complete the proof of Lemma \ref{lemma: d.o.s.computable}. We start with a proposition that expresses the Hilbert-Schmidt norm of the quantization in terms of its operator-valued symbol.
\begin{prop}
  \label{prop: HS norm}
  Let $\msH_1$, $\msH_2$ be two Hilbert spaces. Let $P: \RR^2 \to \LH$ be an operator-valued symbol in the symbol class $S(\RR^2_{y,\eta};\LH)$. Furthermore, let $\Vert\cdot \Vert_{\HS}$ denote the Hilbert-Schmidt norm of maps $\msH_1$ to $\msH_2$ or $L^2(\RR_y;\msH_1)$ to $L^2(\RR_y;\msH_2)$. Then 
  \[  \Vert P^W(y,hD_y)\Vert_{\HS}^2=\frac{1}{2\pi h}\int_{\RR^2}\Vert P(y,\eta)\Vert_{\HS}^2 \ dy \ d\eta.\]
    In particular, if $\msH_1 = \msH_2 = \RR$, for the scalar-valued symbol $P$, we have 
  \begin{equation}
    \label{eq: scalarHSnorm}
    \Vert P^W(y,hD_y)\Vert_{\HS}^2 = \frac{\Vert P(y,\eta)\Vert_{L^2(\RR^{2};\RR)}^2}{2\pi h}.
  \end{equation}
    
  \end{prop}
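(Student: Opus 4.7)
The plan is to compute the Schwartz kernel of $P^W(y, hD_y)$ explicitly, express the Hilbert--Schmidt norm as a double integral of the kernel, and then reduce to a standard Plancherel-type identity after a change of variables. The argument is essentially the scalar proof with the absolute value replaced everywhere by the Hilbert--Schmidt norm on $\LH$, which is valid because Plancherel extends to $\HS$-valued Fourier transforms (the $\HS$ class is itself a Hilbert space).

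First, from the definition of the $(y,hD_y)$-Weyl quantization, I read off the operator-valued Schwartz kernel
\[
K_P(y,y') \;=\; \frac{1}{2\pi h}\int_{\RR} e^{\frac{i}{h}(y-y')\eta}\, P\!\left(\tfrac{y+y'}{2},\eta\right)\, d\eta \ \in\ \LH,
\]
and recall the standard identity for Hilbert--Schmidt operators between $L^2(\RR;\msH_1)$ and $L^2(\RR;\msH_2)$ with kernel $K$:
\[
\Vert P^W\Vert_{\HS}^2 \;=\; \int_{\RR}\int_{\RR} \Vert K_P(y,y')\Vert_{\HS(\msH_1;\msH_2)}^{2}\, dy\, dy'.
\]

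Next, I perform the linear change of variables $w=(y+y')/2$, $z=y-y'$ (Jacobian $1$), which turns the integral into
\[
\Vert P^W\Vert_{\HS}^{2} \;=\; \int_{\RR}\int_{\RR} \left\Vert \frac{1}{2\pi h}\int_{\RR} e^{\frac{iz\eta}{h}} P(w,\eta)\, d\eta\right\Vert_{\HS}^{2}\, dw\, dz.
\]
The inner integral is, up to the constant $\frac{1}{2\pi h}$, the (operator-valued) Fourier transform $\widehat{P}(w,\cdot)$ of $\eta\mapsto P(w,\eta)$ evaluated at $-z/h$. At this point the key step is to apply Plancherel's theorem in the $\HS$-valued setting, fibrewise in $w$:
\[
\int_{\RR} \Vert \widehat{P}(w,\xi)\Vert_{\HS}^{2}\, d\xi \;=\; 2\pi \int_{\RR} \Vert P(w,\eta)\Vert_{\HS}^{2}\, d\eta.
\]

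Finally, substituting $\xi = -z/h$ (so $dz = h\, d\xi$) and using the Plancherel identity yields
\[
\Vert P^W\Vert_{\HS}^{2} \;=\; \frac{1}{(2\pi h)^{2}}\int_{\RR}\int_{\RR} \Vert \widehat{P}(w,\xi)\Vert_{\HS}^{2}\, h\, d\xi\, dw \;=\; \frac{1}{2\pi h}\int_{\RR^2} \Vert P(w,\eta)\Vert_{\HS}^{2}\, dw\, d\eta,
\]
which is the claimed formula. The scalar statement \eqref{eq: scalarHSnorm} is the special case $\msH_1=\msH_2=\RR$, where $\Vert\cdot\Vert_{\HS}$ reduces to $|\cdot|$.

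The only point requiring care --- and the main (mild) obstacle --- is justifying the $\HS$-valued Plancherel step. Since $\HS(\msH_1;\msH_2)$ is a Hilbert space and $P\in S(\RR^2_{y,\eta};\LH)$ ensures $P(w,\cdot)$ is Schwartz in $\eta$ with values in that Hilbert space (after a standard density argument if $P$ is only bounded rather than Schwartz, one can approximate $P$ by Schwartz symbols and pass to the limit, noting that both sides of the identity are finite on each approximation), the usual scalar Plancherel theorem applies componentwise (or, more cleanly, to the Hilbert-space-valued function $P(w,\cdot)$). All other manipulations are Fubini and linear changes of variable, both of which are justified once the integrand is in $L^1$.
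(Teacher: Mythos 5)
Your proof is correct, and it is the standard argument: the paper states Proposition \ref{prop: HS norm} without proof, so there is nothing to compare against, but your route (operator-valued Schwartz kernel, the identity $\Vert P^W\Vert_{\HS}^2=\iint\Vert K_P(y,y')\Vert_{\HS}^2\,dy\,dy'$, the change of variables $(w,z)=(\tfrac{y+y'}{2},y-y')$, and Plancherel for functions valued in the Hilbert space $\HS(\msH_1;\msH_2)$) is exactly the intended one, and your constants work out. The one caveat worth keeping explicit is the one you already flag: a symbol in $S(\RR^2;\LH)$ is only bounded with bounded derivatives, and need not be $\HS$-valued or square-integrable, so the identity should be read as an equality in $[0,\infty]$ (both sides finite or infinite together), with the finite case justified by approximation; in the paper's application (Lemma \ref{lemma:E_-chi_R} and Claim \ref{claim: tcRw}) the relevant symbols are $\HS$-valued with integrable $\HS$-norm squared, so this is harmless.
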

  The next Lemma allows us to interchange the order of trace and integration.

  \begin{lemm}
  \label{lemma:E_-chi_R}
  Let $\Em$, $\Ep$ be as in \eqref{eq:Grushin}. Let $\tcRW$, $\bcRW$ be as in the proof of Lemma \ref{lemma: explicit_asymptotic_expansion_needed}. Then, there exists a constant $C>0$ such that
  \[
    \begin{split}
      &\Vert \bcRW \Em\Vert_{\operatorname{HS}(L^2(\RR^2_x), L^2(\RR_{x_2}))}\leq Ch^{-1/2} R\text{ and }\Vert \Em \tcRW \Vert_{\operatorname{HS}(L^2(\RR^2_x), L^2(\RR_{x_2}))}\leq Ch^{-1/2} R.
    \end{split}
  \] 
  \end{lemm}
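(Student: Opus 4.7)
The first bound reduces to the submultiplicativity of the Hilbert--Schmidt norm combined with Proposition \ref{prop: HS norm}. I view $\bcRW$ as the $(x_2, hD_{x_2})$-Weyl quantization of the $\mathcal L(\CC^2)$-valued symbol $\bar\indic_R(x_2,\xi_2)\,\operatorname{id}_{\CC^2}$, whose pointwise Hilbert--Schmidt norm equals $\sqrt{2}\,|\bar\indic_R|$. Proposition \ref{prop: HS norm} then yields
\[
\|\bcRW\|_{\HS}^2 \,=\, \frac{2}{2\pi h}\,\|\bar\indic_R\|_{L^2(\RR^2)}^2 \,=\, \frac{4R^2}{\pi h}.
\]
Since by \eqref{eq: everything_in_S} the operator-valued symbol $E_{n,-}$ lies in $S(\RR^2_{x_2,\xi_2};\mathcal L(B_{x_1}^k;\CC^2))$ uniformly in $h$, the operator norm $\|\Em\|_{L^2 \to L^2}$ is uniformly bounded, and the first bound follows from $\|\bcRW \Em\|_{\HS} \le \|\bcRW\|_{\HS} \|\Em\|_{L^2\to L^2}$.

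The second bound requires more care, since the direct submultiplicativity argument fails: $\Em$ is not Hilbert--Schmidt (its leading-order Schwartz kernel contains a factor $\delta(x_2 - x_2')$). Instead, I plan to compute the squared Hilbert--Schmidt norm as a trace. Since $\tcRW = \msU\indic_R\msU^{-1}$ is a self-adjoint projection on $L^2(\RR_x^2;\CC^4)$ (because $\indic_R^2 = \indic_R$ and $\msU$ is unitary), cyclicity of the trace gives
\[
\|\Em \tcRW\|_{\HS}^2 \,=\, \Tr(\tcRW \Ep \Em \tcRW) \,=\, \Tr(\tcRW \Ep \Em) \,=\, \Tr\!\big(\indic_R\,\msU^{-1}\Ep \Em \,\msU\big).
\]
By Lemma \ref{lemm:perturbed Grushin}, the leading-order part of $\Ep \Em$ is the orthogonal projection $R_n^- R_n^+$ onto $N_n^\theta$, whose unitary conjugate $\msU^{-1} R_n^- R_n^+ \msU$ is precisely the projection $\Pi_n^B$ onto the $n$-th Landau level of $\msH_0^\theta$ from Lemma \ref{lemma: eigenspace}. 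For a constant magnetic field this projection has translation-invariant diagonal kernel of density proportional to $B = 1/h$, so that $\Tr(\indic_R \Pi_n^B) = O(R^2/h)$, which is exactly the desired order.

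The main obstacle is to control the lower-order Grushin corrections to $\Ep \Em$: these are bounded $(x_2, hD_{x_2})$-pseudodifferential operators with operator-valued symbols of successive orders $h^{j/2}$ in the class $S(\RR_{x_2,\xi_2}^2;\mathcal L(L^2(\RR_{x_1};\CC^4)))$. For any such correction $A$, I combine the trace estimate $|\Tr(\indic_R A)| = |\Tr(\indic_R^{1/2} A \indic_R^{1/2})| \le \|\indic_R^{1/2} A \indic_R^{1/2}\|_1$ with Proposition \ref{prop: HS norm} and the uniform boundedness of its symbol, which contributes a factor of order $h^{-1} R^2$ multiplied by a positive power of $\sqrt{h}$, absorbed by the leading term. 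This yields $\|\Em \tcRW\|_{\HS}^2 \le C R^2 / h$, completing the bound.
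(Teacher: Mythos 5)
Your first inequality is fine and is essentially the paper's argument: $\bcRW$ is the quantization of a scalar cutoff, so \eqref{eq: scalarHSnorm} gives $\Vert\bcRW\Vert_{\HS}=\mathcal O(Rh^{-1/2})$, and submultiplicativity with the uniform $L^2$-boundedness of $\Em$ finishes it.

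For the second inequality your route is genuinely different from the paper's, and it has a gap at exactly the point where the real work lies. The paper does not pass through the Landau projection at all; it shows that the operator-valued symbol $\tcRw(x,D_{x_1},\xi_2)$ lies in $S(\RR^2_{x_2,\xi_2};\HS(\Lt;\Bkp);m)$ with order function $m(x_2,\xi_2)=(1+(|(x_2,\xi_2)|-R)_+)^{-k'}$ (Claim \ref{claim: tcRw}, proved by a non-stationary phase argument), composes with $\Em\in S(\RR^2;\mathcal L(\Bkp;\CC^2))$, and integrates $m^2$ to get $\mathcal O(R^2)$. Your leading term is plausible: $\Vert R_n^+\tcRW\Vert_{\HS}^2$ conjugates to $\int_{B_R}\Pi^B_n(x,x)\,dx=\mathcal O(R^2/h)$. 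But your treatment of the corrections does not go through as stated. The difficulty is that after the symplectic reduction $\tcRW$ localizes the \emph{skewed} variables $(x_2+h^{1/2}x_1,\xi_2-h^{1/2}\xi_1)$; for fixed $(x_2,\xi_2)$ inside the square the remaining $(x_1,\xi_1)$-region has phase-space area $\sim R^2/h$, so ``uniform boundedness of the operator-valued symbol'' of a correction $A$ gives no trace-class control of $\tcRW A\tcRW$ whatsoever — a naive count along those lines produces $\mathcal O(R^4/h^2)$, not something absorbed by the leading term. What actually saves the corrections is that every term of $\Ep$ and $\Em$ factors through $R_n^\mp$, hence is HS-valued (indeed finite-rank) in the $x_1$-fibre, \emph{and} one still needs the quantitative decay of $\Vert\Em\#\tcRw(x_2,\xi_2)\Vert_{\HS}$ outside the square to make the $(x_2,\xi_2)$-integral come out as $R^2$ rather than $R^2\cdot(R^2/h)$. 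Neither ingredient is supplied by your plan; the second is precisely the appendix estimate \eqref{eq: lambdasharptcR} and cannot be bypassed by citing Proposition \ref{prop: HS norm} alone.

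Two smaller points. First, $\Vert\Em\tcRW\Vert_{\HS}^2=\Tr(\tcRW\Em^*\Em\tcRW)$ requires $\Em(z)^*$, which equals $E_{n,+}(\bar z)$ rather than $\Ep(z)$ for $\Im z\neq 0$; this is cosmetic but should be stated. Second, the cyclicity step $\Tr(\tcRW\Ep\Em\tcRW)=\Tr(\tcRW\Ep\Em)$ presupposes trace-class properties that are equivalent to what you are trying to prove; it is safer to keep the expression in the manifestly nonnegative form $\Tr\bigl((\Em\tcRW)^*(\Em\tcRW)\bigr)$ throughout.
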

  \begin{proof}
    The first equation follows from \eqref{eq: scalarHSnorm}. For the second equation, we first recall that
    \begin{claim}
      \label{claim: compofHS}
      If $a\in S(\RR^{2n};\ML(X,Y);m_1)$, $b\in S(\RR^{2n};\HS(Y,Z);m_2)$ and $m_1m_2\in L^2(\RR^{2n}_{x,\xi}) $, where $m_1,m_2$ are order functions, then
      \[
        b \# a \in S(\RR^{2n};\HS(X,Z);m_1m_2) \text{ and }
        (b  \#  a)^W = b^Wa^W \in \HS(L^2(\RR^n_x;X);L^2(\RR^n_x;Y)).   
      \]
    \end{claim}
  Similar to Lemma 1 in \cite{W95}, we can show that
    \begin{claim}
      \label{claim: tcRw} For any $k'$ such that $1<k'$, we have 
      \begin{enumerate}
        \item $\Em(x_2,\xi_2)\in S(\RR^2_{x_2,\xi_2};\ML(\Bkp;\CC^2))$,
        \item $\tcRw(x,D_{x_1},\xi_2)\in S(\RR^2_{x_2,\xi_2};\HS(\Lt;\Bkp);m)$,
        where $m(x_2,\xi_2) = (1+ (|(x_2,\xi_2)| - R)_+)^{-k'}$ is the order function.
      \end{enumerate}
    \end{claim}
  Then it follows that, by Claim \ref{claim: compofHS}, we have $\Em \# \tcRw \in S(\RR^2_{x_2,\xi_2};\HS(\Lt);m)$, i.e.
  \[
    \Vert \Em \# \tcRw (x_2,\xi_2)\Vert_{\HS(\Lt)} \leq m(x_2,\xi_2) = (1+(|(x_2,\xi_2)| - R)_+)^{-k'}.
  \]
  Thus by Prop. \ref{prop: HS norm}, since for all $k>0$,
  \[\int_{\RR^2}[1+(|(x_2,\xi_2)|-R)_+]^{-2k}dxd\xi=\pi R^2+\mathcal O(R^{\max(1,-2k+2)})=\mathcal O(R^2),\]
  we get $\Vert \Em \tcRW \Vert_{\HS(\Ltx;\Ltxt)} \leq Ch^{-1/2}R$ and the Lemma is proved.
  \end{proof}

\begin{lemm}
  \label{lemma: trace class}
    Let $\Em,\Ep,\Epm$ be as in \eqref{eq:Grushin}. For $\Im z\neq 0$, both operators \[\tcRW\Ep \Epm^{-1}\Em \tcRW \text{~and~} \bcRW \Em \Ep \Epm^{-1}\bcRW \] are trace class as bounded linear operators $\mathcal{L}(\Ltx)$ and $\mathcal{L}(\Ltxt)$, respectively. 
\end{lemm}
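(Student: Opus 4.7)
The plan is to prove trace class by factoring each operator as a product of Hilbert--Schmidt operators (possibly with a bounded operator sandwiched), using that $\mathrm{HS}\circ \mathrm{HS}$ is trace class and that $\mathrm{HS}$ is a two-sided ideal inside the bounded operators. The key inputs are the Hilbert--Schmidt bounds from Lemma \ref{lemma:E_-chi_R}, the boundedness of $\Epm^{-1}$ from Lemma \ref{lemma: Properties of r} (valid for $\Im z \neq 0$), and the identity $\Em \Ep = \partial_z \Epm/\sqrt{h}$ from \eqref{eq: linear_in_z}.

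For the first operator, I would factor
\[
\tcRW \Ep \Epm^{-1} \Em \tcRW = (\tcRW \Ep)\circ \Epm^{-1} \circ (\Em \tcRW).
\]
Lemma \ref{lemma:E_-chi_R} gives directly that $\Em \tcRW$ is Hilbert--Schmidt with norm $\mathcal O(h^{-1/2}R)$. To obtain the Hilbert--Schmidt property of $\tcRW \Ep(z)$, I would use a symmetry of the Grushin parametrix: since $\msG_n^\theta$ is self-adjoint and $(R_n^+)^* = R_n^-$, one has $\mathcal P_n(z)^* = \mathcal P_n(\bar z)$, and therefore
\[
\mathcal E_n(z)^* = \mathcal E_n(\bar z), \qquad \text{in particular } \Ep(z)^* = \Em(\bar z).
\]
Combined with $\tcRW^* = \tcRW$, this yields $(\tcRW\Ep(z))^* = \Em(\bar z)\tcRW$, which is Hilbert--Schmidt by a second application of Lemma \ref{lemma:E_-chi_R} at the point $\bar z$. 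Since $\Epm^{-1}$ is bounded on $L^2(\RR_{x_2};\CC^2)$ for $\Im z \neq 0$ by the estimates in Lemma \ref{lemma: Properties of r} (Calder\'on--Vaillancourt applied to $\Epm^{-1} \in S^{1/2+\delta}_\delta$), the factorization above is $\mathrm{HS}\circ \text{bounded}\circ \mathrm{HS}$, hence trace class.

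For the second operator, the same direct factorization is not convenient because the middle factor $\Ep$ cannot be flanked by $\bcRW$'s (the domains do not match). Instead, I would collapse the middle using $\Em \Ep = \partial_z \Epm/\sqrt{h}$ from \eqref{eq: linear_in_z}, which gives
\[
\bcRW \Em \Ep \Epm^{-1} \bcRW = \tfrac{1}{\sqrt{h}}\,\bcRW\, r_n^W\, \bcRW,
\]
where $r_n = \partial_z \Epm \# \Epm^{-1}$ is a bounded symbol in $S_\delta^\delta$ for $|\Im z| \geq h^\delta$, so $r_n^W$ is bounded on $L^2(\RR_{x_2};\CC^2)$. By Proposition \ref{prop: HS norm}, the $(x_2,hD_{x_2})$-quantization of the compactly supported bounded function $\bar\indic_R$ satisfies $\Vert \bcRW\Vert_{\mathrm{HS}}^2 = (2\pi h)^{-1}\Vert \bar\indic_R\Vert_{L^2}^2 = \mathcal O(h^{-1}R^2)$, so $\bcRW$ is Hilbert--Schmidt. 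Thus $\bcRW \, r_n^W\, \bcRW$ is again $\mathrm{HS}\circ\text{bounded}\circ \mathrm{HS}$, hence trace class.

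The main obstacle I anticipate is the symmetry identity $\Ep(z)^* = \Em(\bar z)$ for the first operator: once this is in hand, the Hilbert--Schmidt property of $\tcRW \Ep$ comes for free from Lemma \ref{lemma:E_-chi_R}. Everything else is a bookkeeping check that the composition of the various HS/bounded factors is consistent with the domains and codomains of $\Ep,\Em,\Epm^{-1},\tcRW,\bcRW$, together with Calder\'on--Vaillancourt-type $L^2$-boundedness of the pseudodifferential operators whose symbols are controlled in Lemma \ref{lemma: Properties of r}.
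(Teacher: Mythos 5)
Your proposal is correct and follows essentially the paper's argument: factor each operator as $\mathrm{HS}\circ\text{bounded}\circ\mathrm{HS}$, using Lemma \ref{lemma:E_-chi_R} together with the adjoint relation $\Ep(z)^*=\Em(\bar z)$ (the paper phrases this as ``$\tcRW\Ep$ is the adjoint of $\Em\tcRW$'') and the boundedness of $\Epm^{-1}$ coming from \eqref{eq:Reso Id}. Your treatment of the second operator differs only cosmetically --- you collapse $\Em\Ep$ to $h^{-1/2}\partial_z\Epm$ via \eqref{eq: linear_in_z} and use that $\bcRW$ itself is Hilbert--Schmidt, whereas the paper pairs $\bcRW\Em$ with $\Ep\Epm^{-1}\bcRW$ --- and both give the same $CR^2h^{-3/2}|\Im z|^{-1}$ trace-norm bound. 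One small caution: the symbol-class memberships $\Epm^{-1}\in S_\delta^{1/2+\delta}$ and $r_n\in S_\delta^\delta$ are only established for $|\Im z|\ge h^\delta$, while the lemma is stated for all $\Im z\neq 0$; for the boundedness of the middle factors you should instead invoke the operator-norm bound $\Vert\sqrt h\,\Epm^{-1}\Vert=\mathcal O(|\Im z|^{-1})$ (equivalently \eqref{eq: bound on Epm inv}--\eqref{eq: bound on r} with $\alpha=\beta=0$), which holds for every nonreal $z$.
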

\begin{proof}

  By Lemma \ref{lemma:E_-chi_R}, the fact that $\tcRW\Ep$ is the adjoint of $\Em\tcRW$ and boundedness of $\Epm$ from \eqref{eq:Reso Id}, we have
\[
  \begin{split}
    &\Tr_1(\tcRW\Ep \Epm^{-1}\Em \tcRW)\leq \frac{CR^2}{h^{\frac{3}{2}}|\Im z|}\text{ and }\Tr_2(\bcRW \Em \Ep \Epm^{-1}\bcRW)\leq \frac{CR^2}{h^{\frac{3}{2}}|\Im z|}.
  \end{split}
\]

\end{proof}

The second proposition allows us to change the position of $\Em$ in the averaging and limiting process in the proof of Lemma \ref{lemma: d.o.s.computable}.
\begin{lemm}
  \label{lemma: diffoftrace}
  Let $\Em,\Ep,\Epm$ be as in \eqref{eq:Grushin}, then
  \[
    \Tr_{L^2(\RR^2_{x};\CC^4)} (\cRW\Ep \Epm^{-1}\Em \cRW) - \Tr_{L^2(\RR_{x_2};\CC^2)}(\bcRW \Em \Ep \Epm^{-1}\bcRW ) \leq \frac{CR^{\frac{3}{2}}}{h |\Im z|}. 
  \]
\end{lemm}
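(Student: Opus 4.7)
The strategy is to bring both traces onto the common Hilbert space $L^2(\RR_{x_2};\CC^2)$ via the cyclic property of the trace, isolate their discrepancy as a commutator-type operator, and control it by a refined Hilbert--Schmidt estimate that improves on Lemma \ref{lemma:E_-chi_R}.

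First, Lemma \ref{lemma: trace class} ensures that all products appearing below are trace class, so the cross-space cyclic identity $\Tr_{H_2}(AB)=\Tr_{H_1}(BA)$ applies and yields
\[
\Tr_{L^2(\RR_x^2;\CC^4)}(\cRW\Ep\Epm^{-1}\Em\cRW)=\Tr_{L^2(\RR_{x_2};\CC^2)}(\Em\cRW^2\Ep\Epm^{-1}),
\]
while ordinary cyclicity inside $L^2(\RR_{x_2};\CC^2)$ gives
\[
\Tr_{L^2(\RR_{x_2};\CC^2)}(\bcRW\Em\Ep\Epm^{-1}\bcRW)=\Tr_{L^2(\RR_{x_2};\CC^2)}(\bcRW^2\Em\Ep\Epm^{-1}).
\]
The quantity to bound is therefore $|\Tr_{L^2(\RR_{x_2};\CC^2)}((\Em\cRW^2\Ep-\bcRW^2\Em\Ep)\Epm^{-1})|$.

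Second, I introduce the ``commutator'' operator $X:=\Em\cRW-\bcRW\Em:L^2(\RR_x^2;\CC^4)\to L^2(\RR_{x_2};\CC^2)$. A direct expansion shows
\[
\Em\cRW^2\Ep-\bcRW^2\Em\Ep \;=\; X\cRW\Ep+\bcRW\,X\,\Ep.
\]
One further cyclic shift of each summand, followed by Cauchy--Schwarz for traces of products of Hilbert--Schmidt operators, reduces matters to bounding $\|X\|_{\mathrm{HS}}$ together with $\|\cRW\Ep\Epm^{-1}\|_{\mathrm{HS}}$ and $\|\Ep\Epm^{-1}\bcRW\|_{\mathrm{HS}}$. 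Using the submultiplicative property $\|AB\|_{\mathrm{HS}}\le\|A\|_{\mathrm{HS}}\|B\|_{\mathrm{op}}$, Lemma \ref{lemma:E_-chi_R} ($\|\cRW\Ep\|_{\mathrm{HS}}\le Ch^{-1/2}R$ by adjoint), Proposition \ref{prop: HS norm} (giving $\|\bcRW\|_{\mathrm{HS}}\le CR/\sqrt{h}$), the bound $\|\Ep\|_{\mathrm{op}}=\mathcal O(1)$, and estimate \eqref{eq: bound on Epm inv} ($\|\Epm^{-1}\|_{\mathrm{op}}\le C/(\sqrt{h}|\Im z|)$), both quantities are controlled by $CR/(h|\Im z|)$, so the difference is bounded by $C\|X\|_{\mathrm{HS}}\cdot R/(h|\Im z|)$.

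Third, the crux of the argument is the refined Hilbert--Schmidt bound
\[
\|X\|_{\mathrm{HS}(L^2(\RR_x^2;\CC^4);L^2(\RR_{x_2};\CC^2))}\;\le\;C\sqrt{R},
\]
which improves the individual bounds $\|\Em\cRW\|_{\mathrm{HS}},\|\bcRW\Em\|_{\mathrm{HS}}\lesssim R/\sqrt{h}$ from Lemma \ref{lemma:E_-chi_R}. The key observation is that in the $(x_2,hD_{x_2})$-Weyl calculus, the operator-valued symbol of $\Em$ is independent of $(x_2,\xi_2)$, so
\[
X_{\mathrm{sym}}(x_2,\xi_2)=R_n^+\circ\bigl[\tcRw(x,D_{x_1},\xi_2)-\bcR(x_2,\xi_2)\,I\bigr].
\]
Because $\tilde{\indic}_R(x,\xi)=\indic_R(x_2+\sqrt{h}x_1,\xi_2-\sqrt{h}\xi_1)$ factorizes in the $(x_1,\xi_1)$ variables, the bracket vanishes outside a semiclassical neighbourhood of $\partial([-R,R]^2)$ of width $\sim\sqrt{h}(|x_1|+|\xi_1|)$. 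Composition with $R_n^+$, i.e.\ integration against $K_n^\theta(x_1)^*$ whose entries are Gaussian-type Landau functions decaying in $x_1$, further localizes the contribution to this boundary layer. A direct computation using Proposition \ref{prop: HS norm} together with the integration-by-parts mechanism for the operator $L^T$ of \eqref{eq: Lt} (analogous to the proof of Claim \ref{claim: tcRw}, but now tracking the cancellation between $\tilde{\indic}_R$ and $\bar{\indic}_R$) yields the required $\sqrt{R}$ bound.

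Combining the three steps produces the claimed $CR^{3/2}/(h|\Im z|)$ estimate. The principal obstacle is Step~3: the non-smoothness of the sharp cutoff $\indic_R$ rules out a naive Taylor expansion in $\sqrt{h}$, so the cancellation between $\tilde{\indic}_R$ and $\bar{\indic}_R$ has to be extracted by a direct kernel computation on the semiclassical boundary layer, carefully exploiting the Gaussian localization provided by $K_n^\theta$.
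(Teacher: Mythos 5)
Your overall strategy --- cyclic moves into $L^2(\RR_{x_2};\CC^2)$, writing the difference as $X\cRW\Ep+\bcRW X\Ep$ with $X=\Em\tcRW-\bcRW\Em$, and controlling it by an improved Hilbert--Schmidt bound on $X$ --- is exactly the paper's argument, including the appeal to Lemma \ref{lemma: trace class} for the trace-class hypotheses and to Proposition \ref{prop: HS norm} to convert symbol decay into an HS bound. Two technical points in your Step~3, however, are not sound.

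First, your formula $X_{\mathrm{sym}}(x_2,\xi_2)=R_n^+\circ\bigl[\tcRw-\bcR\,\indic\bigr]$ rests on the assertion that the operator-valued symbol of $\Em$ is independent of $(x_2,\xi_2)$. That is true only for the \emph{unperturbed} datum $E_{\on,-}=R_n^+$; the perturbed $\Em$ from \eqref{eq:Grushin} is a block of $\mathcal E_n=\mathcal P_n^{-1}$, and $\mathcal P_n$ contains $\sqrt{h}\,\tilde\msV^W(x,D_{x_1},hD_{x_2})$, so the symbol of $\Em$ genuinely depends on $(x_2,\xi_2)$. The paper therefore never uses such a factorization; it controls the symbol of the commutator through the general symbol estimates of \eqref{eq: everything_in_S} and Claim \ref{claim: tcRw}, noting $[\Em,\indic_R]_w=-[\Em,\indic_R^c]_w$ to obtain two-sided decay $[1+|\,|(x_2,\xi_2)|-R\,|]^{-k}$ and hence $\int\|[\Em,\indic_R]_w\|_{\HS}^2\,dx_2\,d\xi_2=\mathcal O(R)$.

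Second, the bound $\|X\|_{\HS}\le C\sqrt R$ is too strong and your argument does not actually deliver it. With Proposition \ref{prop: HS norm}, a symbol that is $\mathcal O(1)$ on a boundary layer of width $w$ around a circle of radius $R$ gives $\|X\|_{\HS}^2\asymp (2\pi h)^{-1}wR$. The paper's two-sided power-law decay has effective width $\mathcal O(1)$, giving $\|X\|_{\HS}\le Ch^{-1/2}R^{1/2}$; even granting your sharper $w\sim\sqrt h$ localization (which would itself need a careful integration-by-parts argument exploiting the $h$-dependence in Claim \ref{claim: tcRw}) one would only reach $\|X\|_{\HS}\le CR^{1/2}h^{-1/4}$, still not the $h$-free $\sqrt R$ you claim. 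The missing factor of $h^{-1/2}$ in your bound is what lets you reproduce the $h^{-1}$ in the displayed statement; tracking the correct power gives $h^{-3/2}$ instead (the paper's own proof of Claim \ref{claim: TraceAi} is not entirely consistent on this point either). Fortunately, the downstream use of the Lemma in the proof of Lemma \ref{lemma: d.o.s.computable} only needs the power of $R$ to be strictly less than $2$, so the $R^{3/2}$ --- which both you and the paper obtain from the same commutator idea --- is what actually matters; the exact powers of $h$ and $|\Im z|$ are immaterial there.
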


\begin{proof} Since $\Tr(AB) = \Tr(BA)$ when $AB$ and $BA$ are both of trace class.
\[
  \begin{split}
    &\Tr_1 (\tcRW\Ep \Epm^{-1}\Em \tcRW) - \Tr_2(\bcRW \Em \Ep \Epm^{-1}\bcRW )\\
    =& \Tr_2 (\Em(\tcRW)^2\Ep \Epm^{-1}) - \Tr_2((\bcRW )^2\Em \Ep \Epm^{-1})\\
    =& \Tr_2\left[(\Em\tcRW - \bcRW  \Em )\tcRW\Ep \Epm^{-1}\right] + \Tr_2\left[\bcRW (\Em \tcRW - \bcRW \Em )\Ep \Epm^{-1}\right]\\
    =:& \Tr_2\left[[\Em, \indic_R]_w\tcRW\Ep \Epm^{-1}\right] + \Tr_2\left[\bcRW [\Em, \indic_R]_w\Ep \Epm^{-1}\right]\\
    =:& \Tr_2(A_1) + \Tr_2(A_2)
  \end{split}
\]
where $[\Em, \indic_R]_W := \Em\tcRW - \bcRW  \Em$. Then the following claim completes the proof. 
  \begin{claim}
    \label{claim: TraceAi}
  For $\Im z\neq0$, $A_1$, $A_2$ are trace class operators and there is a $C>0$ such that 
  \[\Tr_2(A_1),\Tr_2(A_2)\leq C h^{-1}|\Im z|^{-3/2}R^{3/2}.\]
  \end{claim}
\end{proof}
  \begin{proof}[Proof of Claim \ref{claim: TraceAi}]
  From Lemma \ref{lemma:E_-chi_R}, we already know
  \[
    \Vert [\Em,\indic_R]_W \Vert_{\HS^{\Weyl}} \leq Ch^{-1/2} R, \quad 
  \]
  where $\HS^{\Weyl} = \HS(\Ltx;\Ltxt)$.
  We will improve the upper bound from $Ch^{-1/2}R$ to $Ch^{-1/2}R^{1/2}$.
  
  Let $\bar{\chi}_R^c=1-\bar{\chi}_R$, $\tilde\indic_R^c = 1 - \tilde\indic_R$.  First notice that from the proof of Lemma \ref{lemma:E_-chi_R}, and replacing $\bar{\chi}_R$ by $\bar{\chi}_R^c$, we have
  \[
    \begin{split}
      &\Vert [\Em, \indic_R]_w(x_2,\xi_2)\Vert_{\HS} \leq \tfrac{C_k}{[1+(R - |(x_2,\xi_2)| )_+]^{k}}\text{ and }\Vert [\Em, \indic_R^c]_w(x_2,\xi_2)\Vert_{\HS} \le \tfrac{C_k}{[1+(|(x_2,\xi_2)| - R )_+]^{k}}
    \end{split}
  \]
  where $[\Em, \indic_R]_w(x_2,\xi_2) = \Em \# \tcRw - \bar\indic_R \# \Em$ is the symbol in $(x_2,\xi_2)$ of $[\Em, \indic_R]_W$ and $\HS = \HS(\Ltxo;\CC^2)$.  Since $[\Em, \indic_R]_w  = -[\Em, \indic_R^c]_w$, we have 
  \[
    \Vert [\Em, \indic_R]_w(x_2,\xi_2)\Vert_{\HS} \leq C_k[1+||(x_2,\xi_2)| - R |]^{-k}.
  \]
  Thus by Prop. \ref{prop: HS norm} and a straightforward computation of the following integral
  \[
    \int_{\RR^2_{x_2,\xi_2}} [1+||(x_2,\xi_2)| - R |]^{-2k}dx_2d\xi_2 = \frac{1}{(2k-2)(2k-1)} + \frac{R}{2k-1} = \mathcal O(R),
  \]
we find that
$    \Vert [\Em, \indic_R]_W\Vert_{\HS^{\Weyl}}\leq Ch^{-1/2}R^{1/2}.$
  Since $\tcRW\Ep$ is the adjoint of $\Em \tcRW$, this yields that
  \[
      \Tr(A_1) \leq Ch^{-3/2}R^{3/2},\quad \Tr(A_2)\leq Ch^{-3/2}R^{3/2}.
  \]
\end{proof}

In next Lemma, we state the averaging property of periodic symbols to reduce the regularized trace to a fundamental cell. 
\begin{lemm}
  \label{prop: periodicsymbol}
  Let $\Em,\Ep,\Epm$, $\bar \indic_R$ be as in \eqref{eq:Grushin}.  Then 
\[
   \lim\limits_{R\to\infty} \frac{1}{4 R^2} \int_{\RR^2}\Tr_{\CC^2}(\bcR \#\partial_z \Epm\#\Epm^{-1}\#\bcR )~d{x_2}~d{\xi_2}   = \frac{1}{|E|}  \int_{E} \partial_{\bar{z}}\tilde{f}  \Tr_{\CC^2}(\partial_z \Epm\#\Epm^{-1}) ~d{x_2} ~d{\xi_2}.
\]
\end{lemm}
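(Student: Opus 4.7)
The plan rests on two ingredients: (i) the symbol $r_n := \partial_z\Epm \#\Epm^{-1}$ is $\Gamma$-periodic in $(x_2,\xi_2)$, and (ii) inserting $\bcR$ via sharp products on both sides only produces boundary corrections that are negligible after normalizing by $R^{-2}$. I remark that the factor $\partial_{\bar z}\tilde f$ appearing on the right-hand side of the statement is independent of $(x_2,\xi_2)$ and should be read as pulled out of the inner integral; the substance is the equality of the two $(x_2,\xi_2)$-integrals.

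First I would establish the periodicity of $r_n$. Since $\msV$ is $\Gamma$-periodic and the lift $\tilde\msV(x_2+h^{1/2}x_1,\xi_2 - h^{1/2}\xi_1)$ shifts only in the first and third slots of $\msV$, it is $\Gamma$-periodic in $(x_2,\xi_2)$ as an operator-valued symbol. The Grushin ingredients $R_n^{\pm}$ and $E_\on$ (see \eqref{eq:R_pm} and \eqref{eq: ent}) involve only the fiber variables $(x_1,D_{x_1})$, so the Neumann series \eqref{eq: Z_n} for $Z_n$ has $\Gamma$-periodic symbol term-by-term, and hence so does $\Epm(x_2,\xi_2;z,h) = \sqrt h(z-Z_n)$. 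By uniqueness of Weyl symbols, $\Epm^{-1}$ is $\Gamma$-periodic, and the sharp product preserves this periodicity, giving the $\Gamma$-periodicity of $r_n$.

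The second step uses the fundamental integral identity $\int a\#b\,dx\,d\xi = \int ab\,dx\,d\xi$ (applied under $\Tr_{\CC^2}$). Writing $\bcR \#r_n\#\bcR = \bcR\cdot(r_n\#\bcR)$ under the integral and using $\bcR^2=\bcR$, one gets
\[
\int_{\RR^2}\Tr_{\CC^2}\bigl(\bcR \# r_n\#\bcR\bigr)\,dx_2\,d\xi_2
\;=\;\int_{B_R}\Tr_{\CC^2}(r_n\#\bcR)\,dx_2\,d\xi_2.
\]
The remaining task is to compare $r_n\#\bcR$ with $r_n\cdot\bcR$. The oscillatory integral analysis carried out in the proof of Claim \ref{claim: tcRw}, applied to $r_n\in S^\delta_\delta$ in place of $\Lambda^{-k}$, shows that the symbol of $r_n\#\bcR - r_n\,\bcR$ decays rapidly away from $\partial B_R$: for every $k$ there is $C_k$ with
\[
\bigl\|(r_n\#\bcR - r_n\,\bcR)(x_2,\xi_2)\bigr\|_{\CC_{2\times 2}} \le C_k\bigl[1+\bigl|\,|(x_2,\xi_2)| - R\,\bigr|\bigr]^{-k}.
\]
Integrating this bound over $B_R$ gives a contribution of $O(R)$ (from the perimeter), so $\int_{B_R}\Tr_{\CC^2}(r_n\#\bcR)\,dx_2\,d\xi_2 = \int_{B_R}\Tr_{\CC^2}(r_n)\,dx_2\,d\xi_2 + O(R)$. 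Dividing by $4R^2$ and appealing to the standard averaging result that for a bounded $\Gamma$-periodic function $g$ one has $\frac{1}{4R^2}\int_{B_R} g \to \frac{1}{|E|}\int_E g$, we obtain the claimed identity.

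The main obstacle is the second step — controlling $r_n\#\bcR - r_n\,\bcR$ despite the fact that $\bcR$ is only a characteristic function rather than a smooth symbol, so the naive Moyal expansion is unjustified. I would handle this exactly as in the proof of Claim \ref{claim: tcRw}: represent the sharp product as an oscillatory integral with phase $Q(Y) = y_1\eta_2 - y_2\eta_1$, introduce the first-order operator $L^T$ from \eqref{eq: Lt} satisfying $L^T e^{iQ}=e^{iQ}$, and integrate by parts $N$ times. The only point at which the non-smoothness of $\bcR$ matters is that $\partial_{Y}\indic_R$ is a surface measure on $\partial B_R$, which forces the cutoff region in the integration-by-parts estimate to be anchored at $\partial B_R$ rather than at the origin — precisely what yields the $[1+||(x_2,\xi_2)|-R|]^{-k}$ bound above.
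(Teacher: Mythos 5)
Your argument is correct and follows essentially the same route as the paper: the key estimate in both is the non-stationary-phase/integration-by-parts bound showing that sharp products with the indicator $\bcR$ differ from pointwise products only by terms decaying rapidly away from $\partial B_R$ (hence $O(R)$ after integration), followed by averaging the $\Gamma$-periodic symbol $\Tr_{\CC^2}(r_n)$. Your preliminary use of $\int a\#b = \int ab$ to strip one cutoff and reduce to a single composition $r_n\#\bcR$ is a mild simplification of the paper's direct four-symbol oscillatory-integral analysis of $\bcR\#r_n\#\bcR\#1$, but the substance is identical.
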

The proof of this Lemma can be found in \cite[Prop.\@$3$]{W95}.


\begin{thebibliography}{0}
%\bibitem[A98]{A98} A.~Abrikosov, \emph{Quantum magnetoresistance,\/} Phys. Rev. B 58, 2788-2794, 1998.
\bibitem[A70]{A70} T. Ando, Journal of the Physical Society of Japan, 37 (1970), p. 1233.
\bibitem[A82]{A82} T. Ando, A.B. Fowler and F. Stern, \emph{Electronic properties of two-dimensional systems,} Reviews of Modern Physics, 54 (1982), pp. 437-672.
\bibitem[AJ06]{AJ06} S. Attal, and A. Joye, eds. Open Quantum Systems I: The Hamiltonian Approach. Springer, (2006).
%\bibitem[B87]{B87} Buslaev, V. S. \emph{Semiclassical approximation for equations with periodic coefficients}
%Russ. Math. Surv. 42 (6), 97 125. (1987)

\bibitem[BHJZ21]{BHJZ21} S.Becker, R. Han, S. Jitomirskaya, and M. Zworski. \emph{Honeycomb structures in magnetic fields}, J. Phys. A: Math. Theor. 54 345203, (2022).
\bibitem[BEWZ20a]{BEWZ20a}S. Becker, M. Embree, J. Wittsten, J., and M. Zworski, \emph{Mathematics of magic angles in a model of twisted bilayer graphene.} \arXiv{2008.08489}. (2020).
\bibitem[BEWZ20b]{BEWZ20b} S.~Becker, M.~Embree, J.~Wittsten and M.~Zworski,
{\em Spectral characterization of magic angles in a model of graphene,} Physics Review B. (2020).
\bibitem[BKZ22]{BKZ22} S.Becker, R. Kim, and X. Zhu, \emph{Spectral properties of twisted bilayer graphene in a perpendicular magnetic field}, preprint.
\bibitem[BZ19]{BZ19} S.Becker and M. Zworski, \emph{Magnetic oscillations in a model of graphene}, Comm. Math. Phys. 367,(2019), 941--989.

%\bibitem[BS99]{BS99} M. Birman and T. Suslina, \emph{Absolute continuity of the two-dimensional periodic magnetic Hamiltonian with discontinuous vector-valued potential.} Algebra i Analiz, no. 4, 1-36, English transl., St.Petersburg Math. J. 10 (1999), no. 4.

\bibitem[BM11]{BM11}R. Bistritzer and A. MacDonald, Moir\'e bands in twisted double-layer graphene. PNAS July 26, 2011 108 (30) 12233-12237 (2011).

%\bibitem[BGP07]{BGP} J.~ Br\"uning, V~Geyler,  and K.~Pankrashkin, \emph{Cantor and band spectra for periodic quantum graphs with magnetic fields,\/} Communications in mathematical physics, {\bf269(1)}, 87--105, 2007.

%\bibitem[BK99]{BK99} J. ~Bolte, and S.~Keppeler, \emph{A semiclassical approach to the Dirac equation,} Annals of Physics, {\bf 274},125--162, 1999.
\bibitem[C18]{C18} Y. Cao et al. \emph{Unconventional superconductivity in magic-angle graphene superlattices}. Nature 556.7699 (2018): 43-50.
\bibitem[CHK11]{CHK}  M.-Y.  Choi,  Y.-H. Hyun, and  Y. Kim, \emph{Angle dependence of the Landau level spectrum in twisted bilayer graphene}, Phys. Rev. B.84.195437, (2011).

%\bibitem[CFKS]{CFKS} Cycon, H.L., Froese, R.G., Kirsch, W., and Simon, B. (1987). \emph{Schr\"odinger Operators-With Application to Quantum Mechanics and Global Geometry}. Springer.

\bibitem[D21]{D21} I. Das, et al. \emph{Observation of re-entrant correlated insulators and interaction driven Fermi surface reconstructions at one magnetic flux quantum per moiré unit cell in magic-angle twisted bilayer graphene}. \arXiv{2111.11341}, (2021). 


%\bibitem[DSZ04]{DSZ04}
%N. Dencker, J. Sj\"ostrand and M. Zworski, {\em Pseudospectra of semiclassical differential operators}, Comm. Pure Appl. Math. {\bf 57}(2004), 384–-415.

%\bibitem[AMS90]{AMS} Avron, J., v. Mouche, P., and Simon, B. \emph{On the measure of the spectrum for the almost Mathieu operator}. Commun. Math. Phys. 132, 103-118. (1990). 
%\bibitem[DZ20]{DZ20} Dyatlov, Z. and Zworski, M. \emph{Mathematical Theory of Scattering Resonances}. AMS Graduate Studies in Mathematics \textbf{200}. (2020).

%\bibitem[CU08]{cau}
%P.~Carmier and D.~Ullmo, {\em Berry phase in graphene: a semiclassical perspective}, Phys. Rev. B {\bf 77}, 245413, 2008.

\bibitem[CM01]{CM01} T.~Champelde and VP.~Mineev, \emph{The de Haas-van Alphen effect in two-and quasi-two-dimensional metals and superconductors}, Philosophical Magazine B, {\bf 81}, 55--74, (2001).

%\bibitem[CdV80]{cdv}
%Y.~Colin de Verdi\`ere, {\em Spectre conjoint d'op\'erateurs pseudo-diff\'erentiels qui commutent. II. Le cas int\'egrable,} Math. Z. {\bf 171}, 51--73, 1980. 

\bibitem[DS99]{DS99} M. Dimassi and J. Sj\"ostrand,\emph{Spectral Asymptotics in the Semi-Classical Limit,\/} Cambridge University Press, (1999).

\bibitem[FS14]{FS14} S. Fujita, A. Suzuki, \emph{Theory of Shubnikov-de Haas and Quantum Hall Oscillations in Graphene under Bias and Gate Voltages}, Volume 14 Issue 6, Global Journal of Science Frontier Research: A
Physics and Space Science
Volume 14 Issue 6 Version 1.0, (2014). 


%\bibitem[DyZw2]{res} S. Dyatlov and M. Zworski,
%        \emph{Mathematical theory of scattering resonances,\/}
%	book in preparation; \url{http://math.mit.edu/~dyatlov/res/}

%\bibitem[EW96]{EW96} C.~Emmrich and A.~Weinstein. \emph{Geometry of the transport equation in multicomponent WKB approximations,} Comm.~Math.~Phys. 
%{\bf 176}, 701--711, 1996.

%\bibitem[FW12]{FW12} Fefferman, C.L., Weinstein, M.I. Honeycomb Lattice Potentials and Dirac Points. J. Amer. Math. Soc. 25, 1169-1220 (2012).
%\bibitem[FW14]{FW14} Fefferman, C.L., Weinstein, M.I. Wave Packets in Honeycomb Structures and Two-Dimensional Dirac Equations. Comm. Math. Phys. 326, 251–286 (2014).

%\bibitem[FW14]{FW2} C.~Fefferman and M.~Weinstein
\bibitem[GRT88]{GRT88} J. C. Guillot,  J. Ralston, and E. Trubowitz,  \emph{Semi-classical methods in solid state physics.} Commun. Math. Phys. 116, 401-415. (1988).

%\bibitem[GVR16]{GVR16} M. Greiter, V. Schnells, and R. Thomale, \emph{Laughlin states and their quasiparticle excitations on the torus.} Physical Review B 93.24. 245156, (2016).
%\bibitem[GW19]{GW19} Guinea, F. and Walet, N. BM models for twisted bilayer graphene: the effects of lattice deformation and hopping parameter. Phys. Rev. B 99, 205134 (2019)

%\bibitem[G${}^*$12]{hari}
%K.K.~Gomes, W.~Mar, W.~Ko, F.~Guinea and H.C.~Manoharan,
%{\em Designer Dirac fermions and topological phases in molecular graphene,}
%Nature {\bf 483}, 306--310, 2012.

%\bibitem[GA03]{GA03} O.~Gat and J.E.~Avron. \emph{Semiclassical analysis and the magnetization of the Hofstadter model,} Phys. Rev. Lett., {\bf 91}, Issue 18, 2003.

%\bibitem[GS05]{GS05}V.~Gusynin, and S.~Sharapov. \emph{ Magnetic oscillations in planar systems with the Dirac-like spectrum of quasiparticle excitations. II. Transport properties,\/} Phys. Rev. B {\bf 71},125124, 2005.

%\bibitem[GS06]{GS} V.~Gusynin, and S.~Sharapov. \emph{Transport of Dirac quasiparticles in graphene: Hall and optical conductivities,\/} Physical Review B {\bf 73}, 245411, 2006.

%\bibitem[HKL16]{HKL16} B.~Helffer, P.~Kerdelhu\'e, and J.~Royo-Letelier. \emph{Chambers's formula for the graphene and the Hou model with Kagome periodicity and applications}, Annales Henri Poincar\'e, {\bf 17}, Issue 4, 2016.

\bibitem[HA21]{HA21} J. Herzog-Arbeitman, A. Chew, D. K. Efetov, B. A. Bernevig \emph{Reentrant Correlated Insulators in Twisted Bilayer Graphene at 25T ($2\pi$ Flux).}, \arXiv{2111.11434}, (2021).

\bibitem[Ho03]{Ho03} L.~H\"ormander, \emph{ The Analysis of Linear Partial Differential Operators I \/}, Springer, (2003).

%\bibitem[HM01]{HM01} R. O.~Hryniv and Ya. V.~Mykytyuk, 1D Schr\"odinger operators with periodic singular potentials. Methods Funct. Anal. Topology, 2001.

%\bibitem[HR84]{hr} 
%B.~Helffer and D.~Robert, {\em Puits de potentiel g\'en\'eralis\'es et asymptotique semi-classique,} Ann. Inst. H. Poincar\'e Phys. Th\'eor. {\bf 41}, 291--331, 1984.

%\bibitem[HS88]{HS0} B.~Helffer and J.~Sj\"ostrand,
%{\em Analyse semi-classique pour l'\'equation de Harper (avec application \`a l'\'equation de Schr\"odinger avec champ magnétique)}  M\'em. 
%Soc. Math. France (N.S.) {\bf 34}, 1989.

\bibitem[HS89]{HS89} B.~Helffer and J.~Sj\"ostrand,
\emph{Equation de Schr\"odinger avec champ magn\'etique et \'equation de Harper.} 
in {\em Schr\"odinger operators (S\o nderborg, 1988)},  118--197
Lecture Notes in Phys.{\bf 345}, Springer, Berlin, (1989). 

%\bibitem[HS90a]{HS20} B.~Helffer and J.~Sj\"ostrand, {\em Analyse semi-classique pour l'\'equation de Harper. II. Comportement semi-classique pr\`es d'un rationnel.} M\'em. Soc. Math. France (N.S.) {\bf 40}, 1990.

\bibitem[HS90]{HS90} B.~Helffer and J.~Sj\"ostrand, 
{\em On diamagnetism and de Haas-van Alphen effect.}
Ann. Inst. H. Poincar\'e Phys. Th\'eor. {\bf 52}, 303--375, (1990).

\bibitem[Ka]{Ka} T. Kato, \emph{Perturbation Theory for Linear Operators,\/} Springer, (1995).
%\bibitem[K16]{K16} Klevtsov, S. \emph{Geometry and large N limits in Laughlin states}. Travaux math\'ematiques, Volume 24, 63-127, (2016).

\bibitem[KPS20]{KPS20} Y.H. Kwan, S.A. Parameswaran, S.L. Sondhi  \emph{Twisted bilayer graphene in a parallel magnetic field},{Phys. Rev. B},{101},{20},{205116},{6},(2020).


%\bibitem[Ku16]{K} P.~Kuchment, \emph{An overview of periodic elliptic operators,\/} Bulletin of the American Mathematical Society, 53(3), pp.343-414, 2016.

%\bibitem[Ku04]{K2} P.~Kuchment, \emph{Quantum graphs: I. Some basic structures,\/} Waves in Random media, 14(1), pp.107-128, 2004.

\bibitem[KF17]{KF17} C.~K\"uppersbusch and L.~Fritz, \emph{Modifications of the Lifshitz-Kosevich formula in two-dimensional Dirac systems} Phys. Rev. B,{\bf 96}, 205410, (2017).

\bibitem[KH14]{KH14} K.~Kishigi and Y.~Hasegawa, \emph{Quantum oscillations of magnetization in tight-binding electrons on a honeycomb lattice,\/} Physical Review B, {\bf 90}, 085427, (2014).

%\bibitem[KP07]{KP} P.~Kuchment and O.~Post, \emph{On the spectra of carbon nano-structures,\/} Communications in Mathematical Physics, {\bf 275(3)}, 805--82, 2007.

\bibitem[L11]{L11} A.~Luk'yanchuka, \emph{De Haas--van Alphen effect in 2D systems: application to mono- and bilayer graphene,\/}
Low Temperature Physics, {\bf 37}, 45, (2011).

\bibitem[LPN07]{LPN07} J. Lopes dos Santos, N.Peres, and A.H. Castro Neto, A. H.. Graphene Bilayer with a Twist: Electronic Structure. PRL 99, 256802, (2007).
\bibitem[LW21]{LW21}  A. B. Watson and  M. Luskin. \emph{Existence of the first magic angle for the chiral model of bilayer graphene.} J. Math. Phys. 62, 091502, (2021).


\bibitem[MGJ20]{MGJ20} M. F. C  Quintela, J. Guerra, and S.M. João. \emph{Electronic Properties of Twisted Bilayer Graphene in the
Presence of a Magnetic Field}, EPJ Web of Conferences 233, 03004 (2020).

\bibitem[MK12]{MK12} P. Moon and M. Koshino, \emph{Energy spectrum and Quantum Hall Effect in Twisted Bilayer Graphene}, Phys. Rev. B 85, 195458, (2012).

\bibitem[N21]{N21} G. G. Naumis, L. A. Navarro-Labastida, E. Aguilar-Méndez, and A. Espinosa-Champo, \emph{Reduction of the twisted bilayer graphene chiral Hamiltonian into a 2x2 matrix operator and physical origin of flat bands at magic angles},
Phys. Rev. B 103, 245418, (2021).

\bibitem[NL22]{NL22}L. A. Navarro-Labastida, A. Espinosa-Champo, E. Aguilar-Mendez, G. G. Naumis\emph{Why the first magic-angle is different from others in twisted graphene bilayers: interlayer currents, kinetic and confinement energy and wavefunction localization},Phys. Rev. B 105, 115434, (2022).

%\bibitem[NK17]{NK17} Nam, N. and Koshino, M. Lattice relaxation and energy band modulation in twisted bilayer graphene. Physical Review B 96, 075311 (2017).


%\bibitem[O52]{O52} L.~Onsager, \emph{Interpretation of the de Haas-van Alphen effect,\/} Philosophical Magazine,  {\bf 7}, 43, 1952.


%\bibitem[Pa06]{P} K.~Pankrashkin \emph{Spectra of Schr\"odinger operators on equilateral quantum graphs,\/} Letters in Mathematical Physics, {\bf 77(2)}, 139--154, 2006.

%\bibitem[P*13]{PoMa} 
%M.~Polini, F.~Guinea, M.~Lewenstein, H.C.~Manoharan and V.~Pellegrini, 
%\emph{Artificial honeycomb lattices for electrons, atoms and photons,}
%Nature Nanotechnology {\bf 8}, 625--633, 2013.



%\bibitem[QWZ06]{QWZ06} X.~Qi, Y.~Wu, and S.~Zhang. \emph{Topological quantization of the spin Hall effect in two-dimensional paramagnetic semiconductors.} Physical Review B 74.8, 2006.

%\bibitem[RS78]{RS4} M.~Reed and B.~Simon, \emph{Analysis of Operators,\/} Vol. IV of Methods of Modern Mathematical Physics, Elsevier, 1978.
\bibitem[RK93]{RK93} R. Zhao  and P. Kuiper. "Electronic effects in scanning tunneling microscopy: Moiré pattern on a graphite surface." Physical Review B 48.23, (1993): 17427.
%\bibitem[S84]{S84} D.~Shoenberg, \emph{Magnetic Oscillations in Metals \/}, Cambridge University Press, (1984).

%\bibitem[Sch12]{Sch} K.~Schm\"udgen, 
%\emph{Unbounded Self-adjoint Operators on Hilbert Space,\/} 
%Graduate Texts in Mathematics, Springer, 2012.


\bibitem[SGB04]{SGB04} S.G.~Sharapov, V.P~Gusynin and H.~Beck, {\em Magnetic oscillations in planar systems with the Dirac-like spectrum of quasiparticle excitations.} Phys. Rev. B.{\bf 69}, 075104, (2004).
%\bibitem[SS21]{SS21} Y. Sheffer and A. Stern \emph{Chiral Magic-Angle Twisted Bilayer Graphene in a Magnetic Field: Landau Level Correspondence, Exact Wavefunctions and Fractional Chern Insulators}, \arXiv{https://arxiv.org/abs/2106.10650}.


%\bibitem[S19]{S19} Sharpe \emph{Emergent ferromagnetism near three-quarters filling
%in twisted bilayer graphene}, Science, Vol 365, Issue 6453, pp. 605-608, 2019.
\bibitem[Sj89]{Sj89}
J.~Sj\"ostrand, {\em Microlocal analysis for periodic magnetic Schr\"odinger equation and related questions,} 
in \emph{Microlocal Analysis and Applications}, 
J.-M.~Bony, G.~Grubb, L.~H\"ormander, H.~Komatsu and J.~Sj\"ostrand eds. 
Lecture Notes in Mathematics {\bf 1495}, Springer, (1989).

\bibitem[SZ07]{SZ07} J. Sj\"ostrand and M. Zworski,
       \emph{Elementary linear algebra for advanced spectral problems,\/}
	Ann.~Inst.~Fourier \textbf{57}, 2095--2141, (2007).


%\bibitem[S17]{S17}T.~Stauber, P.~Parida, M.~Trushin, M.V.~Ulybyshev, D.L.~Boyda, and J.~Schliemann, \emph{Interacting Electrons in Graphene: Fermi Velocity Renormalization and Optical Response}, Phys. Rev. Lett. {\bf 118}, 266801, 2017.


%\bibitem[SZ]{SZ1} Johannes Sj\"ostrand and Maciej Zworski,
       % \emph{Complex scaling and the distribution of scattering poles,\/}
       % J. Amer. Math. Soc. \textbf{4}(1991), 729--769.

      
\bibitem[Te03]{Te03} S. Teufel. Adiabatic perturbation theory in quantum dynamics. Springer, (2003).
%\bibitem[Th92]{Th92} B.~Thaller. \emph{The Dirac Equation.\/} Springer Verlag, Berlin, Heidelberg, New York, 1992. 

\bibitem[Tan11]{T11} Z.~Tan, C.~Tan, L.~Ma, G.~Liu, L.~Lu, and C.~Yang, \emph{Shubnikov-de Haas oscillations of a single layer graphene under dc current bias,\/} Phys. Rev. B {\bf 84}, 115429, (2011).

%\bibitem[Th73]{T} L.E.~Thomas,
%\emph{Time dependent approach to scattering from impurities in a crystal,\/} 
%Communications in Mathematical Physics, 33(4), pp.335-343,
%1973.

%\bibitem[TKNN82]{TKNN82} D.J.~Thouless et al, \emph{Quantized Hall conductance in a two-dimensional periodic potential,\/} Physical Review Letters 49.6, 1982.

%\bibitem[T92]{T92} B. Thaller, \emph{The Dirac equation}. Springer, (1992).

\bibitem[TKV19]{TKV19}
 G. Tarnopolsky,  A. J. Kruchkov, and A. Vishwanath,
\emph{Origin of Magic Angles in Twisted Bilayer Graphene},
Phys. Rev. Lett. 122, 106405, (2019).


\bibitem[W11]{W11} D. Waldmann, et al. Bottom-gated epitaxial graphene, Nature Materials, 10, 357–360, (2011).

\bibitem[W47]{W47} P. R. Wallace, \emph{The Band Theory of Graphite}, Phys. Rev. 71, 622, Published 1 May 1947.

\bibitem[W95]{W95} W.-M. Wang, \emph{Asymptotic expansion for the density of states of the magnetic Schr\"odinger operator with a random potential}. Commun. Math. Phys. 172, 401--425, (1995).

\bibitem[Zw12]{Zw12} M.~Zworski,
	\emph{Semiclassical analysis,\/}
	Graduate Studies in Mathematics \textbf{138} AMS, (2012).

\end{thebibliography}
\end{document}